\newtheorem{theorem}{Theorem}[section]
\newtheorem{definition}[theorem]{Definition}
\newtheorem{lemma}[theorem]{Lemma}
\newtheorem{corollary}[theorem]{Corollary}
\newcommand{\msf}[1]{\mathsf{#1}}
\newcommand{\bbR}{\mathbb{R}}
\newcommand{\bbC}{\mathbb{C}}
\newcommand{\bbE}{\mathbb{E}}
\newcommand{\bbN}{\mathbb{N}}
\newcommand{\cH}{\mathcal{H}}
\newcommand{\cE}{\mathcal{E}}
\newcommand{\cN}{\mathcal{N}}
\newcommand{\cB}{\mathcal{B}}
\newcommand{\cS}{\mathcal{S}}
\DeclareMathOperator{\tr}{tr}
\DeclareMathOperator{\re}{Re}
\DeclareMathOperator{\im}{Im}
\DeclareMathOperator{\diag}{diag}
\DeclareMathOperator{\desc}{\Delta}
\DeclareMathOperator{\Desc}{\mathsf{Desc}}
\DeclareMathOperator{\cov}{\Gamma}
\DeclareMathOperator{\var}{Var}
\newcommand{\proj}[1]{|#1\rangle\langle#1|}
\newcommand{\Aoverlaptriple}{\mathsf{overlaptriple}}
\newcommand{\Aoverlap}{\mathsf{overlap}}
\newcommand{\Asqueezing}{\mathsf{squeezing}}
\newcommand{\Aunitary}{\mathsf{applyunitary}}
\newcommand{\Ameasure}{\mathsf{postmeasure}}
\newcommand{\Ameasureprob}{\mathsf{measureprob}}
\newcommand{\Aprob}{\mathsf{prob}}
\newcommand{\Ameasureprobexact}{\mathsf{measureprobexact}}
\newcommand{\Ameasureprobapproximate}{\mathsf{measureprobapproximate}}
\newcommand{\Aexactnorm}{\mathsf{exactnorm}}
\newcommand{\Afastnorm}{\mathsf{fastnorm}}
\renewcommand*{\Re}{\mathsf{Re}}
\newcommand*{\ket}[1]{|#1\rangle}
\newcommand*{\bra}[1]{\langle #1|}
\newcommand*{\Sgauss}{\mathcal{S}^{\mathrm{Gauss}_n}}
\newcommand*{\cT}{\mathcal{T}}
\newcommand*{\Var}{\mathsf{Var}}
\begin{document}

\title{Classical simulation of non-Gaussian bosonic circuits}
\author{Beatriz Dias}
\author{Robert K{\"o}nig}
\affil[1]{Department of Mathematics, School of Computation, Information and Technology, Technical University of Munich, Garching}
\affil[2]{Munich Center for Quantum Science and Technology, Munich, Germany}
\date{\today}

\maketitle

\begin{abstract}
    We propose efficient classical algorithms which (strongly) simulate the action of bosonic linear optics circuits applied to superpositions of Gaussian states. Our approach relies on an augmented covariance matrix formalism to keep track of relative phases between individual  terms in a linear combination. This yields an exact simulation algorithm whose runtime is polynomial in the number of modes and the size of the circuit, and quadratic in the number of terms in the superposition. We also present a faster approximate randomized algorithm whose runtime is linear in this number.  Our main building blocks are a formula for the triple overlap of three Gaussian states and a fast algorithm for estimating the norm of a superposition of Gaussian states up to a multiplicative error.  Our construction borrows from earlier work on simulating quantum circuits in finite-dimensional settings, including, in particular, fermionic linear optics with non-Gaussian initial states and Clifford computations with non-stabilizer initial states. It provides algorithmic access to a practically relevant family of non-Gaussian bosonic circuits.
\end{abstract}

\tableofcontents

\section{Introduction}

If one is to believe in Feynman's vision~\cite{feynmanSimulatingPhysicsComputers1982}, understanding quantum dynamics may one day be a prime field of application of a working universal quantum computer. Without access to such a powerful computational tool, however, one has to resort to efficient, i.e., polynomial time, classical methods  that can be used to
study the behavior of certain quantum many-body systems evolving  e.g., under unitaries and measurement. 
A difficulty here is to identify  classes of quantum computations which allow for efficient classical simulation algorithms. Well-known examples include matchgate circuits~\cite{doi:10.1137/S0097539700377025,Jozsa_2008}, Gaussian dynamics of either fermions~\cite{PhysRevA.65.032325,knill2001fermionic} or bosons~\cite{WANG_2007}, and Clifford circuits (stabilizer computations)~\cite{PhysRevA.70.052328}. For each of these classes of quantum dynamics, polynomial-time simulation methods have been constructed.

For bosonic systems consisting of say~$n$ modes, low-order  moments of the  mode operators of a state (that is, expectation values of constant-degree polynomials of the quadratures) can easily be kept track of under Gaussian dynamics (in particular, Gaussian unitaries and measurement): This requires only a polynomial-time computation, and implies that (bosonic) linear optics, i.e., computations with such operations starting from an initial Gaussian state can be simulated efficiently. This is the case for both strong simulation (where the goal is to compute the output probability for a given measurement outcome) as well as weak simulation (where the goal is to sample from the output distribution defined by measurement), assuming the classical algorithm has access to independent unit-variance centered Gaussian random variables.

While the framework of quantum linear optics is amenable to efficient classical simulation algorithms, it is not sufficient to describe a number of information-processing applications of interest. Consider for example quantum error correction. To quote from the abstract of a pioneering paper~\cite{NisetFiurasekCerf09}, {\em 
``[...] Gaussian operations are of no use for protecting Gaussian states against Gaussian errors [...]''} (see~\cite{GKPToric} for another no-go result). 
It is thus necessary to  augment quantum optics by non-Gaussian resource states.
For instance, displacement errors can be protected against by encoding a qubit into an (approximate) Gottesman-Kitaev-Preskill (GKP) code~\cite{gkp}. 
Other proposals for protecting against displacement errors include the family of oscillator-to-oscillator codes~\cite{nohoscillatortooscillator}, where encoding proceeds by applying a Gaussian unitary to a bosonic mode, with a  collection of auxiliary systems prepared in GKP-states. To protect against photon loss (another ubiquitous example of Gaussian errors), coherent 
cat codes~\cite{PhysRevA.94.042332, PhysRevLett.119.030502,PhysRevLett.111.120501,allopticalcatcode} have been developed.

What do GKP-, oscillator-to-oscillator- and coherent cat-codes have in common? Most importantly, they are experimentally feasible, and corresponding schemes have been realized in the lab. What makes this possible is the simple structure of the associated circuits: While the initial states are non-Gaussian, subsequent operations (e.g., for applying logical gates) typically only involve Gaussian unitaries.  Second, analyzing their properties presents a difficult challenge for theory. In particular, unlike for Gaussian initial states, it is unclear in general how to construct classical simulation algorithms which work for non-Gaussian initial states.  It is clear that the hardness, i.e., complexity of classical simulation is a function of these states. 
Our goal is to quantify this, i.e., we ask: What is the computational effort required to (strongly) simulate a Gaussian quantum circuit on a non-Gaussian input state? 

\subsection{Prior work \label{sec:priorwork}}
In an effort to go beyond this framework of linear optics, several alternative methods for simulating bosonic dynamics have been developed. For example, it has been shown that if 
a circuit of Gaussian unitaries is applied to an initial state with a positive Wigner function, then it is possible to efficiently sample from the output distribution of a subsequent measurement~\cite{Veitch_2013,PhysRevLett.109.230503}. Since Gaussian states have a positive Wigner function, this constitutes a strict generalization of the covariance matrix formalism to a larger class of states. One way of putting this result is to say that Wigner function negativity is a precondition for a quantum advantage~\cite{Veitch_2012}.
Along similar lines, Ref.~\cite{PRXQuantum.2.040315} considered non-Gaussian states whose Wigner function can be written as a linear combination of Gaussian Wigner functions. Several natural non-Gaussian states were shown to be describable in this way. When using complex Gaussians, this framework can also describe superpositions of pure Gaussian states. Update rules for such descriptions under Gaussian operations were established.

In more recent work~\cite{PhysRevResearch.3.033018,PhysRevLett.130.090602}, a different set of  non-Gaussian initial states for subsequent Gaussian operations have been shown to also lead to a class of computations that can be simulated efficiently in the strong sense. The authors of Refs.~\cite{PhysRevResearch.3.033018,PhysRevLett.130.090602} give algorithms which work for initial pure states whose coefficients in the Fock basis have bounded support. Their simulation algorithms have polynomial runtime in the size of this support. Additionally, the algorithms proposed in Refs.~\cite{PhysRevResearch.3.033018} and~\cite{PhysRevLett.130.090602} have exponential runtime respectively in the mean photon number of the initial state and in a non-Gaussianity measure called the stellar rank. 

Another instance of an efficient classically simulable computation put forward in Refs.~\cite{Calcluth2022efficientsimulation,PhysRevA.107.062414} includes infinitely squeezed GKP-states together with a certain set of Gaussian unitary operations and homodyne measurements. In subsequent work \cite{PhysRevA.107.062414,calcluth2024sufficient}, the authors establish that certain resourceful/magic states (such as the vacuum state) can promote such a computation to universality.

For GKP- and cat-codes (and more generally when dealing with infinite-dimensional Hilbert spaces), perhaps the most widely used approach to numerical simulation (see e.g.,~\cite{performancestructurebosonic}) is to project onto a finite-dimensional Hilbert space (and subsequently do, e.g., state vector simulation). Concretely, e.g., in the work~\cite{performancestructurebosonic}, this 
is done by replacing GKP- and cat-states
by finite Fock state superpositions~$\sum_{n=0}^{N_{\max }} c_n|n\rangle$ (with~$N_{\max }$ as large as manageable). While this is sufficient for the analysis of single-mode bosonic codes, it is desirable to find better approaches that may be more scalable. 
In this context, the consideration of number (Fock) states also does not appear to be ideal as these have non-trivial dynamics under Gaussian operations.
An alternative approach proposed in Ref.~\cite{Marshall:23} considers Gaussian unitary evolution and Fock basis measurements applied to non-Gaussian states decomposed into a superposition of coherent states, defining a notion called coherent state rank. These algorithms are shown to be more efficient than algorithms considering Fock state superpositions.

\subsection{Towards
new simulation algorithms}

Gaussian states and operations define a well-known class of many-body quantum dynamics that is amenable both to analytical considerations, as well as efficient classical simulation.  Motivated by this, we ask: can the set of initial states to a Gaussian quantum circuit be extended in a different direction beyond Gaussian states, namely superpositions of Gaussian states, while maintaining efficient (strong) simulability? 

In more detail, consider the following type of~$n$-mode bosonic quantum computation, which augments linear optics by the use of an initial state~$\Psi$ form a certain set of states~$\mathcal{S}\subset L^2(\mathbb{R})^{\otimes n}$.
Here and throughout we use the notation~$[\ell] = \{1,\ldots, \ell\}$ for~$\ell\in\bbN$.
\begin{enumerate}
\item
Let~$\Psi\in\mathcal{S}$ be an arbitrary initial state.
\item\label{it:stepunitarygaussian}
For each~$t\in \{1,\ldots,s\}$, apply a unitary~$U^{(t)}$ from one of the following Gaussian unitaries
\begin{enumerate}[(i)]
\item
an~$n$-mode displacement operator~$D(\alpha)$, $\alpha\in\mathbb{C}^n$,
\item
a single-mode phase shifter~$F_j(\phi)$, $\phi\in\mathbb{R}$
applied to the~$j$-th mode,
\item
a single-mode squeezing operator~$S_j(z)$, $z\geq 0$ applied to the~$j$-th mode, for~$j\in [n]$, or
\item
a beamsplitter~$B_{j,k}(\omega)$, $\omega\in\mathbb{R}$, $j\neq k$
applied to the~$j$- and~$k$-th modes. 
\end{enumerate}
We refer to~\cref{sec:gaussian-unitary-op} for a definition of these operators.
Here the mode~$j=j^{(t)}$ respectively the pair of modes~$(j,k)=(j^{(t)},k^{(t)})$ depends on the time step~$t$, as do the parameters~$\alpha=\alpha^{(t)}$, $\phi=\phi^{(t)}$ etc. For notational clarity, we often suppress this dependence.
\item\label{it:stepmeasurementgaussian}
Measurement of~$k\leq n$ modes of the final state~$U^{(s)}\cdots U^{(1)}\Psi$ as follows: Each mode~$k\in [n]$ is measured using a heterodyne measurement yielding a result~$\alpha_j\in\mathbb{C}$.
\end{enumerate}
The consideration of the first~$k$ modes for the measurement is  for convenience only and presents no loss of generality. Similarly, the restriction to computations where measurements are only applied at the end of the computation is for simplicity. Our simulation algorithms also deal with post-measurement states and can thus be applied to computations  which involve mid-circuit measurements, or to compute probabilities of marginal distributions when measuring several modes.

In summary, this quantum algorithm produces a sample~$\alpha=(\alpha_1,\ldots,\alpha_k)\in\mathbb{C}^k$,  from a distribution with density
\begin{align}
p(\alpha)&=\frac{1}{\pi^k}\left|\left(\bra{\alpha}\otimes I^{\otimes n-k}\right), U^{(s)}\cdots U^{(1)}\Psi\rangle\right|^2\ ,
\end{align}
where we write~$\ket{\alpha}=\ket{\alpha_1}\otimes\cdots \otimes \ket{\alpha_k}$ for a product of coherent states.   Since the set of unitaries considered in 
Step~\ref{it:stepunitarygaussian} generates all Gaussian unitaries, this class of computations encompasses the mentioned (known) examples~-- the set of Gaussian computations (with initial state~$\Psi$ belonging to the set~$\mathcal{S}=\mathrm{Gauss}_n$ of pure Gaussian states), as well as the extension by initial states with positive Wigner functions, the set of computations starting from states 
with  finite support in the Fock-basis and bounded photon number  (with~$\mathcal{S}$  defined  correspondingly), and the set of computations starting from superpositions of coherent states~-- as well as the direction taken here, i.e., computations starting from superpositions of Gaussian states.

\subsection{Our contribution}

Here we are interested in the set

\begin{align}
\Sgauss(\chi):=\left\{\Psi\in L^2(\mathbb{R})^{\otimes n}\ \Bigg| \
\|\Psi\|=1, \Psi=\sum_{j=1}^\chi c_j\psi_j\ ,\psi_j\in \mathrm{Gauss}_n
\right\}
\end{align}
which are linear combinations of~$\chi$ pure Gaussian states.  Our first main result is the following:
\newcommand*{\Aexactsimulation}{\mathsf{SimulateExactly}}
\newcommand*{\Aapproximatesimulation}{\mathsf{SimulateApproximately}}
\begin{theorem}[Exact strong simulation]
There is a classical algorithm~$\Aexactsimulation$ which, given
\begin{enumerate}[(i)]
\item
a classical description of a state~$\Psi\in\Sgauss$, 
\item
classical descriptions of~$s$ Gaussian unitaries~$\{U^{(t)}\}_{t=1}^s$,
\item
a vector~$\alpha\in\mathbb{C}^k$, 
\end{enumerate}
computes the value~$p(\alpha)$ exactly in time~$O(s\chi^2n^3)$.
\end{theorem}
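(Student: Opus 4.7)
My plan is to exploit the linearity of the amplitude in the initial state, together with the closure of Gaussian states under Gaussian unitaries, to reduce $p(\alpha)$ to a collection of pairwise ``triple overlap'' evaluations. Expanding $\Psi=\sum_{j=1}^\chi c_j\psi_j$ and writing $\psi_j':=U^{(s)}\cdots U^{(1)}\psi_j$ (which is again a pure Gaussian state since Gaussian unitaries preserve the Gaussian manifold), one obtains
\[
p(\alpha)=\frac{1}{\pi^k}\sum_{i,j=1}^\chi \bar{c}_i c_j\,M_{ij}(\alpha),\qquad M_{ij}(\alpha):=\langle\psi_i'|\left(\ket{\alpha}\bra{\alpha}\otimes I^{\otimes n-k}\right)|\psi_j'\rangle.
\]
The algorithm then has three stages: (i)~propagate each of the $\chi$ Gaussian components through the circuit; (ii)~evaluate each of the $\chi^2$ matrix elements $M_{ij}(\alpha)$; (iii)~combine them with the coefficients $c_j$ and take the modulus squared.

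For stage~(i), each of the four elementary Gaussian unitaries listed in the circuit description acts on a Gaussian state by an $O(n^3)$ symplectic update of its displacement vector and covariance matrix, together with an accumulated scalar phase (including metaplectic contributions from squeezing). The augmented covariance matrix formalism advertised in the abstract is precisely the data structure I would use to keep these phases consistent across components, so that the coefficients $c_j$ correctly determine the subsequent interference; processing $\chi$ states through $s$ unitaries costs $O(s\chi n^3)$. For stage~(ii), I would use the coherent-state overcompleteness $I^{\otimes n-k}=\pi^{-(n-k)}\int d\beta\,\ket{\beta}\bra{\beta}$ to recast
\[
M_{ij}(\alpha)=\pi^{-(n-k)}\int d\beta\; \overline{\langle\alpha,\beta|\psi_i'\rangle}\,\langle\alpha,\beta|\psi_j'\rangle,
\]
which is a Gaussian integral over $\beta\in\bbC^{n-k}$ of the product of the two Gaussian wavefunctions $\psi_i'$, $\psi_j'$ against the coherent-state kernel $\ket{\alpha,\beta}$. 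This is precisely the triple overlap of three Gaussian states whose closed form is the technical building block promised in the abstract; its evaluation reduces to determinants and inverses of $n\times n$ matrices and thus costs $O(n^3)$. Summing over the $\chi^2$ pairs and adding stage~(i), the total runtime is $O((s\chi+\chi^2)n^3)=O(s\chi^2 n^3)$.

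The main obstacle will be deriving and verifying the triple overlap formula in a form whose cost is manifestly $O(n^3)$ and whose phase bookkeeping is unambiguous. In particular, the scalar phases accumulated during stage~(i) -- including the metaplectic signs from squeezing and the displacement-induced phase factors -- must combine with the Gaussian integrals of stage~(ii) so that the $\chi^2$ partial amplitudes reproduce the true amplitude $\left(\bra{\alpha}\otimes I^{\otimes n-k}\right) U^{(s)}\cdots U^{(1)}\Psi$ exactly, not merely up to an irrelevant global phase that would spoil the coherent cancellations between terms. Once this bookkeeping is in place and paired with the standard $O(n^3)$ linear-algebra cost of updating $n\times n$ symplectic/covariance matrices, the stated complexity bound follows immediately.
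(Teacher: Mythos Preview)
Your three-stage outline (propagate each Gaussian branch through the circuit, compute the $\chi^2$ cross terms $M_{ij}(\alpha)$, sum with the coefficients) matches the paper's architecture, and the runtime accounting is correct. The implementation differs in two places. First, the paper does not accumulate a scalar metaplectic phase; instead it augments each $(\Gamma,d)$ by the single complex number $r=\langle\alpha(\psi),\psi\rangle$, the overlap with the coherent state sharing $\psi$'s displacement, and updates $r$ under each gate---trivially for displacements, phase shifts and beamsplitters (which map coherent states to coherent states), and via the triple-overlap formula for squeezing. This sidesteps the sign ambiguity of the metaplectic lift and is where the $O(n^3)$ per-gate cost actually arises. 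Second, your Gaussian integral for $M_{ij}$ is not the paper's ``triple overlap'': that formula is the closed form for $\tr(D(\lambda)\rho_1\rho_2\rho_3)=\langle\psi_3,D(\lambda)\psi_1\rangle\langle\psi_1,\psi_2\rangle\langle\psi_2,\psi_3\rangle$, and the paper uses it only as a subroutine to infer one unknown inner product from two known ones. To obtain $M_{ij}(\alpha)$ the paper first forms the normalized post-measurement Gaussian states $\psi_j'(\alpha)=\Pi_\alpha\psi_j/\|\Pi_\alpha\psi_j\|$ (with their phases tracked via the reference-state mechanism) and then computes the pairwise overlaps $\langle\psi_i'(\alpha),\psi_j'(\alpha)\rangle$ by the same subroutine. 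Your direct-integration route over $\beta$ would also work at the same $O(n^3)$ cost, but it still presupposes a concrete phase convention for the factors $\langle\alpha,\beta|\psi_i'\rangle$; the paper's reference-state device is precisely what supplies one.
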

The constant in the $O(\cdot)$-notation depends on the Gaussian unitaries~$\{U^{(t)}\}_{t=1}^s$ applied, see Section~\ref{sec:algorithms-gaussian}.
Here the runtime of the algorithm~$\Aexactsimulation$ is  defined as the number of elementary arithmetic operations over the reals, and we assume that we have arbitrary-precision arithmetic available.  This precision is needed since we are not making any assumptions about the initial state~$\Psi$ (except for the fact that it is a linear combination of Gaussian states with~$\chi$ terms): The algorithm~
$\Aexactsimulation$ involves real numbers that are essentially the inverse of the energy (mean photon number) of~$\Psi$. In particular, if no energy bound is imposed, these scalars can become arbitrarily small. 

Imposing a mean number constraint on~$\Psi$ is natural from a physical perspective, and also  circumvents the need for infinite-precision arithmetic. Another more important aspect is that it allows us to improve the runtime to a linear instead of quadratic scaling in~$\chi$. We give a corresponding algorithm~$\Aapproximatesimulation$. In contrast to~$\Aexactsimulation$, it is a probabilistic algorithm and provides only an approximation to the value~$p(\alpha)$ of interest. 
 
To state its properties, it is convenient to 
introduce the set of states 
\begin{align}
\Sgauss_N(\chi)&:=\left\{\Psi\in \Sgauss(\chi) \ \big|\  \langle \Psi,\sum_{j=1}^\chi a_j^\dagger a_j\Psi\rangle\leq N\right\}
\end{align}
which have mean photon number bounded by a constant~$N$ and are  linear combinations  of~$\chi$ pure Gaussian states. For
 a sequence~$\{U^{(t)}\}_{t=1}^s$ of Gaussian unitaries, we also need a measure of the total amount of squeezing introduced, namely
\begin{align}
z_{\mathsf{tot}}&=\sum_{t=1}^s z^{(t)}\ .
\end{align}
Here we set~$z^{(t)}=1$ if the unitary~$U^{(t)}$ at time step~$t$ is not a single-mode squeezing operator. We note that this quantity provides the upper bound~$Ne^{2z_{\mathsf{tot}}}$ on the mean photon number of the final state~$\ket{\Psi_{\textrm{final}}}:=U^{(s)}\cdots U^{(1)}\ket{\Psi}$, assuming that this state is centered, see~\cref{sec:fastnormestimationalg}. We then have the following:
\begin{theorem}[Approximate strong simulation]
Let~$p_f\in (0,1)$ and~$\varepsilon>0$ be arbitrary. Then there is a classical probabilistic algorithm~$\Aapproximatesimulation$ which, given
\begin{enumerate}[(i)]
\item
a classical description of a state~$\Psi\in\Sgauss_N(\chi)$, 
\item
classical descriptions of~$s$ Gaussian unitaries~$\{U^{(t)}\}_{t=1}^s$,
\item
a typical vector~$\alpha\in\mathbb{C}^k$ of measurement results,
\end{enumerate}
computes the value~$p(\alpha)$ to within a multiplicative error~$\varepsilon$ 
with probability at least~$1-p_f$. It has runtime bounded by~$O(s\chi n^3 Ne^{2z_{\mathsf{tot}}}/(\varepsilon^3 p_f))$.
\end{theorem}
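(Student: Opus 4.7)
The plan is to reduce computing $p(\alpha)$ to estimating the squared norm of a superposition of Gaussian states on the complementary mode subsystem, and then invoke the fast Monte-Carlo norm estimator $\Afastnorm$ advertised in the abstract.

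First, I would use the augmented covariance matrix formalism to propagate each Gaussian summand of $\Psi = \sum_{j=1}^\chi c_j \psi_j$ through the circuit. Since the four gate types of Step~(ii) all map Gaussian states to Gaussian states, each intermediate $U^{(t)}\cdots U^{(1)}\psi_j$ remains Gaussian and is represented by an updated displacement, covariance matrix, and tracked complex scalar. The closed-form updates cost $O(n^3)$ per summand per time step, so propagating the full superposition costs $O(s\chi n^3)$. Since the partial overlap of a product of coherent states with a Gaussian state is again Gaussian on the remaining modes, $\tilde\Psi := (\bra{\alpha}\otimes I^{\otimes n-k})\,U^{(s)}\cdots U^{(1)}\Psi$ is a superposition of $\chi$ Gaussians on $n-k$ modes, computable in an additional $O(\chi n^3)$ operations, and satisfies $p(\alpha) = \|\tilde\Psi\|^2/\pi^k$. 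It therefore suffices to estimate $\|\tilde\Psi\|^2$ to multiplicative error $\varepsilon$.

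Second, I would feed $\tilde\Psi$ into $\Afastnorm$. I expect this routine to be an unbiased Monte Carlo estimator based on the coherent-state resolution of identity $\int \frac{d^{2n}\phi}{\pi^n}\proj{\phi} = I$, where each sample reduces to evaluating $\langle\phi,\tilde\Psi\rangle = \sum_j c_j \langle\phi,\tilde\psi_j\rangle$ via the single-Gaussian overlap formula at cost $O(\chi n^3)$. The key physical input is that the mean photon number of the final state is bounded by $Ne^{2z_{\mathsf{tot}}}$, since only squeezing gates grow the mean photon number and each does so by at most a factor $e^2$, as recorded in the discussion preceding the theorem. This upper bound controls the variance of the estimator; a Chebyshev/median-of-means concentration argument then turns it into a multiplicative-$\varepsilon$ estimate with failure probability at most $p_f$ using $O(Ne^{2z_{\mathsf{tot}}}/(\varepsilon^3 p_f))$ samples. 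Combining sample count, per-sample cost, and propagation cost yields the stated runtime.

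The hardest step is the variance bound itself. It reduces to a second-moment calculation of the form $\bbE_\phi|\langle\phi,\tilde\Psi\rangle|^4$ which, upon expanding the square, unfolds into a weighted sum of triple overlaps of Gaussian states; the explicit triple-overlap formula — the other main technical ingredient announced in the abstract — is what makes this expectation both computable and boundable in terms of the mean photon number of $\tilde\Psi$. The qualifier ``typical $\alpha$'' in the hypotheses is there precisely to exclude the measure-zero set of measurement outcomes on which the variance bound degenerates (for example where $\tilde\Psi$ becomes small compared to the fluctuations of the estimator). The remaining ingredients — propagation of the $\chi$ Gaussian summands, evaluation of partial overlaps against $\ket{\alpha}$, and assembly of the failure probability via concentration — are by comparison routine once the triple-overlap formula and $\Afastnorm$ are in hand.
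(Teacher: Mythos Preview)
Your overall reduction is right: propagate each Gaussian summand through the circuit, project onto $\ket{\alpha}$ to obtain an unnormalized superposition $\tilde\Psi$ of $\chi$ Gaussian states on the remaining modes, and estimate $\|\tilde\Psi\|^2$ by the coherent-state Monte Carlo routine $\Afastnorm$. That is exactly what the paper does. But two of your key supporting claims are wrong, and they matter.

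First, the variance bound for $\Afastnorm$ does \emph{not} use the triple-overlap formula. In the paper (Lemma~\ref{lem:fastnormprobabilistic}) the second moment is controlled by the trivial pointwise bound $|\langle\phi,\tilde\Psi\rangle|^2\leq\|\tilde\Psi\|^2$, which gives $\int p(\phi)^2\,d\phi\leq\int p(\phi)\,d\phi\leq 1$ and hence $\Var(X)\leq \pi R^2\|\tilde\Psi\|^4$. The energy enters only to control the \emph{bias}: one samples $\phi$ uniformly from a ball of radius $R$, and Markov's inequality on the energy bounds the mass of $p$ outside $B_R(0)$. Expanding $|\langle\phi,\tilde\Psi\rangle|^4$ into $\chi^4$ Gaussian overlaps, as you suggest, is neither needed nor obviously boundable in terms of the photon number. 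The triple-overlap formula is used elsewhere: it is the engine behind $\Aoverlap$ and $\Ameasure$, i.e., for \emph{evaluating} each sample $\langle\phi,\tilde\Psi\rangle=\sum_j c_j\langle\phi,\tilde\psi_j\rangle$ and for tracking relative phases through the circuit and the projection, not for any variance analysis.

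Second, and more importantly, you have the role of ``typical $\alpha$'' wrong. The energy bound $Ne^{2z_{\mathsf{tot}}}$ is on the state \emph{before} measurement; what $\Afastnorm_E$ actually needs is a bound $E$ on the energy of the \emph{normalized post-measurement} state $\tilde\Psi/\|\tilde\Psi\|$. Appendix~\ref{sec:appendixvariancepostmeasure} of the paper shows explicitly that this post-measurement energy can be arbitrarily large even when the pre-measurement energy is bounded, so your passage from one to the other fails. The ``typical'' set (Lemma~\ref{lem:typicalsubsetlemma}) is defined precisely as those $\alpha$ for which the post-measurement energy stays below $\tilde E=O(Ne^{2z_{\mathsf{tot}}}/\delta)$; its complement has probability mass up to $\delta$, not measure zero, and the issue is not degeneracy of $\tilde\Psi$ but blow-up of the energy of its normalization.
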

In the algorithm~$\Aapproximatesimulation$, the necessary arithmetic precision is set by~$\varepsilon$ and the inverse of the mean photon number of the final state~$\ket{\Psi_{\textrm{final}}}$.
We refer to~\cref{sec:fastnormestimationalg}
for a thorough discussion of the notion of typicality for a measurement result~$\alpha\in\mathbb{C}^k$ (see~\cref{lem:typicalsubsetlemma}).  It amounts to the statement that~$\alpha$~belongs to a set that has high probability mass under~$p_\Psi$, namely a ball of large radius compared to the energy of~$\ket{\Psi_{\textrm{final}}}$, and the normalized post-measurement state for the measurement  outcome~$\alpha$ has a moderate mean photon number (again compared to the state~$\ket{\Psi_{\textrm{final}}}$). 

The computational complexity of our algorithms is tied to the (bosonic) Gaussian rank of the initial state~$\Psi$,  i.e., the minimal integer~$\chi\in\mathbb{N}$ such that~$\Psi\in \Sgauss(\chi)$. This definition is analogous to  the stabilizer rank \cite{bravyiImprovedClassicalSimulation2016,bravyiImprovedClassicalSimulation2016}, as well as to the fermionic Gaussian rank \cite{dias2023classical,cudby2023gaussian}. Assuming that the initial state is provided as a superposition of Gaussian states with the minimum number~$\chi$ of terms, the algorithms~$\Aexactsimulation$ and~$\Aapproximatesimulation$ scale quadratically and linearly in~$\chi$, respectively. Starting from this observation, one may introduce the notion of the bosonic Gaussian extent of~$\Psi$, similar to the way the stabilizer~\cite{bravyiSimulationQuantumCircuits2019a} and fermionic Gaussian extent~\cite{dias2023classical,cudby2023gaussian,reardon-smithImprovedSimulationQuantum2023} are defined. This quantity is typically much smaller than the Gaussian rank and is a proxy for an approximate notion of Gaussian rank. The runtime of an appropriately modified version of the algorithm~$\Aapproximatesimulation$   scales linearly 
in this quantity.

Classical strong simulation is indeed stronger than weak simulation in the sense that an algorithm for strong simulation gives an algorithm for weak simulation, provided that there is an efficient algorithm to sample from the output probability distribution associated to measurement \cite{Pashayan2020fromestimationof}. This statement carries over to continuous variable computations by discretizing the output probability density, provided there is an efficient binning of the space of outcomes and the discretized probabilities can be computed efficiently from the output probability density and sampled from efficiently \cite{PhysRevResearch.3.033018}.
Then, if provided, an efficient binning of the outcome space as well as efficient algorithms to compute and sample discretized probabilities from the output probability density augment our classical simulation methods in a way that allows weak simulation.

Our work mirrors earlier findings in the context of stabilizer \cite{bravyiImprovedClassicalSimulation2016,bravyiSimulationQuantumCircuits2019a,pashayanFastEstimationOutcome2022b}, matchgate and fermionic linear optics computations \cite{dias2023classical,reardon-smithImprovedSimulationQuantum2023}. It makes an experimentally important class of non-Gaussian circuits accessible to numerical simulation.

\subsection{Application of our algorithm
to bosonic error correction}
We note that there is by now a wide variety of classical simulation algorithms for bosonic dynamics (see the discussion in Section~\ref{sec:priorwork}). Since they typically apply to different sets of (initial) states and operations, a direct comparison of their merits is non-trivial.  Rather, the algorithm best suited for classical simulation heavily depends on the application considered.

As already mentioned, we believe that an area of application particularly suited for our algorithm is that of bosonic error correction. Concretely, 
an (approximate) Gottesman-Kitaev-Preskill (GKP) code~\cite{gkp} encodes a qubit into the two-dimensional subspace of an oscillator defined by the two states
\begin{align}
\ket{\overline{b}}_{\mathrm{GKP}}&\propto \sum_{s=-\infty}^{\infty} e^{-\frac{\kappa^2}{2} \left((2 s+1) \sqrt{2\pi}\right)^2} e^{-i(2 s+b)P \sqrt{2\pi}}\left|\psi_\Delta \right\rangle\ ,\label{eq:gkpcodestate}
\end{align}
for $b\in \{0,1\}$, 
where~$\psi_\Delta(x)=e^{-x^2/(2\Delta^2)}/(\pi\Delta^2)^{1/4}$ is a Gaussian wavefunction, and where $(\Delta,\kappa)$ describes the shape (squeezing)  of the GKP-state, see Fig.~\ref{fig:gkpstate}. This code protects against (small) random displacement errors in phase space. Fault-tolerant quantum computation with a CV system can be realized by means of copies of these GKP-states and Gaussian operations: Indeed, GKP-states and such operations provide computational universality~\cite{PhysRevLett.123.200502}.

Because of the Gaussian envelope
in Eq.~\eqref{eq:gkpcodestate} (see Fig.~\ref{fig:gkpstate}), GKP-states can be approximated well by a finite linear combination of Gaussian states. The behavior of a fault-tolerant quantum circuit 
with an initial state that is a tensor product of states of the form~\eqref{eq:gkpcodestate} can therefore be studied using our algorithm.

Similar considerations apply to so-called
coherent cat-states, which have been proposed for protecting against photon loss (another ubiquitous example of Gaussian errors), see e.g.,~\cite{PhysRevA.94.042332, PhysRevLett.119.030502,PhysRevLett.111.120501,allopticalcatcode}. Here a qubit is encoded into the span of the two states
\begin{align}
\ket{\overline{b}}_{\mathrm{CAT}}&\propto \ket{i^b\alpha}+\ket{i^b(-\alpha)}\ ,\label{eq:coherentcat}
\end{align}
where $b\in \{0,1\}$. Since the states~\eqref{eq:coherentcat} only involve two Gaussian states in the superposition (i.e., have Gaussian rank~$2$), they are even more compatible with our simulation method.

%(These  are approximately orthogonal for sufficiently large~$|\alpha|, \alpha\in\mathbb{C}$.) 

Another related example of a circuit simulable with our approach 
is the encoding circuit of an oscillator-to-oscillator code~\cite{nohoscillatortooscillator}: Here the encoding proceeds by applying a Gaussian unitary to a bosonic mode, with a  collection of auxiliary systems prepared in GKP-states, see Fig.~\ref{fig:oscosccodecircuit}. Our circuit applies whenever the state to be encoded is itself decomposed into a finite linear combination of Gaussian states.  We note that such corresponding states and circuit have been implemented experimentally, see e.g.,~\cite{campagne-ibarcqQuantumErrorCorrection2020}.

\begin{figure}[!b]
  \centering
  \includegraphics[width=0.4\linewidth]{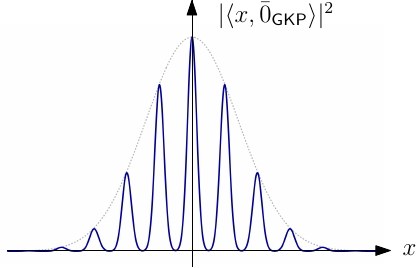}
  \caption{\justifying The amplitude~$|\langle x,\overline{0}_{\mathrm{GKP}}\rangle|^2$ of the (approximate) GKP state~$\ket{\overline{0}}_{\mathrm{GKP}}$ in the position-basis~$\{\ket{x}\}_{x\in\mathbb{R}}$\label{fig:gkpstate}.}
  \label{fig:gkpstate}
\end{figure}

\begin{figure}[!b]
    \centering
  \includegraphics[width=0.45\linewidth]{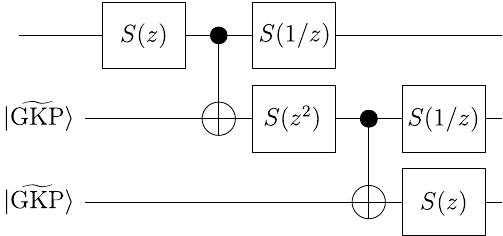}
  \caption{\justifying The $3$-mode GKP-squeezed repetition encoder
by Noh et al.~\cite{nohoscillatortooscillator}\label{fig:oscosccodecircuit}}
\label{fig:gaussiansuperpositioncircuit}
\end{figure}
%Examples of Gaussian-superposition linear optics circuits

\subsection{Technical building blocks}

Our work introduces new methods for describing and working with superpositions of Gaussian states that may be of independent interest. 
One of our main technical tools is a formula for
the expression
\begin{align}
    \tr\left( D(\alpha) \rho(\cov_1, d_1) \rho(\cov_2, d_2) \rho(\cov_3,d_3) \right)
\end{align}
where~$\alpha\in\mathbb{C}^{n}$ specifies a displacement operator~$D(\alpha)$, and where~$\rho(\Gamma,d)$ denotes a Gaussian state with covariance matrix~$\Gamma$ and displacement vector~$d$. Using this formula, which is a generalization of an expression previously derived in~\cite{PhysRevA.61.022306} for the case of centered Gaussian states and~$\alpha$ set to zero, one can compute expressions of the form
\begin{align}
\label{eq:overlaptripleintro}
\langle \psi_3,D(\alpha)\psi_1\rangle\langle\psi_1,\psi_2\rangle\langle\psi_2,\psi_3\rangle
\end{align}
for three Gaussian states~$\psi_j$, $j=1,2,3$, in terms of their covariance matrices and displacement vectors, see~\cref{prop:trace3statesWWeylPsi}. 
Our construction then relies on using a description of Gaussian states (including their global phase) that can be combined with this formula to evaluate inner products (including phases). We have previously used such an approach in the context of non-Gaussian operations using fermions~\cite{dias2023classical}.
The use of expressions of the form~\eqref{eq:overlaptripleintro} allows us to keep track of the relative phases between Gaussian states in a superposition. We note that there may be other ways of achieving this. For example, Ref.~\cite{yao2023design} provides formulas for the global phase of a composition of Gaussian unitaries, which could serve this purpose. Such an approach has been pursued in the fermionic context by Ref.~\cite{reardon-smithImprovedSimulationQuantum2023}.

To describe a single pure Gaussian state~$\ket{\psi}\in\mathrm{Gauss}_n$ including its phase, we augment the standard description of the corresponding density operator~$\proj{\psi}$ (by its first and second moments, i.e., the covariance matrix and displacement vector)  by a single scalar~$r\in\mathbb{C}$: This is the overlap~$r(\psi)=\langle \alpha(\psi),\psi\rangle$ with a coherent state~$\ket{\alpha(\psi)}$
that has the same displacement as~$\psi$. The latter coherent state is the one that maximizes the squared overlap with~$\ket{\psi}$, that is, 
\begin{align}
    \alpha(\psi)&=\arg\max\left\{|\langle\beta,\psi\rangle|^2\ |\ \beta\in\mathbb{C}\right\}\ .
    \end{align}
The triple~$\Delta(\psi):=(\Gamma(\psi),\alpha(\psi),r(\psi))$ then is what we use as a description of a pure Gaussian state~$\psi\in\mathrm{Gauss}_n$.

To represent a superposition~$\Psi=\sum_{j=1}^\chi c_j\psi_j\in\cS^{\mathrm{Gauss}_n}(\chi)$ of Gaussian states, we then simply use the collection~$\{c_j,\Delta(\psi_j)\}_{j=1}^\chi$
of corresponding  coefficients and descriptions of the individual terms in the superposition. 

To construct our classical simulation algorithms, we then argue how these descriptions can be updated under Gaussian unitaries, and how measurement outcome probabilities can be computed from these descriptions. A key building block here  (which ultimately leads to the~$O(\chi)$-runtime of the algorithm~$\Aapproximatesimulation$) is a probabilistic subroutine that estimates the norm of any vector~$\Psi\in\Sgauss_N(\chi)$ in linear time in~$\chi$.

\section{Background}

In this section, we give some background on Gaussian states and operations.
We consider the Hilbert space~$\cH_n = L^2(\bbR)^{\otimes n}$ associated with a system of~$n$ bosonic modes.

We introduce bosonic creation- and annihilation- operators, $\{a^\dagger_j\}_{j=1}^{n}$ and~$\{a_j\}_{j=1}^{n}$, which satisfy the canonical commutation relations
\begin{align}
    [a_j, a_k^\dagger] = \delta_{j,k} I 
    \quad,\quad 
    [a_j, a_k] = [a_j^\dagger, a_k^\dagger] = 0
\end{align}
for $j,k\in[n]$.
A number state is defined as 
\begin{align}
    | k \rangle = (a_1^\dagger)^{k_1} \cdots (a_n^\dagger)^{k_n} | 0 \rangle
    \quad\text{for}\quad
    k\in\bbN_0 = \bbN \cup \{0\} \ ,
\end{align}
where~$|0\rangle$ is the vacuum state defined (up to a global phase) as the state annihilated by any annihilation operator, that is
\begin{align}
    a_j | 0 \rangle = 0
    \quad\text{for all}\quad
    j \in [n] \ .
\end{align}
The set of number states~$\{|k\rangle\}_{k\in\bbN_0^n}$ is an orthonormal basis of the Hilbert space~$\cH_n$.

Instead of using creation and annihilation operators, one can work with field operators $Q_1, \ldots, Q_n, P_1, \ldots, P_n$ which are defined as
\begin{align}
    \begin{gathered}
    \begin{aligned}
    Q_j &= (a_j + a_j^\dagger) / \sqrt{2}\\
    P_j &= - i (a_j - a_j^\dagger) / \sqrt{2}
    \end{aligned}
    \end{gathered}
    \quad\text{for}\quad j\in[n]
\end{align}
and which satisfy the canonical commutation relations
\begin{align}
    \left[ R_j, R_k \right] = i \Omega_{k,l} I \ .
\end{align}
Here we defined the vector of field operators 
\begin{align}
    R = (Q_1,P_1,\ldots, Q_n, P_n) 
\end{align}
where the matrix
\begin{align}
    \Omega = 
    \bigoplus_{j=1}^n
    \begin{pmatrix}
        0 & 1 \\
        -1 & 0 
    \end{pmatrix} \ 
\end{align}
defines the symplectic form on~$\bbR^{2n}$. 

\subsection{Gaussian states}
\label{sec:gaussian-states}

The characteristic function~$\chi_\rho: \bbR^{2n} \rightarrow \bbC$ of a density operator~$\rho$ is defined as
\begin{align}
    \label{eq:chi} \chi_\rho (\xi) = \tr\left( \rho  D(\hat{d}^{-1}(\xi))  \right) 
    \quad\text{for}\quad
    \xi\in\bbR^{2n} \ ,
\end{align}
with the displacement operator
\begin{align}
    D(\alpha) = e^{i \hat{d}(\alpha)^T \Omega R} 
    \quad\text{for}\quad
    \alpha\in\bbC^n\ ,
\end{align}
 where we defined the function
\begin{align}
    \begin{gathered}
    \begin{aligned}
    \label{eq:psifunctiondescr}
    \hat{d} : \bbC^n &\rightarrow  \bbR^{2n} \\
    \alpha &\mapsto  \hat{d}(\alpha) = \sqrt{2} \left(\re(\alpha_1),\im(\alpha_1),\ldots, \re(\alpha_n),\im(\alpha_n)\right)
    \end{aligned}
    \end{gathered} \ .
\end{align}
The characteristic function~$\chi_\rho:\bbR^{2n}\rightarrow \bbC$ of a state~$\rho$ fully determines the state~$\rho$ via the Weyl transform
\begin{align}
    \label{eq:WeylTransform}
    \rho = \frac{1}{(2\pi)^{n}} \int_{\mathbb{R}^{2n}} d^{2n} \xi \, \chi_\rho(\xi) 
    D(-\hat{d}^{-1}(\xi))  \ .
\end{align}

The displacement vector~$d\in \mathbb{R}^{2n}$ of a state~$\rho \in \mathcal{B}(\cH_n)$ is the vector whose entries are the first moments of the field operators, i.e., 
\begin{align}
    \label{eq:d} d_j = \tr( \rho R_j )
    \quad\text{for}\quad
    j \in [2n] \ .
\end{align}

The covariance matrix~$ \cov \in\mathsf{Mat}_{2n\times 2n}(\mathbb{R})$ of a state~$\rho \in \cH_n$ is the matrix whose entries
\begin{align}
    \label{eq:V}   \cov_{j,k} = \tr( \rho \{ R_j - d_j I, R_k - d_k I \} )
\end{align}
for~$j,k\in[2n]$ are the second moments of the field operators. 
By definition, the covariance matrix is symmetric. It satisfies
\begin{align}
    \label{eq:Vcond}
    \Gamma + i\Omega \geq 0
\end{align}
which follows from~$\rho\geq0$.
We say a symmetric matrix~$\Gamma \in \msf{Mat}_{2n\times 2n}(\mathbb{R})$ is a valid covariance matrix if it satisfies~\cref{eq:Vcond}.

A Gaussian state~$\rho\in\cB(\cH_n)$ is a state whose characteristic function~$\chi_\rho$ has the form
\begin{align}
    \label{eq:chiGaussianState} \chi_\rho (\xi) = \exp\left(-\frac{1}{4} \xi^T  \Omega^T \cov \Omega \xi - d^T i \Omega \xi \right) 
\end{align}
with~$\xi \in \mathbb{R}^{2n}$.
The pair~$(\Gamma,d)$ are the covariance matrix and displacement of~$\rho$.
Since the characteristic function~$\chi_\rho$ of a state~$\rho$ fully determines the state~$\rho$ by~\cref{eq:WeylTransform}, a Gaussian state is fully determined by its first two moments. 
We will denote by~$\rho(\cov,d)$ a Gaussian state with covariance matrix~$\cov$ and displacement vector~$d$.

According to Williamson theorem \cite{a6cda7d0-29df-361a-a64d-8dd6c762bce6} (see also \cite[Appendix 6]{arnol2013mathematical}), any symmetric matrix~$\Gamma\in\mathsf{Mat}_{2n\times 2n}(\bbR)$ can be written as (see e.g., \cite[Theorem 8.11]{degossonSymplecticGeometryQuantum2006})
\begin{align}
    \label{eq:williamson}
    \Gamma = S \left(\bigoplus_{j=1}^n \begin{pmatrix}
        d_j & 0 \\
        0 & d_j
    \end{pmatrix}\right) S^T \ ,
\end{align}
where~$d_j\in \bbR$ for $j\in[n]$ %\in\mathsf{Mat}_{n\times n}(\bbR)$ is a diagonal matrix 
and~$S\in Sp(2n)$ where~$Sp(2n)$ denotes the group of~$2n\times 2n$ real symplectic matrices, i.e.,~$S$ satisfies~$S\Omega S^T=\Omega$. Additionally, any symplectic matrix~$S\in Sp(2n)$ can be decomposed as 
\begin{align}
    \label{eq:EulerDecomposition}
    S = K' Z  K \ ,
\end{align}
where~$K,K' \in Sp(2n) \cap O(2n)$ are symplectic orthogonal matrices
and where 
\begin{align}
    \label{eq:Zactive}
    Z = \diag(z_1, 1/z_1, \ldots, 1/z_n,  z_n)
\end{align}
with~$z_j > 1, j\in[n]$. This is called the Euler or Bloch-Messiah decomposition \cite{Arvind_1995}. Together with~\cref{eq:williamson}, the Euler decomposition allows to express a covariance matrix~$\Gamma$ of a pure state as
\begin{align}
     \label{eq:covmat-standard-form}
    \Gamma = K Z K^T
\end{align}
where~$Z$ is of the form given in~\cref{eq:Zactive} and~$K \in Sp(2n) \cap O(2n)$.

Throughout our work, we will consider the Hamiltonian
\begin{align}
    \label{eq:hamiltonian}
    H&=\sum_{j=1}^n \left( Q_j^2 + P_j^2 + I \right) 
    = 2 \sum_{j=1}^n \left( a_j^\dagger a_j + I \right) \ .
\end{align}
We note that
\begin{align}
    N=\sum_{j=1}^n a_j^\dagger a_j
\end{align}
is referred to as the number operator.
The energy of a Gaussian state~$\rho(\cov,d)$ is given by 
\begin{align}
    \label{eq:energy}
    \tr(H \rho(\cov,d)) = \frac{1}{2} \tr(\cov) + d^T d + n \ .
\end{align}
In the following, we will primarily work with pure Gaussian states. We denote by
\begin{align}
    \mathrm{Gauss}_n = \left\{ \psi\in\cH_n \ | \ |\psi\rangle \langle \psi | \text{ is a Gaussian density operator} \right\}    
\end{align}
the set of all~$n$-mode pure Gaussian states.

\begin{table}[!h]
\resizebox{\columnwidth}{!}{
\begin{tabular}{lccc}
\toprule
Gaussian unitary operation                    & unitary~$U$                                   & $S\in Sp(2n)$ & $s\in\bbR^n$ \\ \midrule 
displacement~$D(\alpha)$ &
  $e^{i\hat{d}(\alpha)^T \Omega R}$ &
  $I$ &
  $\hat{d}(\alpha)$  \\ \cmidrule(l){2-4} 
phase shift~$F_j(\phi)$           & $e^{-i\frac{\phi}{2}(Q_j^2+P_j^2-I)}$                         
&                       
$\begin{matrix}
    & \begin{matrix} Q_j \ \ \   & \ \ \  P_j \end{matrix} \\
    \begin{matrix} Q_j \\  P_j \end{matrix} 
    &
    \left(\begin{matrix}
    \cos(\phi) & \sin(\phi) \\
    -\sin(\phi) & \cos(\phi) \\
    \end{matrix}\right)
\end{matrix}$
& $0$       \\ \cmidrule(l){2-4} 
beamsplitter~$B_{j,k}(\omega)$ & $e^{ i \omega (Q_jQ_k+P_jP_k) }$ 
&                       
$\begin{matrix}
    & \begin{matrix} \ \ \ Q_j \ \ \  & \ \ \ P_j \ \ \ & \ \ \ Q_k \ \ \ & \ \ \ P_k \ \ \ \end{matrix} \\
    \begin{matrix} Q_j \\  P_j \\ Q_k \\ P_k \end{matrix} 
    &
    \left(\begin{matrix}
    \cos(\omega) & 0 & 0 & \sin(\omega) \\
    0 & \sin(\omega) & \cos(\omega)  & 0 \\
    0 &\cos(\omega) & -\sin(\omega)  & 0 \\
    -\sin(\omega) & 0 & 0 & \cos(\omega) \\
    \end{matrix}\right)
\end{matrix}$
& $0$        \\ \cmidrule(l){2-4} 
single-mode squeezing~$S_j(z)$            & $e^{i\frac{z}{2}(Q_jP_j+P_jQ_j)}$                
& 
$\begin{matrix}
    & \begin{matrix} \ Q_j   &  P_j \end{matrix} \ \\
    \begin{matrix} Q_j \\  P_j \end{matrix} 
    &
    \left(\begin{matrix}
    e^{-z} & 0 \\
    0 & e^z \\
    \end{matrix}\right)
\end{matrix}$
& $0$   \\ \bottomrule    
\end{tabular}}
\caption{\justifying Unitary operator~$U$, symplectic matrix~$S\in Sp(2n)$ and vector~$s\in\bbC^n$ associated with each Gaussian unitary operation. Apart from the case of the displacement, we only show the entries labelled by the associated field operators where~$S$ is different from the identity.}
\label{tab:gaussian-unitaries}
\end{table}

\subsection{Gaussian unitary operations}
\label{sec:gaussian-unitary-op}

An operation is Gaussian if and only if it maps Gaussian states to Gaussian states. 
Gaussian unitary operators~$U: \cB(\cH_n) \rightarrow \cB(\cH_n)$ map field operators~$R_j,j\in[2n]$ to linear combinations thereof according to
\begin{align}
    \label{eq:USprelation} U R_j U^\dagger = \sum_{j=1}^{2n} S_{j,k} R_k
    \quad\text{with}\quad
    S\in Sp(2n) \ .
\end{align}

Recall that a Gaussian state~$\rho(\cov,d)$ is determined (up to a global phase) by its covariance matrix~$\cov$ and displacement vector~$d$. This together with the fact that Gaussian operators map Gaussian states to Gaussian states means that we can describe Gaussian operations (up to a global phase) by the way they transform covariance matrices and displacement vectors. 
Concretely, a Gaussian operator~$U$ acting on~$\rho(\cov, d)$ leads to a Gaussian state~$\rho(\cov',d') = U \rho U^\dagger$ with covariance matrix~$\cov'$ and displacement vector~$d'$ given by
\begin{align}
    \label{eq:covprime} \cov' &= S \cov S^T \ , \\ 
    \label{eq:dprime} d' &= S d + s \ ,
\end{align}
where~$S\in Sp(2n)$ is related to~$U$ by~\cref{eq:USprelation} and where~$s\in \bbR^{2n}$. 
Hence, a Gaussian unitary operation~$U$ is described by a symplectic matrix~$S\in Sp(2n)$ and a vector~$s\in\bbR^{2n}$. Conversely, any pair~$(S,s)\in Sp(2n) \times \bbR^{2n}$ can be associated with a Gaussian unitary~$U$. 
The group of Gaussian unitaries is generated by
\begin{enumerate}[(i)]
\item
$n$-mode displacement operators~$D(\alpha)$, $\alpha\in\mathbb{C}^n$,
\item
single-mode phase shifters~$F_j(\phi)$, $\phi\in\mathbb{R}$
applied to a mode~$j\in[n]$,
\item
single-mode squeezing operators~$S_j(z)$, $z\geq 1$ applied to a mode~$j\in [n]$, and
\item beamsplitters~$B_{j,k}(\omega)$, $\omega\in\mathbb{R}$, $j\neq k$
applied to modes~$j,k\in[n]$.
\end{enumerate}
In~\cref{tab:gaussian-unitaries} we explicitly write these unitary operators in terms of field operators, as well as the symplectic matrix~$S\in Sp(2n)$ and vector~$s\in\bbR^{2n}$ associated with each of these operations. 

For future reference, note the displacement operator satisfies the Weyl relations
\begin{align}
    \label{eq:Wprop1} D(\alpha)^\dagger &= D(-\alpha) \ , \\ 
    \label{eq:Wprop2b} 
     D(\alpha)D(\beta) 
     &= \exp\left( i \im(\alpha^T\overline{\beta}) \right) D(\alpha+\beta)  \\
    \label{eq:Wprop2} &= \exp\left( - \hat{d}(\alpha)^T i \Omega \hat{d}(\beta)/2 \right) D(\alpha+\beta)
\end{align}
for~$\alpha,\beta\in\bbC^n$.

\subsection{Coherent  states}
\label{sec:coherent-states}

Coherent states play an important role in our algorithms and are a useful tool in many calculations. For~$\alpha\in\bbC^n$, the coherent state~$|\alpha\rangle \in \cH_n$ is the pure Gaussian state obtained by applying the displacement operation~$D(-\alpha)$ to the vacuum state, i.e., 
\begin{align}
    \label{eq:defCoherentState} 
    | \alpha \rangle = D(-\alpha) | 0 \rangle \ .
\end{align}
The coherent state~$|\alpha \rangle$ has covariance matrix and displacement vector
\begin{align}
    \cov &= I   \ ,  \\
    \label{eq:dispcoherent} \hat{d}(\alpha) &= \sqrt{2} (\re(\alpha_1), \im(\alpha_1), \dots, \re(\alpha_n), \im(\alpha_n)) \ .
\end{align}
Coherent states are eigenvalues of the annihilation operators~$a_j,j\in[n]$, i.e.,
\begin{align}
    a_j |\alpha \rangle = \alpha_j |\alpha \rangle \ ,
\end{align}
and form an overcomplete basis, satisfying the completeness relation
\begin{align}
    \label{eq:numberstaterep}
    \frac{1}{\pi^n} \int_{\bbR^{2n}} d^{2n} \alpha \, |\alpha\rangle \langle \alpha | = I_{\cH_n} \ .
\end{align}

We note that displacement operators, phase shifts and beamsplitters take coherent states to coherent states. 
More precisely, for~$\beta\in\bbC^n$, we have
\begin{align}
    \label{eq:Dcoherentstate}
    D(\beta) |\alpha\rangle = | \alpha' \rangle
\end{align}
with~$\alpha' = \exp(i \im(\alpha^T \overline{\beta}))(\alpha-\beta) \in \bbC^n$
for all~$\alpha\in\bbC^n$.
For~$\phi\in\bbR$ and~$j\in[n]$, we have
\begin{align}
    \label{eq:PScoherentstate} F_j(\phi) |\alpha \rangle = |\alpha'\rangle
\end{align}
with~$\alpha' = (\alpha_1, \ldots, \alpha_{j-1}, e^{-i\phi} \alpha_j, \alpha_{j+1}, \ldots, \alpha_n ) \in \bbC^{n}$ for all~$\alpha\in\bbC^n$ and, for~$\omega\in\bbR$ and~$j,k\in[n]$, 
\begin{align}
    \label{eq:BScoherentstate}
    B_{j,k} (\omega) | \alpha \rangle 
    = |\alpha' \rangle
\end{align}
with~$\alpha'\in\bbC^n$ the vector with entries
\begin{align}
    \label{eq:BScoherentstate-alpha}
    \alpha'_j &= \alpha_j \cos\omega - i \alpha_k \sin\omega \ , \\
    \alpha'_k &= - i \alpha_j \sin\omega + \alpha_k \cos\omega \ ,
    \quad\text{for all}\quad\alpha\in\bbC^n \ . \\
    \alpha'_\ell &= \alpha_\ell \quad\text{for}\quad \ell\notin\{j,k\} \ ,
\end{align}

\subsection{Gaussian measurements}
\label{sec:gaussian-measurements}

We consider a~$n$-mode Gaussian state~$\rho\in\cB(\cH_A \otimes \cH_B)$ on a bipartite system where~$\cH_A \simeq \cH_k$ includes~$k$ modes and~$\cH_B \simeq H_{n-k}$ includes the remaining~$n-k$ modes. 
The state~$\rho_{AB}=\rho(V,s)$ has covariance matrix
\begin{align}
    \cov = 
    \begin{pmatrix}
        \Gamma_A & \Gamma_{AB} \\ 
        \Gamma_{AB}^T & \Gamma_B
    \end{pmatrix} \ ,
\end{align}
where~$\Gamma_A\in\mathsf{Mat}_{2k\times2k}(\bbR)$ is the covariance matrix of the reduced density operator~$\rho_A$ on subsystem~$A$ and~$\Gamma_B\in\mathsf{Mat}_{2(n-k)\times 2(n-k)}(\bbR)$ is the covariance matrix of the reduced density operator~$\rho_B$ on subsystem~$B$,
and displacement vector
\begin{align}
    s=(s_A,s_B) \ ,   
\end{align}
where~$s_A\in\bbR^{2k}$ and~$s_B\in\bbR^{2(n-k)}$ are the displacements of~$\rho_A$ and~$\rho_B$, respectively. 

A heterodyne measurement of subsystem~$A$ with outcome~$\alpha\in\bbC^k$ corresponds to projecting the state~$\rho_{AB}$ onto 
\begin{align}
    \Pi_\alpha = |\alpha\rangle \langle \alpha | \otimes I^{n-k} \ .
\end{align}
The probability of obtaining outcome~$\alpha$ is given by
\begin{align}
    p(\alpha) &= \frac{1}{\pi^{k}} \tr( \Pi_\alpha \rho_{AB} )  \\
    \label{eq:heterodyne-prob} &= \frac{\exp\left( -(d(\alpha)-s_A)^T (\Gamma_A+I)^{-1} (d(\alpha)-s_A) \right) }{\pi^{k} \sqrt{\det(\Gamma_A+I)}} \ ,
\end{align}
and the post-measurement state is
\begin{align}
    \rho'_{AB} = \frac{1}{p(\alpha)} \Pi_\alpha \rho_{AB} \Pi_\alpha \ .
\end{align}
The post-measurement state~$\rho'_{AB} = \rho(\cov',d')$ is a Gaussian state with covariance matrix (see calculation e.g. in \cite[Appendix B]{garcia2007quantum})
\begin{align}
    \label{eq:heterodyne-cov}\cov' = I \oplus \left( \Gamma_B-\Gamma_{AB}^T (\Gamma_A + I)^{-1} \Gamma_{AB} \right) 
\end{align}
and displacement vector
\begin{align}
    \label{eq:heterodyne-disp} d' = (\hat{d(\alpha)}, s_B + \Gamma_{AB}^T(\Gamma_A + I)^{-1}(\hat{d}(\alpha)-s_A)) \ .
\end{align}

\subsection{Inner product formulas for Gaussian states}

The Hilbert-Schmidt inner product of density operators~$\rho,\rho'\in\cB(\cH_n)$ is connected with the~$L^2$-inner product of their characteristic functions~$\chi_\rho,\chi_{\rho'}$ by Parseval's relation
\begin{align}
    \label{eq:parseval}
    \tr\left( \rho^\dagger \rho' \right) = \frac{1}{(2\pi)^{n}}  \int_{\mathbb{R}^{2n}} d^{2n} \xi \, \overline{\chi_\rho(\xi)} \chi_{\rho'}(\xi) \ .
\end{align}
This allows us to compute the trace of two Gaussian states~$\rho(\cov_1, d_1)$ and~$\rho(\cov_2, d_2)$, with~$\cov_1,\cov_2\in\mathsf{Mat}_{2n\times2n} (\bbR)$ two covariance matrices and~$d_1, d_2\in\bbR^{2n}$. This amounts to evaluating a Gaussian integral, giving \cite{PhysRevA.61.022306,PhysRevLett.115.260501}
\begin{align}
    \label{eq:trace2gaussians}
    &\tr\left( \rho(\cov_1, d_1) \rho(\cov_2, d_2) \right) \\
    &= \frac{\exp\left( -(d_2-d_1)^T (\cov_1+\cov_2)^{-1} (d_2-d_1) \right)}{\sqrt{\det\left((\cov_1+\cov_2)/2\right)}}  \ .
\end{align}
For pure states~$\rho(\cov_1, d_1) = |\psi_1\rangle \langle\psi_1 |$ and~$\rho(\cov_2, d_2) = |\psi_2\rangle \langle\psi_2 |$ this formula gives the inner product squared~$| \langle \psi_1, \psi_2 \rangle|^2$. Notice that describing Gaussian states by their covariance matrix and displacement vector, e.g., using the characteristic function given in~\cref{eq:chiGaussianState}, does not allow computing the inner product~$ \langle \psi_1, \psi_2 \rangle$ (i.e., the inner product itself, not its absolute value). This is because the covariance matrix and displacement vector are functions of the density operator~$\rho=|\psi\rangle\langle \psi|$ (recall the definitions in~\cref{eq:d,eq:V}), not of the state vector~$|\psi\rangle$.

On the other hand, single-mode coherent states~$|\alpha\rangle,\alpha\in\bbC$ can be written explicitly in the number state basis as
\begin{align}
    \label{eq:coherent-number-basis}
    | \alpha \rangle &= \exp\left(-|\alpha|^2/2\right) \sum_{j=0}^\infty \frac{\alpha^j}{\sqrt{j!}} | j \rangle  \ .
\end{align}
It follows that the overlap between a coherent state~$|\alpha\rangle,\alpha\in\bbC^n$ and the vacuum~$|0\rangle$ is
\begin{align}
    \langle 0, \alpha\rangle = \exp\left( -\frac{1}{2} |\alpha|^2 \right) \ .
\end{align}
More generally, the overlap between two coherent states~$|\alpha_1\rangle$ and~$|\alpha_2\rangle$ for~$\alpha_1,\alpha_2\in\bbC^n$ is
\begin{align}
\label{eq:overlapCoherentSt}
    \langle \alpha_1, \alpha_2 \rangle &= \exp\left( -\frac{1}{2}|\alpha_1|^2 - \frac{1}{2}|\alpha_2|^2 + \overline{\alpha_1}^T \alpha_2  \right) \ .
\end{align}

\section{Tracking relative phases in linear optics}
\label{sec:keeptrackphases}

In this section, we establish basic building blocks for our algorithms. In particular, we provide the following:
\begin{enumerate}[(i)]
    \item\label{it:desc} A classical description of a Gaussian state~$|\psi\rangle\in\mathrm{Gauss}_n$ which includes information about its phase. 
    \item\label{it:alg-overlap} A polynomial-time classical algorithm~$\Aoverlap$ which computes the overlap (including the phase, not just its absolute value) between any two Gaussian states from their classical descriptions.
    \item\label{it:alg-evolution} A polynomial-time classical algorithm~$\Aunitary$ which updates the classical description of a Gaussian state when applying a Gaussian unitary operation (a displacement, a phase shit, a beamsplitter or a single-mode squeezing operation).
    \item\label{it:alg-evolution} A polynomial-time classical algorithm~$\Ameasure$ which updates the classical description of a Gaussian state when applying a heterodyne measurement.
\end{enumerate}

In~\cref{sec:extended-description-gaussian} we address Item~\eqref{it:desc}, in~\cref{sec:overlap} we address Item~\eqref{it:alg-overlap} and in~\cref{sec:algorithms-gaussian} we address Item~\eqref{it:alg-evolution}.

\subsection{Extended description of a Gaussian state}
\label{sec:extended-description-gaussian}

Here we give a description of a Gaussian state~$|\psi\rangle \in \mathrm{Gauss}_n$ which includes information about its phase. 
While this phase is irrelevant when considering a single state, it will be important when considering superpositions, as relative phases are physically observable.
Our proposal is to fix the phase of the state~$|\psi\rangle$ relative to a reference state with which~$|\psi\rangle$ has non-zero overlap. We use coherent states~$|\alpha\rangle,\alpha\in\bbC^n$ as reference states. The overlap~$|\langle \alpha, \psi \rangle|^2$ is maximal when choosing as the reference state the coherent state~$|\alpha\rangle$ with the same displacement vector as the state~$|\psi\rangle$.

\begin{definition}[Extended description of a Gaussian state]
    \label{def:description}
    Let~$|\psi\rangle \in \mathrm{Gauss}_n$ be a Gaussian state. We call a tuple
    \begin{align}
        \desc = \left( \cov, \alpha,  r  \right) \in \mathsf{Mat}_{2n\times 2n}(\bbR) \times \bbC^{n} \times \bbC 
    \end{align}
    the description of the state~$|\psi\rangle$ if the following two conditions hold:
    \begin{enumerate}[(i)]
        \item \label{it:DefDescCond1} $\cov = \cov( \psi )$ and~$\hat{d}(\alpha)=d(\psi)$ are respectively the covariance matrix and the displacement vector of~$|\psi\rangle$.
        \item \label{it:DefDescCond3}
        $r = \langle \alpha, \psi \rangle$, i.e., $r$ is the overlap of the state~$|\psi\rangle$ with the coherent state~$|\alpha\rangle$ that has the same displacement.
    \end{enumerate}
\end{definition}

We denote by~$\mathsf{Desc}_n$ the set of all descriptions of~$n$-mode pure Gaussian states. A description in~$\mathsf{Desc}_n$ uniquely fixes a Gaussian state and each Gaussian state~$|\psi\rangle \in\mathrm{Gauss}_n$ admits a unique description. Hence, we have a bijective function 
\begin{align}
    \begin{gathered}
    \begin{aligned}
    \label{eq:psifunctiondescr}
    \psi  :\quad \mathsf{Desc}_n \quad&\rightarrow\quad  \mathrm{Gauss}_n\\
    \Delta \quad&\mapsto\quad  \psi(\Delta)
    \end{aligned}
    \end{gathered} \ .
\end{align}

We establish a lower bound on the absolute value of the squared overlap~$|r|^2$ in~\cref{def:description} as a corollary of the following lemma.
We define
\begin{align}
    \var_\psi(R_j) = \langle \psi, R_j^2 \psi\rangle - \langle \psi, R_j \psi\rangle^2 
    \quad\text{for}\quad j\in[2n] \ .
\end{align}

\begin{lemma}
\label{lem:fidelitybound}
Let~$|\psi\rangle \in \mathrm{Gauss}_n$ be a pure~$n$-mode Gaussian state. Then
\begin{align}
\label{eq:fidelitybound}
&\sup_{\alpha\in\mathbb{C}^n}  |\langle\alpha,\psi\rangle|^2 = \max_{\alpha\in\mathbb{C}^n} 
|\langle\alpha,\psi\rangle|^2 \\
&\qquad\geq \frac{1}{\sqrt{ \frac{1}{2} \sum_{j=1}^{n} \left( \var_{\psi}(Q_j) + \var_{\psi}(P_j) - 1 \right) + 1}} \ .
\end{align}
\end{lemma}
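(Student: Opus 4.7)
The plan is to compute $|\langle\alpha,\psi\rangle|^2$ in closed form, identify the maximiser explicitly, and then reduce the claim to a scalar inequality on the covariance matrix~$\Gamma$.

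First, I recognise that the coherent state $\ket{\alpha}\bra{\alpha}$ is itself Gaussian with covariance matrix~$I$ and displacement~$\hat{d}(\alpha)$. Writing $|\langle\alpha,\psi\rangle|^2=\tr(\ket{\alpha}\bra{\alpha}\ket{\psi}\bra{\psi})$ and applying the two-Gaussian overlap formula~\eqref{eq:trace2gaussians} with the parameter pairs $(I,\hat{d}(\alpha))$ and $(\Gamma,d)$ yields
\begin{align*}
|\langle\alpha,\psi\rangle|^2 = \frac{1}{\sqrt{\det((I+\Gamma)/2)}}\exp\!\left(-(d-\hat{d}(\alpha))^T(I+\Gamma)^{-1}(d-\hat{d}(\alpha))\right).
\end{align*}
Since $(I+\Gamma)^{-1}\succ 0$, the exponential factor is at most~$1$, with equality iff $\hat{d}(\alpha)=d$. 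Because $\hat{d}:\bbC^n\to\bbR^{2n}$ is a bijection~\eqref{eq:psifunctiondescr}, there is a unique $\alpha_*\in\bbC^n$ with $\hat{d}(\alpha_*)=d$, so the supremum is attained at $\alpha_*$, proving the claimed equality of sup and max and giving
\begin{align*}
\max_{\alpha\in\bbC^n}|\langle\alpha,\psi\rangle|^2 = \frac{1}{\sqrt{\det((I+\Gamma)/2)}}.
\end{align*}

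Second, I rewrite the variance sum on the right-hand side in terms of~$\Gamma$. From the definition~\eqref{eq:V}, $\Gamma_{jj}=2\Var_\psi(R_j)$ with $R=(Q_1,P_1,\ldots,Q_n,P_n)$, so $\sum_{j=1}^n(\Var_\psi(Q_j)+\Var_\psi(P_j))=\tr(\Gamma)/2$. The RHS of the lemma therefore equals $1/\sqrt{\tr(\Gamma)/4-n/2+1}$, and the statement reduces to the determinant bound
\begin{align*}
\det\!\left(\frac{I+\Gamma}{2}\right)\leq \frac{1}{4}\tr(\Gamma)-\frac{n}{2}+1.
\end{align*}

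The main obstacle, where I expect to spend the bulk of the effort, is this final inequality. Here purity of~$\ket{\psi}$ is essential: by Williamson's theorem together with Bloch--Messiah~\eqref{eq:williamson}--\eqref{eq:EulerDecomposition}, $\Gamma=KZ^2K^T$ with $K\in Sp(2n)\cap O(2n)$ and $Z^2=\diag(z_1^2,1/z_1^2,\ldots,z_n^2,1/z_n^2)$, $z_j\geq 1$, so the eigenvalues of~$\Gamma$ come in reciprocal pairs. Setting $z_j=e^{r_j}$ with $r_j\geq 0$, each pair contributes a factor $(1+z_j^2)(1+1/z_j^2)/4=\cosh^2(r_j)$ to the determinant and a contribution $z_j^2+1/z_j^2-2=4\sinh^2(r_j)$ to $\tr(\Gamma)-2n$. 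The bound thus becomes a concrete scalar inequality comparing $\prod_{j}\cosh^2(r_j)$ to $1+\sum_j\sinh^2(r_j)$. The single-mode case $n=1$ holds with equality via the identity $\cosh^2(r)=1+\sinh^2(r)$, so the leverage for $n\geq 2$ must come from exploiting the paired symplectic structure rather than from a generic matrix inequality; I would attempt this using an AM--GM or log-concavity argument applied pair-by-pair to the reciprocal eigenvalues.
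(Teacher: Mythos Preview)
Your reduction coincides with the paper's approach: both evaluate the overlap via the two-Gaussian formula~\eqref{eq:trace2gaussians}, identify the maximiser as the coherent state with matching displacement, and reduce the claim to the determinant inequality $\det\!\big((I+\Gamma)/2\big)\leq \tfrac14\tr\Gamma-\tfrac{n}{2}+1$, which in your parametrisation $a_j:=\cosh^2(r_j)\geq 1$ becomes $\prod_{j=1}^n a_j\leq 1+\sum_{j=1}^n(a_j-1)$.

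The final step, however, cannot be completed: this scalar inequality is \emph{false} for every $n\geq 2$. Already for $n=2$ the difference factors as $a_1a_2-(a_1+a_2-1)=(a_1-1)(a_2-1)\geq 0$, strictly positive whenever both modes are squeezed; concretely, for $\psi=S(r)\ket{0}\otimes S(r)\ket{0}$ one gets $\max_\alpha|\langle\alpha,\psi\rangle|^2=1/\cosh^2(r)$ while the asserted lower bound is $1/\sqrt{\cosh(2r)}$, and $\cosh^4(r)\leq 2\cosh^2(r)-1$ holds only at $r=0$. Your own observation that the $n=1$ case is an \emph{equality} is the warning sign: the left side is multiplicative across modes while the right side grows only additively, so no AM--GM or log-concavity argument can rescue it. The paper's proof fails at exactly the same spot --- it asserts that $\sup\{xy:x,y\geq 4,\ x+y=\varepsilon\}=4(\varepsilon-4)$, but by AM--GM a product with fixed sum is maximised at the centre $x=y=\varepsilon/2$, not at the boundary; the quoted value is in fact the infimum. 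The gap is therefore not in your strategy but in the lemma as stated.
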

\begin{proof}
See~\cref{app:proof-r-lowerbound}.
\end{proof}

The field operators satisfy Heisenberg's uncertainty relation
\begin{align}
    \label{eq:uncertaintyQP} 
    \var_\psi(Q_j) \cdot \var_\psi(P_j) &\geq  \frac{1}{4} 
    \quad\text{for}\quad
    j\in[n] \ .
\end{align}
This together with the arithmetic and geometric means inequality
\begin{align}
   \frac{1}{2}(a+b) \geq \sqrt{ab}
   \quad\text{for}\quad
   a,b\in\bbR
\end{align}
gives
\begin{align}
    \var_\psi(Q_j) + \var_\psi(P_j) \geq 1 
    \quad\text{for}\quad
    j\in[n] \ .
\end{align}
It follows that the rhs of~\cref{eq:fidelitybound} in~\cref{lem:fidelitybound} is at most 1, with equality when~$|\psi\rangle$ is a coherent state, a state of minimal uncertainty.

\begin{corollary}
    \label{cor:r-lowerbound}
    Consider the description~$\Delta = (\Gamma,\alpha,r) \in  \mathsf{Mat}_{2n\times 2n}(\bbR) \times \bbC^{n} \times \bbC$ of a Gaussian state~$|\psi\rangle\in \mathrm{Gauss}_n$. Then,
    \begin{align}
        \label{eq:r-lowerbound}
        |r|^2 \geq \frac{1}{\sqrt{ \frac{1}{2} \sum_{j=1}^{n} \left( \var_{\psi}(Q_j) + \var_{\psi}(P_j) - 1 \right) + 1}} \ .
    \end{align}
    In particular, for~$|\psi\rangle$ a centered state we have
    \begin{align}
        \label{eq:r-lowerbound-centered}
        |r|^2 \geq \frac{1}{\sqrt{N+1}} \ .
    \end{align}
\end{corollary}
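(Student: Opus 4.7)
The plan is to derive the corollary as an immediate consequence of \cref{lem:fidelitybound}. By condition~\eqref{it:DefDescCond3} of \cref{def:description}, the scalar $r$ in the description is defined as $r = \langle \alpha, \psi\rangle$, where $|\alpha\rangle$ is the coherent state whose displacement vector $\hat{d}(\alpha)$ coincides with $d(\psi)$. As pointed out in the discussion preceding \cref{def:description}, the squared overlap $|\langle \beta,\psi\rangle|^2$ (viewed as a function of $\beta \in \mathbb{C}^n$) is maximized precisely when $|\beta\rangle$ has the same displacement as $\psi$, so $|r|^2 = \max_{\beta \in \mathbb{C}^n}|\langle \beta,\psi\rangle|^2$. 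Applying \cref{lem:fidelitybound} therefore yields~\eqref{eq:r-lowerbound} directly; no further work is required for the general inequality.

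For the centered case, I would specialize the bound to $d(\psi)=0$. Then $\langle \psi, R_j \psi\rangle = 0$ for $j \in [2n]$, so $\var_\psi(Q_j) = \langle \psi, Q_j^2 \psi\rangle$ and $\var_\psi(P_j) = \langle \psi, P_j^2 \psi\rangle$. From the definition $Q_j = (a_j + a_j^\dagger)/\sqrt{2}$, $P_j = -i(a_j - a_j^\dagger)/\sqrt{2}$ together with $[a_j, a_j^\dagger]=I$, a short manipulation gives the operator identity
\begin{align}
Q_j^2 + P_j^2 = 2 a_j^\dagger a_j + I.
\end{align}
Taking the expectation in $|\psi\rangle$ and summing over $j \in [n]$ then yields
\begin{align}
\sum_{j=1}^{n} \bigl( \var_\psi(Q_j) + \var_\psi(P_j) - 1 \bigr)
= 2 \Bigl\langle \psi, \sum_{j=1}^n a_j^\dagger a_j\, \psi \Bigr\rangle = 2N,
\end{align}
where $N$ denotes the mean photon number of $|\psi\rangle$. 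Substituting this into~\eqref{eq:r-lowerbound} gives $\tfrac{1}{2}(2N) + 1 = N+1$ inside the square root, which is exactly~\eqref{eq:r-lowerbound-centered}.

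There is no real obstacle here: the corollary is essentially a rephrasing of \cref{lem:fidelitybound} using the description-data, plus the short computation that rewrites the variance sum as twice the photon number for centered states. The only point requiring any care is the identification $|r|^2 = \max_\beta |\langle \beta,\psi\rangle|^2$, which relies on the matching-displacement maximization property that is established (implicitly) in the proof of \cref{lem:fidelitybound}.
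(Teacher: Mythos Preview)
Your proposal is correct and follows essentially the same approach as the paper: identify $|r|^2$ with the maximum squared overlap via the matching-displacement property (which the paper establishes in the proof of \cref{lem:fidelitybound}), then invoke \cref{lem:fidelitybound} for~\eqref{eq:r-lowerbound}, and for the centered case rewrite the variance sum as $2N$. The paper's only cosmetic difference is that it expresses the centered computation through the Hamiltonian, writing $\sum_j(\var_\psi(Q_j)+\var_\psi(P_j)-1)=\langle\psi,H\psi\rangle-2n=2N$, which is equivalent to your use of the operator identity $Q_j^2+P_j^2=2a_j^\dagger a_j+I$.
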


\begin{proof}
    We have
    \begin{align}
        r = \langle \alpha, \psi \rangle = 
        \max_{\beta\in\mathbb{C}^n} |\langle\beta,\psi\rangle|^2 \ ,
    \end{align}
    see the proof of~\cref{cor:r-lowerbound} in~\cref{app:proof-r-lowerbound}. The claim in~\cref{eq:r-lowerbound} then follows from~\cref{lem:fidelitybound}.

    Consider that~$|\psi\rangle$ is centered, i.e.,~$d(\psi)=0$. Then, we have
    \begin{align}
        \sum_{j=1}^{n} \left( \var_{\psi}(Q_j) + \var_{\psi}(P_j) - 1 \right) 
        &= \langle \psi, H\psi \rangle - 2n = 2N
    \end{align}
    and the claim in~\cref{eq:r-lowerbound-centered} follows.
\end{proof}

The description of a pure Gaussian state~$|\psi\rangle \in \mathrm{Gauss}_n$ given in~\cref{def:description} is a minimal extension of the standard description~$(\Gamma(\psi), d(\psi))$ in terms of the covariance matrix and displacement vector. Additionally, we simply give the overlap~$r$ with the chosen reference coherent state which suffices to fix the phase of~$|\psi\rangle$. 
The overlap~$r$ is non-zero for states with bounded variance / occupation number (see~\cref{cor:r-lowerbound}), in which case our algorithms work with finite-precision arithmetic.\newline

\subsection{Computing the overlap between Gaussian states}
\label{sec:overlap}

In this section we give an algorithm~$\Aoverlap$ which computes overlaps between Gaussian states. 
This algorithm uses an expression given in~\cref{prop:trace3statesWWeylPsi} which relates inner products between three Gaussian states and a displacement operator~$D(\alpha),\alpha\in\bbC^n$. This is a generalization of a formula derived in Ref.~\cite{PhysRevA.61.022306} for the case of centered states and~$\alpha$ set to zero.

\begin{lemma}
    \label{prop:trace3statesWWeylPsi}
    Consider three pure Gaussian states~$\psi_j \in\mathrm{Gauss}_n,j\in[3]$ with respective covariance matrices~$\cov_j\in\mathsf{Mat}_{2n\times 2n}(\mathbb{R}),j\in[3]$ and displacement vectors~$d_j\in\bbR^{2n},j\in[3]$. Let~$\alpha \in \bbC^{n}$. Then
    \begin{align}
        &\langle \psi_3, D(\alpha) \psi_1 \rangle \langle \psi_1, \psi_2 \rangle \langle \psi_2, \psi_3 \rangle\\
        \label{eq:trace3statesWWeylPsi} &= \frac{\exp\left( - {d_1'}^T W_1 {d_1'} - {d_2'}^T W_2 {d_2'} + {d_1'}^T W_3 d_2' - \hat{d}(\alpha)^T \Omega^T \cov_5 \Omega \hat{d}(\alpha) - i\hat{d}(\alpha)^T \Omega^T ( W_4 d_1'  + W_5 d_2' + d_3 ) \right)}{\sqrt{\det(\Omega^T(\cov_2 +\cov_3 )\Omega/2) \det(\Omega^T(\cov_1 +\cov_4 )\Omega/2)}}
    \end{align}
    with
    \begin{align}
        d_1' &= d_1 - d_3 \ , \\ 
        d_2' &= d_2 - d_3 \ , \\ 
        \cov_4 &= \cov_3 - (\cov_3+i\Omega)(\cov_2+\cov_3)^{-1} (\cov_3-i\Omega) \ , \\
        \label{eq:Gamma5}\cov_5 &= \frac{1}{4} \cov_1 - \frac{1}{4} (\cov_1 -i\Omega) (\cov_1  + \cov_4 )^{-1} (\cov_1  + i\Omega) \ , \\
        \label{eq:W1}W_1 &= (\cov_1  + \cov_4 )^{-1} \ , \\
        \label{eq:W2}W_2 &= (\cov_2 +\cov_3 )^{-1} + (\cov_2 +\cov_3 )^{-1} (\cov_3 -i\Omega) (\cov_1  + \cov_4 )^{-1} (\cov_3 +i\Omega) (\cov_2 +\cov_3 )^{-1} \ , \\
        \label{eq:W3}W_3 &= 2 (\cov_1  + \cov_4 )^{-1} (\cov_3 +i\Omega)  (\cov_2 +\cov_3 )^{-1} \ , \\
        \label{eq:W4}W_4 &= I - (\cov_1  - i\Omega) (\cov_1  + \cov_4 )^{-1} \ , \\
        \label{eq:W5} W_5 &= (\cov_1  - i\Omega) (\cov_1  + \cov_4 )^{-1} (\cov_3 +i\Omega) (\cov_2 +\cov_3 )^{-1} \ .
\end{align}
\end{lemma}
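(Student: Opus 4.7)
The plan is to reduce the triple overlap to a trace. Since each $|\psi_j\rangle\langle\psi_j| = \rho(\Gamma_j, d_j)$ is a rank-one projector, cyclicity of the trace gives
\begin{align*}
\tr\bigl(D(\alpha)\,\rho(\Gamma_1,d_1)\rho(\Gamma_2,d_2)\rho(\Gamma_3,d_3)\bigr) = \langle\psi_1,\psi_2\rangle\,\langle\psi_2,\psi_3\rangle\,\langle\psi_3, D(\alpha)\psi_1\rangle,
\end{align*}
so it suffices to evaluate the left-hand side using the explicit Gaussian form of each characteristic function in~\cref{eq:chiGaussianState}.

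The core calculation uses the Weyl inversion formula~\cref{eq:WeylTransform} to express each $\rho_j$ as a Gaussian-weighted integral of displacement operators. Substituting into $\tr(D(\alpha)\rho_1\rho_2\rho_3)$ yields a sixfold Gaussian integral over $\xi_1,\xi_2,\xi_3\in\bbR^{2n}$ whose integrand factors into three Gaussian characteristic functions times $\tr\bigl(D(\alpha)D(-\hat d^{-1}\xi_1)D(-\hat d^{-1}\xi_2)D(-\hat d^{-1}\xi_3)\bigr)$. I would collapse the four displacement operators into a single one (times a scalar phase) by iterated application of the Weyl composition rule~\cref{eq:Wprop2}. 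Taking the trace then produces a Dirac delta enforcing $\hat d^{-1}\xi_1 + \hat d^{-1}\xi_2 + \hat d^{-1}\xi_3 = \alpha$, which eliminates $2n$ integration variables. The remaining $4n$-dimensional Gaussian integral is then evaluated in stages: first integrating out the variable associated with $\rho_3$ produces the Schur-complement expression defining $\Gamma_4$, and a second integration over the variable associated with $\rho_1$ produces $\Gamma_5$ together with the weight matrices $W_1,\ldots,W_5$. The determinant prefactors arise from repeated use of the Gaussian integral formula $\int e^{-x^T M x}\,d^{2n}x = \pi^n/\sqrt{\det M}$, and the factors $\Omega^T(\cdots)\Omega$ reflect the change of variables between the $\xi$-coordinates in which~\cref{eq:chiGaussianState} is written and the $\hat d^{-1}$-coordinates of the displacement arguments.

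The conceptual difficulty is minor; the main obstacle will be careful bookkeeping. Each application of the Weyl composition rule contributes an integration-variable-dependent phase that must be tracked through the two successive quadratic completions, and the choice of integration order is what commits one to the particular form of $\Gamma_4$ and $\Gamma_5$ — a different order would give an equivalent but differently-written expression. Verifying that the cross terms in the exponent collapse to exactly $-d_1^T W_1 d_1 - d_2^T W_2 d_2 + d_1^T W_3 d_2$ together with the $\hat d(\alpha)$-dependent couplings through $W_4$ and $W_5$ requires methodical matrix manipulations using the Woodbury-type identities implicit in the definitions of the $W_k$. It is natural at the outset to translate so that $\psi_3$ is centered (which is why the auxiliary vectors $d_1' = d_1 - d_3$ and $d_2' = d_2 - d_3$ are introduced), carry out the integrations in that frame, and then undo the shift at the end by conjugating with $D(-\hat d^{-1}(d_3))$, whose action on the remaining displacement operator contributes the linear $\hat d(\alpha)^T\Omega^T d_3$ coupling in the final exponent.
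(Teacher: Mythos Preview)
Your proposal is correct and follows essentially the same route as the paper: reduce to $\tr(D(\alpha)\rho_1\rho_2\rho_3)$, center $\rho_3$ by conjugation with a displacement, and evaluate the resulting nested Gaussian integrals so that the first integration produces $\Gamma_4$ and the second produces $\Gamma_5$ and the $W_k$. The only organizational difference is that the paper splits the trace via Parseval's relation~\cref{eq:parseval} into $\overline{\chi_{\rho_1 D(-\alpha)}}$ against $\chi_{\rho_2\rho_3}$ and computes the latter separately as a reusable lemma, whereas you Weyl-expand all three $\rho_j$ at once and let the delta function do the first reduction; these are the same Gaussian integrals in a different bookkeeping order.
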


\begin{proof}
    See~\cref{app:proof-trace3statesWWeylPsi}.
\end{proof}

For convenience, we introduce an algorithmic routine~$\Aoverlaptriple$ which evaluates~\cref{eq:trace3statesWWeylPsi} in~\cref{prop:trace3statesWWeylPsi}. This routine is one of the main ingredients in the algorithms~$\Aoverlap$ and $\Ameasure$.
The algorithm~$\Aoverlaptriple$ takes covariance matrices~$\cov_1, \cov_1, \cov_2$ and displacement vector~$d_1,d_2,d_3$ of three Gaussian states~$\psi_1, \psi_2, \psi_3$ as well as overlaps~$u = \langle \psi_3, D(\lambda) \psi_1 \rangle, v = \langle \psi_1, \psi_2 \rangle$ and computes the overlap~$\langle \psi_2, \psi_3\rangle$. It has runtime~$O(n^3)$ as it consists of basic linear algebra operations. We give pseudocode for the algorithm. 

A diagrammatic illustration of what the subroutine~$\Aoverlaptriple$ achieves is given in~\cref{fig:overlaptriple}. These diagrams will be useful to construct/analyze our main algorithms.

\begin{figure}[!h]
\raggedright
\par\rule{\columnwidth}{1pt}
\textbf{Algorithm~$\Aoverlaptriple$}
\vspace{-2mm}
\par\rule{\columnwidth}{1pt}
\textbf{Input: }{$\cov_1,\cov_2,\cov_3,d_1, d_2, d_3,u,v,\lambda$} \\
where $
\begin{cases}
\text{$\cov_j \in \msf{Mat}_{2n\times2n}(\bbR)$ covariance matrix of} \\
\qquad\text{a pure Gaussian state~$|\psi_j\rangle$ for~$j\in[3]$} \\
\text{$d_j \in \bbR^{2n}$ for~$j\in[3]$} \\
\text{$\lambda\in \bbC^{n}$ such that~$\langle \psi_3, D(\lambda) \psi_1\rangle \neq 0$} \\
\text{$u=\langle \psi_3, D(\lambda) \psi_1\rangle $ and~$v=\langle \psi_1, \psi_2\rangle\neq 0$} 
\end{cases}$ \\
\textbf{Output: }{$o\in\bbC$} 
\begin{algorithmic}[1]
         \State{$T \leftarrow \textrm{ evaluate the rhs of~\cref{eq:trace3statesWWeylPsi} in~\cref{prop:trace3statesWWeylPsi}}$}
         \State{$o\leftarrow u^{-1} v^{-1} T$}
         \Comment{compute the overlap~$\langle \psi_2, \psi_3 \rangle$}
         \State{\textbf{return}~$o$} 
\end{algorithmic}
\vspace{-3mm}
\par\rule{\columnwidth}{1pt}
%\label{fig:Aoverlaptriple}
\end{figure}

\begin{figure}[!b]
\begin{subfigure}[t]{0.485\textwidth}
    \centering
    \includegraphics[height=5cm]{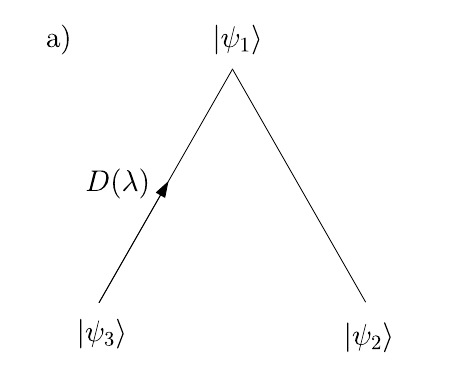}
    %\label{fig:overlaptriple-a}
\end{subfigure}
\hfill
\begin{subfigure}[t]{0.485\textwidth}
\centering
    \includegraphics[height=5cm]{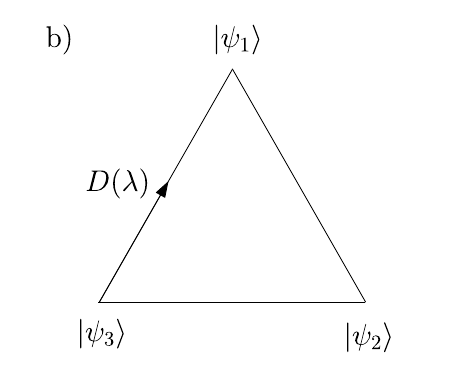}
    %\label{fig:overlaptriple-b}
\end{subfigure}
\caption{\justifying A graphical representation of the algorithm~$\Aoverlaptriple$. Solid lines represent inner products between the states at the vertices that are given / have been computed and are non-zero. Inner products of the form~$\langle \psi_3, D(\lambda) \psi_1 \rangle$ are represented by arrows. a) The input to the algorithm~$\Aoverlaptriple$ consists of covariance matrices and displacements of three Gaussian states~$\psi_1,\psi_2,\psi_3$ together with non-zero overlaps~$u = \langle \psi_3, D(\lambda) \psi_1 \rangle, v = \langle \psi_1, \psi_2 \rangle$. b) Applying~$\Aoverlaptriple$ provides the inner product~$\langle\psi_2,\psi_3\rangle$. In this diagrammatic representation, this completes the triangle with vertices~$\psi_1,\psi_2,\psi_3$.}
\label{fig:overlaptriple}
\end{figure}

We give an efficient algorithm~$\Aoverlap$ which computes inner products~$\langle \psi(\Delta_1),\psi(\Delta_2)\rangle$ from descriptions~$(\Delta_1,\Delta_2)\in\Desc_n \times \Desc_n$.
We give pseudocode for the algorithm, illustrate it in~\cref{fig:Aoverlap} and show the associated claims in~\cref{lem:overlap}.

\begin{figure}[H]
\raggedright
\par\rule{\columnwidth}{1pt}
\textbf{ Algorithm $\Aoverlap$}
\vspace{-2mm}
\par\rule{\columnwidth}{1pt}
\textbf{Input: }{$\desc_1, \desc_2$} \\
where $
    \desc_1=(\cov_1,\alpha_1,r_1), \desc_2=(\cov_2,\alpha_2,r_2)\in\Desc_n
$ \\
\textbf{Output: }{$w \in \bbC$} 
\begin{algorithmic}[1] 
    \State{\label{alg:triplematricesAconvertnew}$(\cov'_1,\cov'_2,\cov'_3)\leftarrow (I, \cov_1, \cov_2)$}
    
    \Comment{covariance matrices of~$(|\alpha_1\rangle, \psi(\desc_1), \psi(\desc_2))$}
    \State{\label{alg:tripledispvec}$(d'_1, d'_2, d'_3) \leftarrow (\hat{d}(\alpha_1), \hat{d}(\alpha_1), \hat{d}(\alpha_2))$}
    
    \Comment{displacement vectors of~$(|\alpha_1\rangle, \psi(\desc_1), \psi(\desc_2))$}
    \State{\label{it:OalgGamma}$\lambda\leftarrow \alpha_1-\alpha_2$}
    \State{\label{alg:OVu}$u\leftarrow e^{-i \im(\alpha_1^T \overline{\alpha_2})} \overline{r_2} $}
    \Comment{$u=\langle  \psi(\Delta_2), D(\lambda) \alpha_1 \rangle$}
    \State{\label{alg:OVv}$v\leftarrow r_1$}
    \Comment{$v=\langle \alpha_1, \psi(\desc_1) \rangle$}
    \State{\label{alg:OVw}$w\leftarrow \Aoverlaptriple(\cov'_1,\cov'_2,\cov'_3, d'_1, d'_2, d'_3, u, v, \lambda)$} 
    
    \Comment{compute ~$\langle \psi(\desc_1), \psi(\desc_2) \rangle$}
    \State \textbf{return}~$w$ 
\end{algorithmic}
\vspace{-3mm}
\par\rule{\columnwidth}{1pt}
%\label{alg:Aoverlap}}
\end{figure}

\begin{theorem}
\label{lem:overlap}
The algorithm~$\Aoverlap:\Desc_n\times \Desc_n\rightarrow\mathbb{C}$ runs in time~$O(n^3)$. It satisfies
\begin{align}
\Aoverlap(\desc_1,\desc_2)&=\langle \psi(\desc_1),\psi(\desc_2)\rangle \label{eq:overlapidentitydonedtwo}
\end{align}
for all~$ \desc_1,\desc_2\in\Desc_n\ $.
\end{theorem}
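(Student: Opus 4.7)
The plan is to reduce correctness to the subroutine $\Aoverlaptriple$ applied to the specific triple $(\psi_1,\psi_2,\psi_3)=(|\alpha_1\rangle,\psi(\desc_1),\psi(\desc_2))$. Since $\Aoverlaptriple$ is designed to return $\langle\psi_2,\psi_3\rangle$ given the covariance matrices, displacement vectors, a parameter $\lambda\in\bbC^n$, and the overlaps $u=\langle\psi_3,D(\lambda)\psi_1\rangle$ and $v=\langle\psi_1,\psi_2\rangle$, its output in our setting will be precisely $\langle\psi(\desc_1),\psi(\desc_2)\rangle$, which is exactly~\cref{eq:overlapidentitydonedtwo}. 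I would therefore proceed by verifying line by line that the inputs passed to $\Aoverlaptriple$ on line~\ref{alg:OVw} match the specification shown in~\cref{fig:Aoverlaptriple}.

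The assignments of covariance matrices and displacement vectors on lines~\ref{alg:triplematricesAconvertnew}--\ref{alg:tripledispvec} are immediate: for the coherent state $|\alpha_1\rangle$ they follow from~\cref{eq:dispcoherent}, and for $\psi(\desc_1)$ and $\psi(\desc_2)$ they follow from condition~\cref{it:DefDescCond1} of~\cref{def:description}. The identity $v=r_1=\langle\alpha_1,\psi(\desc_1)\rangle$ is immediate from condition~\cref{it:DefDescCond3} of the same definition. The main bookkeeping step is verifying the formula for $u$ on line~\ref{alg:OVu}: expanding $D(\lambda)|\alpha_1\rangle$ via the Weyl relation~\cref{eq:Wprop2b} with $\lambda=\alpha_1-\alpha_2$ yields $D(\lambda)|\alpha_1\rangle=\exp(-i\im(\alpha_1^T\overline{\alpha_2}))|\alpha_2\rangle$, where one uses $\alpha_1^T\overline{\alpha_1}\in\bbR$ together with $\im(\overline{z})=-\im(z)$ to simplify the resulting phase. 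Taking the inner product with $\psi(\desc_2)$ and again invoking condition~\cref{it:DefDescCond3} of~\cref{def:description} then gives $u=\exp(-i\im(\alpha_1^T\overline{\alpha_2}))\overline{r_2}$, matching the assignment on line~\ref{alg:OVu}.

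With all inputs verified, correctness of $\Aoverlap$ follows from the correctness of $\Aoverlaptriple$, which is itself an immediate consequence of~\cref{prop:trace3statesWWeylPsi}. One should also record that the preconditions $u,v\neq 0$ needed for the divisions inside $\Aoverlaptriple$ are met: both $r_1$ and $r_2$ are overlaps of a pure Gaussian state with its matching coherent state and are therefore strictly nonzero by~\cref{cor:r-lowerbound}.

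For the runtime, lines~\ref{alg:triplematricesAconvertnew}--\ref{alg:OVv} are dominated by copying $2n\times 2n$ real matrices at cost $O(n^2)$, together with a single $O(n)$ inner product to form the phase factor. Line~\ref{alg:OVw} invokes $\Aoverlaptriple$, which evaluates the right-hand side of~\cref{eq:trace3statesWWeylPsi}; this requires a constant number of inversions, multiplications, and determinants of $2n\times 2n$ real (or complex) matrices, each at $O(n^3)$ using standard linear algebra, for a total of $O(n^3)$. The overall runtime is therefore $O(n^3)$. The main obstacle I anticipate is purely bookkeeping, namely carefully tracking the phase in the derivation of $u$: the sign of the $\im(\cdot)$ term and the conjugation that arises when passing from $\langle\alpha_2,\psi(\desc_2)\rangle=r_2$ to $\langle\psi(\desc_2),\alpha_2\rangle=\overline{r_2}$.
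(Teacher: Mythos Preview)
Your proof is correct and follows essentially the same approach as the paper's own proof: both identify the triple $(\psi_1,\psi_2,\psi_3)=(|\alpha_1\rangle,\psi(\desc_1),\psi(\desc_2))$, verify line by line that the inputs to $\Aoverlaptriple$ match its specification (using the Weyl relation~\cref{eq:Wprop2b} for the phase in $u$), and conclude via the correctness and $O(n^3)$ runtime of that subroutine. Your version is slightly more thorough in that you explicitly record the preconditions $u,v\neq 0$ via~\cref{cor:r-lowerbound}, which the paper leaves implicit.
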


\begin{proof}
Let~$\desc_j=(\cov_j,\alpha_j,r_j)\in\Desc_n$ for~$j\in[2]$. 
Consider the triple of pure states
\begin{align}
(\psi_1,\psi_2,\psi_3)&=(|\alpha_1\rangle,\psi(\Delta_1),\psi(\Delta_2))\ .
\end{align}
The matrices~$\cov_j', j\in[3]$ in Line~\ref{alg:triplematricesAconvertnew} are the respective covariance matrices of~$\psi_j, j\in [3]$. Recall that~$\hat{d}(\alpha)$ defined in~\cref{eq:dispcoherent} is the displacement vector of the coherent state~$|\alpha\rangle$ and of a Gaussian state with description~$(\Gamma,\alpha,r)$.
Then, the vectors~$d_j', j\in[3]$ in Line~\ref{alg:tripledispvec} are the respective displacement vectors of~$\psi_j, j\in [3]$. 

Using the Weyl relations given in~\cref{eq:Wprop2b} we observe that the vector~$\lambda=\alpha_1-\alpha_2$ in Line~\ref{it:OalgGamma} is such that
\begin{align}
    D(\lambda) | \alpha_1 \rangle &=  D(\alpha_1-\alpha_2) | \alpha_1 \rangle \\ 
    &= e^{-i \im(\alpha_1^T \overline{\alpha_2})}
    |\alpha_2 \rangle \ .
\end{align}
Then, in Line~\ref{alg:OVu} we have 
\begin{align}
    u &= e^{-i\im(\alpha_1^T \overline{\alpha_2})} \overline{r_2} \\
    &= e^{-i\im(\alpha_1^T \overline{\alpha_2})} 
    \langle \psi(\Delta_2), \alpha_2  \rangle \\
    &= \langle \psi(\Delta_2) , D(\lambda) \alpha_1 \rangle \\
    &= \langle \psi_3, D(\lambda) \psi_1 \rangle 
\end{align}
while in Line~\ref{alg:OVv}  we have  
\begin{align}
    v &= r_1 \\
    &= \langle \alpha_1, \psi(\Delta_1) \rangle\\
    &= \langle \psi_1, \psi_2 \rangle \ .
\end{align}
It follows from the properties of~$\Aoverlaptriple$ that 
the output~$w$ computed in Line~\ref{alg:OVw} is 
\begin{align}
w&=\langle \psi(\desc_1),\psi(\desc_2)\rangle\ .
\end{align}
This implies the claim.

The runtime of~$\Aoverlap$ is~$O(n^3)$ due to the use of the subroutine~$\Aoverlaptriple$.
\end{proof}

\begin{figure}[H]
\centering
\begin{subfigure}[t]{0.315\textwidth}
\centering
    \includegraphics[height=4.2cm]{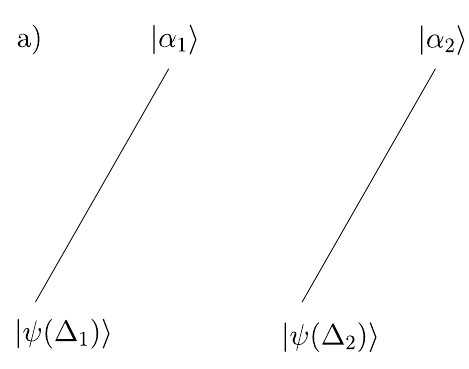}
    % \label{fig:overlap-a}
\end{subfigure}
\hfill
\begin{subfigure}[t]{0.315\textwidth}
\centering
    \includegraphics[height=4.2cm]{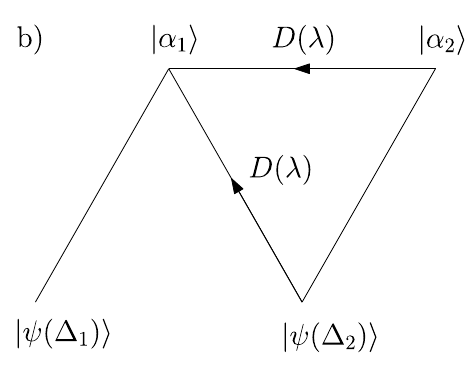}
    %\label{fig:overlap-b}
\end{subfigure}
\hfill
\begin{subfigure}[t]{0.315\textwidth}
\centering
    \includegraphics[height=4.2cm]{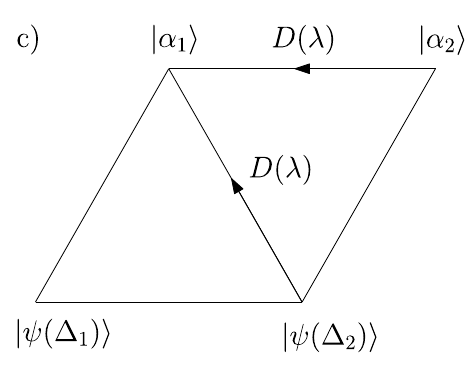}
    % \label{fig:overlap-d}
\end{subfigure}
\caption{An illustration of the algorithm~$\Aoverlap$. a) The input to~$\Aoverlap$ consists of descriptions of two Gaussian states $\psi(\Delta_1),\psi(\Delta_2)$ which includes $\alpha_1,\alpha_2\in\bbC^n$ and the overlaps~$r_j=\langle \alpha_j,\psi(\Delta_j)\rangle, j\in [2]$. b) The algorithm computes~$\lambda\in\bbC^n$ and~$\vartheta\in\bbR$ such that~$D(\lambda) |\alpha_1 \rangle = e^{i\vartheta} |\alpha_2\rangle$. This implies that~$\langle \alpha_2, D(\lambda) \alpha_1  \rangle = e^{i\vartheta}$ and~$\langle \psi(\Delta_2), D(\lambda)\alpha_1 \rangle = e^{i\vartheta} \overline{r_2}$ are known. c) Lastly, the subroutine~$\Aoverlaptriple$ is applied to complete a triangle. This amounts to computing the inner product~$w=\langle \psi(\Delta_1),\psi(\Delta_2) \rangle$ which the algorithm returns.}
\label{fig:Aoverlap}
\end{figure}

\subsection{Computing descriptions of evolved states}
\label{sec:algorithms-gaussian}

We give an algorithm called~$\Aunitary$ which, given the description~$\Delta$ of a Gaussian state~$\psi$ and a description~$\Delta_U$ of a Gaussian unitary~$U$ identified by a label~$\ell\in\{\mathsf{displacement}, \mathsf{phaseshift},\allowbreak \mathsf{beamsplitter}, \mathsf{squeezing}\}$, computes the description of the state~$U|\psi\rangle$.
We define a description for each Gaussian unitary operation considered:
\begin{enumerate}
\item $\alpha \in \bbC^n$ is a description of the displacement~$D(\alpha)$,
\item $(\phi, j) \in \bbR \times [n]$ is a description of the phase shifter~$F_j(\phi)$,
\item $(\omega,j,k) \in \bbR \times [n] \times [n]$
is a description of the beamsplitter~$B_{j,k}(\omega)$, and
\item $(z,j) \in (0,+\infty) \times [n]$
is a description of the single-mode squeezing operator~$S_{j}(z)$.
\end{enumerate}

The algorithm~$\Aunitary$ updates the covariance matrix and displacement vector of the state as prescribed by standard methods (see~\cref{sec:gaussian-unitary-op}). The novelty is in updating the overlap between the state and its reference state (recall this fixes the phase of the state).

Displacements, phase shifters and beamsplitters map coherent states to coherent states (see~\cref{sec:coherent-states}). Because of this, updating the description of a Gaussian state evolved with one of these operations is done similarly and it relies on the fact that the new reference state is the evolved reference state. 

Updating a description of a Gaussian state after a single-mode squeezing operation is more involved. Unlike the previous operations, single-mode squeezing operations do not map coherent states to coherent states. 
Instead, we have
\begin{align}
    \label{eq:squeezedcoherentstate} S_j(z) |\alpha\rangle = D(-\alpha') S_j(z) |0\rangle \ ,
\end{align}
for~$j\in[n], z\in\bbR, \alpha \in \bbC^{n}$, where~$\alpha'\in\bbC^n$ has entries
\begin{align}
    \label{eq:alphap1} \alpha'_j &= \alpha_j \cosh z - \overline{\alpha_j} \sinh z \ , \\
    \label{eq:alphap2} \alpha'_k &= \alpha_k 
    \quad\text{for}\quad k \neq j \ .
\end{align}
Because of this, we build an algorithmic subroutine called~$\Asqueezing$ which, given the description~$\Delta\in\Desc_n$ of a Gaussian state~$|\psi\rangle$, a parameter~$z\in(0,+\infty]$ and~$j\in[n]$, returns the description of the squeezed state~$S_j(z) | \psi(\Delta) \rangle$. We call this subroutine in the algorithm~$\Aunitary$ when~$U$ is a single-mode squeezing operator.

First, we give the subroutine~$\Asqueezing$, including pseudocode and associated claims in~\cref{lem:Asqueezing}, as well as a pictorial description of the algorithm in~\cref{fig:squeezing}.

\begin{figure}[H]
\raggedright
\par\rule{\columnwidth}{1pt}
\textbf{Algorithm~$\Asqueezing$}
\vspace{-2mm}
\par\rule{\columnwidth}{1pt}
\textbf{Input: }{$\desc, z, j$} \\
where
$ 
\begin{cases}
    \desc=(\cov,\alpha,r) \in \Desc_n \\
    z\in(0,+\infty) \\
    j\in[n]
\end{cases} 
$ \\
\textbf{Output: }{$(\cov', \alpha', r') \in \Desc_n$} 
\begin{algorithmic}[1]
    \State{\label{it:AlgSQmatS}$S\leftarrow S$ given in the `single-mode squeezing' row of~\cref{tab:gaussian-unitaries}}
    \State{\label{it:algSQcovP}$\cov' \leftarrow S \Gamma S^T$}
    \Comment{covariance matrix of~$S_j(z)|\psi\rangle$}
    \State{\label{it:algSQcovP2}$\cov'' \leftarrow S S^T$}
    \Comment{covariance matrix of~$S_j(z)|\alpha\rangle$}
    
    \State{\label{eq:SQalgAlphaP1}$\alpha'\leftarrow \alpha$}
    \State{\label{eq:SQalgAlphaP2}$\alpha'_j \leftarrow \alpha_j \cosh(z) - \overline{\alpha_j} \sinh(z)  $}
    
    \State{\label{it:algSQddd}$d_1,d_2,d_3 \leftarrow \hat{d}(\alpha')$}
    \Comment{displacement vectors of~$S_j(z)|\alpha\rangle, |\alpha'\rangle, S_j(z)|\psi\rangle$}

    \State{\label{it:algSQGGG}$(\Gamma_1,\Gamma_2,\Gamma_3) \leftarrow (\Gamma'',I,\Gamma')$}
    \Comment{covariance matrices of~$(S_j(z)|\alpha\rangle, |\alpha\rangle, S_j(z)|\psi\rangle)$}

    \State{\label{it:algSQu}$u \leftarrow \overline{r}$}
    \Comment{compute~$\langle S_j(z) \psi, S_j(z) \alpha \rangle$}
    \State{\label{it:algSQv}$v \leftarrow  1/\sqrt{ \cosh(z) }$}
    \Comment{compute~$\langle S_j(z) \alpha, \alpha' \rangle$}
    \State{\label{it:algSQrp}$r'\leftarrow \Aoverlaptriple(\Gamma_1, \Gamma_2, \Gamma_3, d_1,d_2,d_3, u, v, 0)$}
    
    \Comment{compute~$\langle \alpha', S_j(z) \psi \rangle$}
    \State{\textbf{return}~$(\cov', \alpha', r')$ }
\end{algorithmic}
\vspace{-3mm}
\par\rule{\columnwidth}{1pt}
%\label{alg:Asqueezing}
\end{figure}

\begin{lemma}
    \label{lem:Asqueezing}
    The algorithm~$\Asqueezing: \Desc_n \times (0,+\infty) \times [n] \rightarrow \Desc_n $ runs in time~$O(n^3)$. The algorithm is such that
    \begin{align}
        \label{eq:propsqueezing1}
        |\psi( \Asqueezing(\desc, z, j) )\rangle = S_j(z) | \psi(\desc) \rangle
    \end{align}
    for all~$\desc \in \Desc_n, z\in\bbR$ and~$j \in [n]$.
    That is, the output of the algorithm is a description of the state~$S_j(z) |\psi(\Delta) \rangle$.
\end{lemma}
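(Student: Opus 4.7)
The overall plan is to verify that each of the three components $(\Gamma',\alpha',r')$ returned by $\Asqueezing$ matches the required entries of the description of $S_j(z)\ket{\psi(\Delta)}$ according to \cref{def:description}: (i) that $\Gamma'$ equals the covariance matrix of $S_j(z)\ket{\psi}$; (ii) that $\hat d(\alpha')$ equals its displacement vector, so that $\ket{\alpha'}$ is the correct reference coherent state; and (iii) that $r' = \langle \alpha',S_j(z)\psi\rangle$. The runtime bound will then follow immediately from the fact that all operations are basic linear algebra plus one call to $\Aoverlaptriple$.

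Steps (i) and (ii) are essentially direct consequences of the standard Gaussian update rules in \cref{eq:covprime,eq:dprime} combined with the symplectic matrix $S$ from \cref{tab:gaussian-unitaries}. For (i), Line~\ref{it:algSQcovP} is exactly the rule $\Gamma \mapsto S\Gamma S^T$. For (ii), I will check that $\hat d(\alpha') = S\hat d(\alpha)$ by comparing components: the squeezing symplectic acts as $\mathrm{diag}(e^{-z},e^z)$ on the $(Q_j,P_j)$ block, while Lines~\ref{eq:SQalgAlphaP1}--\ref{eq:SQalgAlphaP2} give $\mathrm{Re}(\alpha'_j) = e^{-z}\mathrm{Re}(\alpha_j)$ and $\mathrm{Im}(\alpha'_j) = e^{z}\mathrm{Im}(\alpha_j)$, so the two sides match component-wise using the definition of $\hat d$.

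The heart of the argument is (iii), which uses the triple-overlap subroutine applied to the three pure Gaussian states $(\psi_1,\psi_2,\psi_3) = (S_j(z)\ket{\alpha},\ket{\alpha'},S_j(z)\ket{\psi})$ assembled in Lines~\ref{it:algSQddd}--\ref{it:algSQGGG} (with $\lambda = 0$). I need to justify the two inputs $u$ and $v$. For $u$, unitarity of $S_j(z)$ gives
\begin{align}
\langle \psi_3, \psi_1\rangle = \langle S_j(z)\psi, S_j(z)\alpha\rangle = \langle\psi,\alpha\rangle = \overline{\langle\alpha,\psi\rangle} = \overline{r},
\end{align}
matching Line~\ref{it:algSQu}. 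For $v$, I invoke the identity \cref{eq:squeezedcoherentstate} to write $S_j(z)\ket{\alpha} = D(-\alpha')S_j(z)\ket{0}$, and then use \cref{eq:Wprop1} to cancel $D(-\alpha')^\dagger D(-\alpha') = I$, obtaining
\begin{align}
\langle \psi_1,\psi_2\rangle = \bra{0}S_j(z)^\dagger D(-\alpha')^\dagger D(-\alpha')\ket{0} = \bra{0}S_j(z)^\dagger\ket{0},
\end{align}
which is the standard vacuum-to-squeezed-vacuum amplitude $1/\sqrt{\cosh z}$, matching Line~\ref{it:algSQv}.

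With these identifications, and since both $u = \overline r \neq 0$ (by \cref{cor:r-lowerbound}) and $v = 1/\sqrt{\cosh z}\neq 0$ satisfy the input preconditions of $\Aoverlaptriple$, \cref{prop:trace3statesWWeylPsi} guarantees that the output is $\langle \psi_2,\psi_3\rangle = \langle \alpha', S_j(z)\psi\rangle$, so Line~\ref{it:algSQrp} indeed sets $r'$ to the correct overlap with the new reference coherent state. Combined with (i) and (ii), this shows $(\Gamma',\alpha',r')$ is the description of $S_j(z)\ket{\psi(\Delta)}$ in the sense of \cref{def:description}, establishing \cref{eq:propsqueezing1}. The $O(n^3)$ runtime then follows because $S\Gamma S^T$ costs $O(n^3)$ and the single call to $\Aoverlaptriple$ also runs in $O(n^3)$ as noted after \cref{prop:trace3statesWWeylPsi}. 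The main obstacle is really just the correct bookkeeping of $v$: it is tempting to think $\langle S_j(z)\alpha,\alpha'\rangle$ should depend on $\alpha$, but the use of \cref{eq:squeezedcoherentstate} reveals the $\alpha$-dependence is absorbed precisely into the definition of $\alpha'$, leaving only the vacuum–squeezed-vacuum overlap.
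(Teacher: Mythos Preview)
Your proof is correct and follows essentially the same route as the paper's: both verify Items~(i)--(ii) of \cref{def:description} via the standard covariance/displacement update rules, identify the triple $(\psi_1,\psi_2,\psi_3)=(S_j(z)\ket{\alpha},\ket{\alpha'},S_j(z)\ket{\psi})$, and compute $u=\overline r$ by unitarity and $v=1/\sqrt{\cosh z}$ by reducing $\langle S_j(z)\alpha,\alpha'\rangle$ to the vacuum--squeezed-vacuum amplitude via \cref{eq:squeezedcoherentstate}. Your version is slightly more explicit in two useful places: you verify $\hat d(\alpha')=S\hat d(\alpha)$ component-wise (whereas the paper argues abstractly that the map preserves equality of displacements), and you point out that the preconditions $u\neq 0$ and $v\neq 0$ for $\Aoverlaptriple$ are met, which the paper leaves implicit.
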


\begin{proof}
    The algorithm~$\Asqueezing$ returns a tuple \begin{align}
        \Delta' = (\Gamma', \alpha', r') \in \msf{Mat}_{2n\times 2n}(\bbR) \times \bbC^n \times \bbC \ .    
    \end{align}
    Let us show that the tuple~$\Delta'$ is the description of the evolved state~$S_j(z) | \psi(\Delta) \rangle$ for any~$\Delta \in \Desc_n$, $z \in (0,+\infty)$ and $j\in[n]$. This corresponds to showing that~$\Delta'$ satisfies Items~\eqref{it:DefDescCond1} and~\eqref{it:DefDescCond3} in~\cref{def:description}.
    Consider that
    \begin{align}
        \Delta = (\Gamma,\alpha,r) \in \msf{Mat}_{2n\times 2n}(\bbR) \times \bbC^n \times \bbC \ ,
    \end{align}
    i.e., $\psi(\Delta)$ has covariance matrix~$\Gamma$, displacement~$\hat{d}(\alpha)$ and~$r$ is such that~$r=\langle \alpha, \psi(\Delta)\rangle$.

\begin{figure}[H]
%\centering
\begin{subfigure}[t]{0.315\textwidth}
%\centering
    \includegraphics[height=5cm]{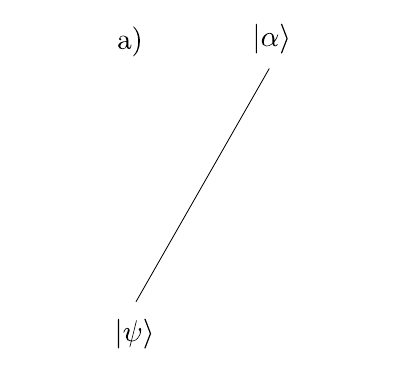}
    % \label{fig:squeezing-a}
\end{subfigure}
\hfill
\begin{subfigure}[t]{0.315\textwidth}
\centering
    \includegraphics[height=5cm]{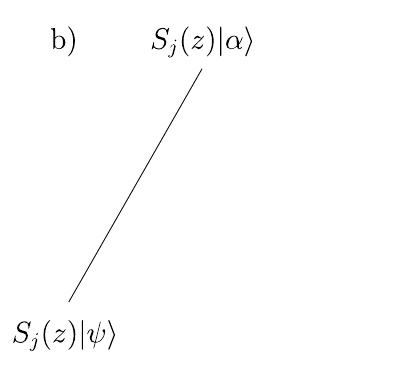}
    %\label{fig:squeezing-b}
\end{subfigure}
\hfill
\begin{subfigure}[t]{0.315\textwidth}
\centering
    \includegraphics[height=5cm]{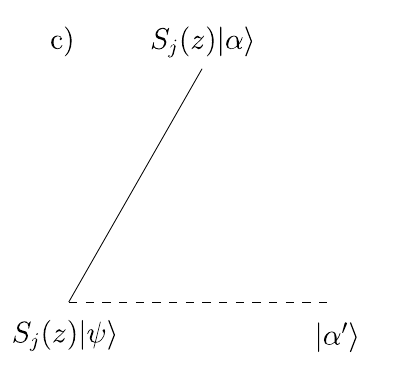}
    %\label{fig:squeezing-c}
\end{subfigure}
\hfill \\
\begin{subfigure}[t]{0.315\textwidth}
\centering
    \includegraphics[height=5cm]{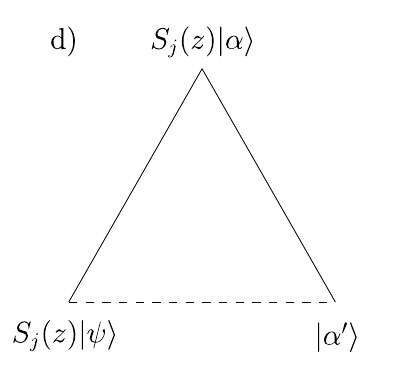}
    % \label{fig:squeezing-d}
\end{subfigure}
\hfill
\begin{subfigure}[t]{0.315\textwidth}
\centering
    \includegraphics[height=5cm]{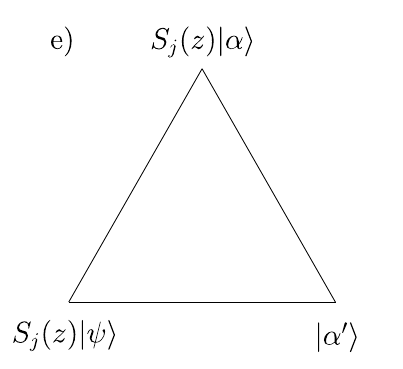}
    %\label{fig:squeezing-e}
\end{subfigure}
\begin{subfigure}[t]{0.315\textwidth}
\hfill
\end{subfigure}
\hfill
\caption{An illustration of the algorithm~$\Asqueezing$. Dashed lines correspond to inner products between the states at the vertices which are known to be non-zero but have not been computed. a) The input to the algorithm~$\Asqueezing$ is the description of a Gaussian state~$\psi$ which includes the overlap~$r = \langle \alpha, \psi\rangle$, as well as~$z\in\bbR$ and an index $j\in[n]$ associated to a single-mode squeezing operation~$S_j(z)$. b) By unitarity of~$S_j(z)$, the input data provides the inner product~$\langle S_j(z) \alpha, S_j(z) \psi \rangle = \langle \alpha, \psi \rangle = r$. c) The reference state of the evolved state~$S_j(z)|\psi\rangle$ is the coherent state~$|\alpha'\rangle$ with $\alpha'\in\bbC^n$ such that~$S(z) |\alpha\rangle = D(\alpha') S(z) | 0 \rangle$. This is due to~$S_j(z)|\psi\rangle$ having the same displacement vector~$d(\alpha')$ as~$S_j(z)|\alpha\rangle$. d) The inner product~$\langle \alpha', S_j(z) \alpha \rangle$ is computed using~\cref{eq:squeezedcoherentstateCoherentStOverlap}. e) In the last step, the subroutine~$\Aoverlaptriple$ is used to compute~$r'=\langle\alpha', S_j(z) \psi\rangle$. Thus,~$(\cov', \alpha', r')$ is the description of~$S_j(z) |\psi\rangle$.}
\label{fig:squeezing}
\end{figure}

Recall from~\cref{sec:gaussian-unitary-op} that
\begin{align}
    \Gamma' = S \Gamma S^T    
\end{align}
computed in Line~\ref{it:algSQcovP} is the covariance matrix of~$S_j(z) | \psi(\Delta) \rangle$, where~$S$ in Line~\ref{it:AlgSQmatS} is the symplectic matrix associated to~$S_j(z)$. 
We have
\begin{align}
    S_j(z) |\alpha\rangle = D(-\alpha') S_j(z) |0\rangle \ ,
\end{align}
where~$\alpha'$ is the vector given in~\cref{eq:alphap1,eq:alphap2} and initialized in Lines~\ref{eq:SQalgAlphaP1} and~\ref{eq:SQalgAlphaP2}. 
The state~$ S_j(z) |\alpha\rangle$ has displacement vector $\hat{d}(\alpha')$ and covariance matrix $\Gamma'' = SS^T$.
Since~$|\alpha\rangle$ and~$|\psi(\Delta)\rangle$ have the same displacement (by the definition of~$\Delta$) and~$S_j(z)$ maps states with the same displacement to states with the same displacement, then the state~$S_j(z)|\psi(\Delta) \rangle$ has the same displacement~$\hat{d}(\alpha')$ as~$| \alpha' \rangle$. Then, $\Delta'$ satisfies Item~\eqref{it:DefDescCond1} in~\cref{def:description}.

It remains to show that the description~$\Delta'$ satisfies Item~\eqref{it:DefDescCond3} in~\cref{def:description}. This corresponds to showing that~$r'=\langle \alpha, S_j(z) \psi \rangle$.
In Lines~\ref{it:algSQddd} and~\ref{it:algSQGGG} we set 
\begin{align}
    d_j &= \hat{d}(\alpha') 
    \quad\text{for}\quad
    j\in[3]\ , \\
    (\Gamma_1, \Gamma_2, \Gamma_3) &= (\Gamma'',I,\Gamma') \ , 
\end{align}
which are, respectively, the displacement vectors and covariance matrices of the states
\begin{align}
    (\psi_1,\psi_2,\psi_3)=(S_j(z)|\alpha\rangle, |\alpha'\rangle, S_j(z)|\psi\rangle) \ .
\end{align}
In Line~\ref{it:algSQu} we have 
\begin{align}
    u &= \overline{r} \\
    &= \langle  \psi, \alpha \rangle \\
    &= \langle S_j(z)\psi , S_j(z)\alpha\rangle \\
    &= \langle \psi_3, \psi_1 \rangle \ .
\end{align}
In Line~\ref{it:algSQv} we have 
\begin{align}
    \label{eq:squeezedcoherentstateCoherentStOverlap} 
    v &= 1/\sqrt{\cosh(z)} \\
    &= \langle S_j(z)0, 0\rangle \\
    &= \langle D(-\alpha')S_j(z)0, D(-\alpha')0\rangle \\
    &= \langle S_j(z) D(-\alpha) 0, D(-\alpha')0\rangle \\
    &=\langle  S_j(z) \alpha, \alpha' \rangle \\
    &=\langle  \psi_1, \psi_2 \rangle \ .
\end{align}
This is obtained by direct calculation considering~\cref{eq:squeezedcoherentstate} and 
\begin{align}
S (z) | 0 \rangle &= \frac{1}{\sqrt{\cosh{z}}} \sum_{k=0}^\infty (-\tanh(z))^k\frac{\sqrt{(2k)!}}{2^k k!} | 2k \rangle \ .
\end{align}
Then, by construction of the algorithm~$\Aoverlaptriple$ we have 
\begin{align}
    \langle \alpha, S_j(z)\psi \rangle = \Aoverlaptriple(\Gamma_1, \Gamma_2, \Gamma_3, d_1,d_2,d_3, u, v, 0)
\end{align}
which we set as~$r'$ in Line~\ref{it:algSQrp}. Hence, the description~$\Delta'$ satisfies Item~\eqref{it:DefDescCond3} in~\cref{def:description}.

The runtime of the algorithm is~$O(n^3)$ due to the use of the algorithm~$\Aoverlaptriple$.
\end{proof}

We give pseudocode for the algorithm~$\Aunitary$ and in~\cref{lem:Aunitary} we prove that it outputs the description of a Gaussian state evolved with a Gaussian unitary operation. We also give a diagram of the algorithm in~\cref{fig:evolve} when the operation is a displacement, phase shit or beamsplitter. When the operation is a single-mode squeezing operator, we refer to~\cref{fig:squeezing} for a diagram of the algorithm.

\begin{figure}[H]
\raggedright
\par\rule{\columnwidth}{1pt}
\textbf{Algorithm $\Aunitary$}
\vspace{-2mm}
\par\rule{\columnwidth}{1pt}
\textbf{Input: }{$\desc, \Delta_U, \ell$} \\ 
where
$\begin{cases}
    \desc=(\cov,\alpha,r) \in \Desc_n \\
    \text{$\Delta_U$ is a description of a Gaussian unitary} \\
    \ell\in\{\mathsf{displacement}, \mathsf{phaseshift},\allowbreak \mathsf{beamsplitter}, \mathsf{squeezing}\}
\end{cases}$ \\
\textbf{Output: }{a description of a pure Gaussian state}  
\begin{algorithmic}[1]
    \If{\label{alg:P-S0}$\ell = \mathsf{squeezing}$}
        \State{$(z,j)\leftarrow \Delta_U$}
        
        \Comment{$\Delta_U=(z,j)$ is a description of~$U=S_j(z)$}
        \State{\label{alg:P-S1}\textbf{return}~$\Asqueezing(\desc, z, j)$ }
    \EndIf
     \If{\label{alg:P-D1}$\ell = \mathsf{displacement}$}
        \State{$\beta \leftarrow \Delta_U$}
        \Comment{$\Delta_U=\beta$ is a description of~$U=D(\beta)$}
        \State{\label{alg:P-D2}$S\leftarrow I$}
        \State{\label{alg:P-D3}$\alpha'\leftarrow $ evaluate~\cref{eq:Dcoherentstate}}
    \EndIf
    \If{\label{alg:P-PS1}$\ell = \mathsf{phase shift}$}
        \State{$(\phi,j) \leftarrow \Delta_U$}

        \Comment{$\Delta_U = (\phi,j) $ is a description of~$U=F_j(\phi)$}
        \State{\label{alg:P-PS2}$S\leftarrow S$ given in the `phase shift' row of~\cref{tab:gaussian-unitaries}}
        \State{\label{alg:P-PS3}$\alpha'\leftarrow$ evaluate~\cref{eq:PScoherentstate}}
    \EndIf
    \If{\label{alg:P-BS1}$\ell = \mathsf{beamsplitter}$}
        \State{$(\omega,j,k) \leftarrow \Delta_U$}
        
        \Comment{$\Delta_U = (\omega,j,k) $ is a description of~$U=B_{j,k}(\omega)$}
        \State{\label{alg:P-BS2}$S\leftarrow S$ given in the `beamsplitter' row of~\cref{tab:gaussian-unitaries}}
        \State{\label{alg:P-BS3}$\alpha'\leftarrow$ evaluate~\cref{eq:BScoherentstate-alpha}}
    \EndIf
    \State{\label{alg:P-Gamap}$\cov' \leftarrow S \Gamma S^T$} 
    \Comment{covariance matrix of~$U|\psi\rangle$}
    \State{\textbf{return}~$(\cov', \alpha', r)$ }
\end{algorithmic}
\vspace{-3mm}
\par\rule{\columnwidth}{1pt}
%\caption{\justifying Given the description~$\Delta\in\Desc_n$ of a Gaussian state~$\psi\in\mathrm{Gauss}_n$, a description~$\Delta_U$ of a Gaussian unitary~$U$ and a label~$\ell\in\{\mathsf{displacement}, \mathsf{phaseshift},\allowbreak \mathsf{beamsplitter}, \mathsf{squeezing}\}$, the algorithm~$\Aunitary$ returns the description of the state~$U| \psi(\Delta) \rangle$.}
%\label{alg:Aunitary}
\end{figure}

\begin{theorem}
    \label{lem:Aunitary}
    The algorithm~$\Aunitary$ is such that
    \begin{align}
        \label{eq:prop-Aunitary}
        |\psi( \Aunitary(\desc, \Delta_U,\ell) )\rangle = U | \psi(\desc) \rangle
    \end{align}
    for all~$\desc \in \Desc_n$, $\Delta_U$ a description of a Gaussian unitary~$U$ and $\ell\in\{\mathsf{displacement}, \mathsf{phaseshift},\allowbreak \mathsf{beamsplitter}, \mathsf{squeezing}\}$. 
    That is, the output of the algorithm is a description of the state~$U | \psi(\Delta)\rangle $.
    When $\ell=\mathsf{squeezing}$ the algorithm has runtime $O(n^3)$, otherwise it has runtime~$O(n)$.
\end{theorem}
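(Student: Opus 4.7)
The proof splits naturally along the case distinction in the algorithm. For $\ell = \mathsf{squeezing}$, the output is defined to be $\Asqueezing(\Delta, z, j)$, so the correctness claim in \cref{eq:prop-Aunitary} and the $O(n^3)$ runtime are immediate consequences of \cref{lem:Asqueezing}. The remaining work is therefore to analyze the three branches for displacement, phase shift, and beamsplitter, all of which share the same structure: compute a symplectic matrix $S$ and a transformed coherent-state label $\alpha'$, and then return $(S\Gamma S^T, \alpha', r)$ \emph{without modifying the scalar} $r$.

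For each of these three branches I would verify conditions \eqref{it:DefDescCond1} and \eqref{it:DefDescCond3} of \cref{def:description} for the returned tuple. Condition \eqref{it:DefDescCond1} is handled by standard Gaussian-unitary bookkeeping: the covariance matrix update $\Gamma' = S\Gamma S^T$ is exactly \cref{eq:covprime}, and the displacement vector of $U|\psi(\Delta)\rangle$ coincides with that of $U|\alpha\rangle$ (since $|\alpha\rangle$ and $|\psi(\Delta)\rangle$ have the same displacement and Gaussian unitaries act covariantly on displacement vectors). The formulas \cref{eq:Dcoherentstate,eq:PScoherentstate,eq:BScoherentstate,eq:BScoherentstate-alpha} evaluated in lines \ref{alg:P-D3}, \ref{alg:P-PS3}, \ref{alg:P-BS3} give precisely the $\alpha'$ with $U|\alpha\rangle = |\alpha'\rangle$, so $\hat d(\alpha')$ matches the displacement of $U|\psi(\Delta)\rangle$. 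Condition \eqref{it:DefDescCond3} then follows from unitarity in a single line:
\begin{align}
\langle \alpha', U \psi(\Delta) \rangle \;=\; \langle U\alpha, U\psi(\Delta)\rangle \;=\; \langle \alpha, \psi(\Delta)\rangle \;=\; r,
\end{align}
so that leaving $r$ untouched is indeed the correct update. This explains \emph{why} no analogue of the $\Aoverlaptriple$ call used in $\Asqueezing$ is needed in these three branches: the operations map coherent states to coherent states, so the reference overlap is preserved automatically.

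For the runtime analysis, note that in the displacement branch $S = I$, so the update in line~\ref{alg:P-Gamap} is trivial; in the phase-shift and beamsplitter branches the symplectic matrix $S$ differs from the identity only in a fixed $2\times 2$ or $4\times 4$ block, so $S\Gamma S^T$ only alters $O(1)$ rows and columns of $\Gamma$ and can be implemented in $O(n)$ arithmetic operations. Computing the updated label $\alpha'$ from \cref{eq:Dcoherentstate,eq:PScoherentstate,eq:BScoherentstate-alpha} is likewise $O(n)$ (for displacement) or $O(1)$ (for phase shift and beamsplitter), giving the claimed $O(n)$ bound for $\ell \neq \mathsf{squeezing}$. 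The only genuinely delicate point, and the one I expect to require the most care, is the displacement branch: the Weyl relations \cref{eq:Wprop2b,eq:Wprop2} give $D(\beta)|\alpha\rangle$ up to a nontrivial phase $\exp(i\im(\alpha^T\overline{\beta}))$, and one must check that the definition of $\alpha'$ in \cref{eq:Dcoherentstate} (which absorbs this phase into the complex label) is consistent with the convention $|\alpha'\rangle = D(-\alpha')|0\rangle$ in such a way that $r' = r$ really is the overlap of $D(\beta)\psi(\Delta)$ with its optimal coherent reference state. Once this phase bookkeeping is pinned down, the rest of the argument is routine.
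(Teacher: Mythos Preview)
Your proposal is correct and follows essentially the same approach as the paper: delegate the squeezing case to \cref{lem:Asqueezing}, and for the remaining three branches verify conditions~\eqref{it:DefDescCond1} and~\eqref{it:DefDescCond3} of \cref{def:description} via the identity $U|\alpha\rangle = |\alpha'\rangle$ together with unitarity to conclude $r'=r$, with the $O(n)$ runtime coming from the block-sparsity of $S$. Your explicit flagging of the phase bookkeeping in the displacement branch is a point the paper's proof passes over silently (it simply invokes \cref{eq:Dcoherentstate} and asserts $U|\alpha\rangle=|\alpha'\rangle$), so your caution there is, if anything, more careful than the original.
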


\begin{proof}
    The algorithm takes as input a description 
    \begin{align}
        \Delta = (\Gamma, \alpha, r) \in \Desc_n
    \end{align}
    of a Gaussian state~$|\psi\rangle$ and a description~$\Delta_U$ of a Gaussian unitary~$U$, and it outputs 
    \begin{align}
        \Delta' = (\Gamma', \alpha', r') \in \mathsf{Mat}_{2n\times 2n}(\bbR) \times \bbC^{n} \times \bbC \ . 
    \end{align}
    The algorithm differentiates between four cases: When~$\Delta_U$ is a description of: a single-mode squeezing operator (Lines~\ref{alg:P-S0} to~\ref{alg:P-S1}), a displacement (Lines~\ref{alg:P-D1} to~\ref{alg:P-D3}), a phase shifter (Lines~\ref{alg:P-PS1} to~\ref{alg:P-PS3}) and a beamsplitter (Lines~\ref{alg:P-BS1} to~\ref{alg:P-BS3}). 

    When~$U$ is a single-mode squeezing operator, the claim follows from the properties of the algorithm~$\Asqueezing$ (see~\cref{lem:Asqueezing}).

    Let us analyze the remaining cases. 
    In Lines~\ref{alg:P-D2}, \ref{alg:P-PS2} and~\ref{alg:P-BS2}, $S$ is the symplectic matrix associated to the unitary~$U$. Then, $\Gamma'$ computed in Line~\ref{alg:P-Gamap} is the covariance matrix of~$U|\psi(\Delta)\rangle$.
    The vector~$\alpha'\in\bbC^n$ computed in Lines~\ref{alg:P-D3}, \ref{alg:P-PS3} and~\ref{alg:P-BS3} labels the coherent state~$|\alpha'\rangle$ in 
    \begin{align}
        U | \alpha \rangle = | \alpha' \rangle \ .
    \end{align}
    Since~$|\alpha\rangle$ and~$|\psi(\Delta)\rangle$ have the same displacement (by the definition of~$\Delta$) and~$U$ maps states with the same displacement to states with the same displacement, then~$U|\psi(\Delta) \rangle$ has the same displacement~$\hat{d}(\alpha')$ as~$| \alpha' \rangle$. Then, $\Delta'$ satisfies Item~\eqref{it:DefDescCond1} in~\cref{def:description}.

    Finally, we set
    \begin{align}
        r' &= r \\
        &= \langle \alpha, \psi(\Delta) \rangle \\
        &= \langle U \alpha, U \psi(\Delta) \rangle \\
        &= \langle \alpha', U \psi(\Delta) \rangle
    \end{align}
    and observe this leads~$\Delta'$ to satisfy Item~\eqref{it:DefDescCond3} in~\cref{def:description}. Then, $\Delta'$ is the description of the state~$U|\psi(\Delta)\rangle$.

    The algorithm has runtime~$O(n)$ when $U$ is a displacement, a phase shift or a beamsplitter. This is because it involves computing sums and inner problems of vectors in~$\bbC^n$ / matrix multiplication with a matrix~$S$ that is non-trivial in a block of constant size (see Line~\ref{alg:P-Gamap}). 
\end{proof}

\begin{figure}[H]
\centering
\begin{subfigure}[t]{0.485\textwidth}
\centering
    \includegraphics[height=5cm]{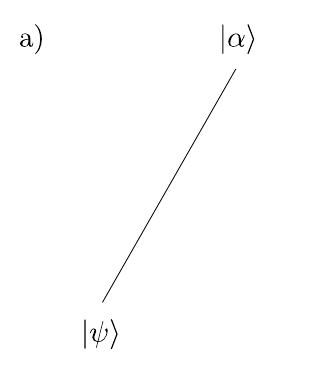}
    %\label{fig:evolve-a}
\end{subfigure}
\hfill
\begin{subfigure}[t]{0.485\textwidth}
\centering
    \includegraphics[height=5cm]{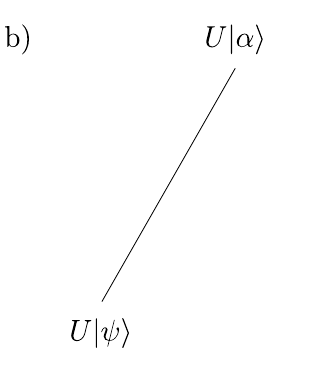}
    %\label{fig:evolve-b}
\end{subfigure}
\hfill
\caption{An illustration of the algorithm~$\Aunitary$ when~$U$ is a displacement, a phase shift or a beamsplitter operator. a) The input to the algorithm is the description of a Gaussian state~$|\psi\rangle$ which includes the overlap~$r = \langle \alpha, \psi\rangle$ with a coherent state~$|\alpha\rangle$, as well as a description~$\Delta_U$ of a Gaussian unitary~$U$. b) The reference state of the evolved state~$U|\psi\rangle$ is~$U|\alpha\rangle = |\alpha'\rangle$. The inner product~$r'=\langle U \alpha, U \psi \rangle$ is equal to~$r$ due to the unitarity of~$U$. The algorithm returns~$(\Gamma', \alpha', r)$, which is the description of the evolved state~$U|\psi\rangle$.}
\label{fig:evolve}
\end{figure}

\subsection{Computing descriptions of measured states}

Here we give an algorithm~$\Ameasure$ which, given the description of a Gaussian state~$|\psi\rangle$ and a vector~$\beta\in\bbC^k$ computes the description of the post-measurement state (see~\cref{sec:gaussian-measurements})
\begin{align}
    \label{eq:psi-post-measurement}
    |\psi'\rangle = \frac{1}{\sqrt{p(\beta)}} \Pi_\beta |\psi\rangle
\end{align}
after a heterodyne measurement of~$k$ modes. 

For convenience, we consider a subroutine~$\Aprob$ which takes as input a description~$\Delta\in\Desc_n$ of a Gaussian state and~$\beta\in\bbC^k$ and outputs the probability of obtaining measurement outcome~$\beta$ when performing a heterodyne measurement of the first~$k$ modes. More precisely,
\begin{align}
    \Aprob\left( \Delta, \beta \right) = \frac{1}{\pi^k} \| \Pi_\beta \psi(\Delta) \|^2  \ .
\end{align}
The algorithm simply evaluates~\cref{eq:heterodyne-prob} which takes time~$O(k^3)$.

We give pseudocode for the algorithm~$\Ameasure$, we prove that it computes the description of the post-measurement state in~\cref{lem:Ameasure} and we represent the algorithm pictorially in~\cref{fig:Ameasure}.

\begin{figure}[H]
\raggedright
\par\rule{\columnwidth}{1pt}
\textbf{Algorithm $\Ameasure$}
\vspace{-2mm}
\par\rule{\columnwidth}{1pt}
\textbf{Input: }{$\desc, \beta$} \\
where
$\begin{cases}
    \desc = \left(\cov = \begin{pmatrix} \Gamma_A & \Gamma_{AB} \\ \Gamma_{AB}^T & \Gamma_B \end{pmatrix}, \alpha=(\alpha_A,\alpha_B), r\right) \in \Desc_n \\
    \beta \in \bbC^k
\end{cases}$ \\
\textbf{Output: }{$(\cov', \alpha', r') \in \Desc_n$} 
\begin{algorithmic}[1]
    \State{$(s_A,s_B) \leftarrow (\hat{d}(\alpha_A),\hat{d}(\alpha_B))$}
    \State{\label{alg:meas-gamma}$\Gamma'\leftarrow$ evaluate~\cref{eq:heterodyne-cov}}
    \State{\label{alg:meas-s}$s'\leftarrow$ evaluate~\cref{eq:heterodyne-disp}}
    \State{\label{alg:meas-alphap}$\alpha'\leftarrow \hat{d}^{-1}(s')$}
    \State{\label{alg:meas-gamma123}$(\Gamma_1,\Gamma_2,\Gamma_3) \leftarrow (\Gamma,\Gamma',I)$}

    \Comment{covariance matrices of~$\left(|\psi(\Delta)\rangle, |\psi'\rangle, |\alpha'\rangle\right)$}
    \State{\label{alg:meas-d123}$(d_1, d_2, d_3), \leftarrow (s,s',s')$}
    
    \Comment{displacements of~$\left(|\psi(\Delta)\rangle, |\psi'\rangle, |\alpha'\rangle\right)$}

    \State{\label{alg:meas-gammavar}$\lambda\leftarrow \alpha-\alpha'$}
    \State{\label{alg:meas-u}$u\leftarrow \exp\left(i\im\left(\alpha'^T\overline{\alpha}\right)\right) r $}
    \State{\label{alg:meas-p}$p\leftarrow \Aprob(\Delta, \beta)$ }
    \State{\label{alg:meas-v}$v\leftarrow  \pi^k \sqrt{p}$}
    \State{\label{alg:meas-o}$\overline{r'}\leftarrow \Aoverlaptriple(\Gamma_1, \Gamma_2, \Gamma_3, d_1, d_2, d_3, u, v, \lambda)$}
    
    \Comment{compute~$\langle \alpha', \psi' \rangle$}
    \State{\textbf{return}~$(\cov', \alpha', r')$ }
\end{algorithmic}
\vspace{-3mm}
\par\rule{\columnwidth}{1pt}
%\label{alg:Ameasure}
\end{figure}

\begin{theorem}
\label{lem:Ameasure}
    The algorithm~$\Ameasure: \Desc_n \times \bbC^{k}$  runs in time~$O(n^3)$. The algorithm is such that
    \begin{align}
        \label{eq:propmeasure1}
        |\psi( \Ameasure(\desc, \beta) )\rangle = \frac{1}{\sqrt{p(\beta)}} \Pi_\beta |\psi(\Delta)\rangle
    \end{align}
    for all~$\desc \in \Desc_n$ and~$\beta\in\bbC^k$.
    That is, the output of the algorithm is a description of the post measurement state after a heterodyne measurement of the first~$k$ modes with outcome~$\beta$.
\end{theorem}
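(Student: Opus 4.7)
The plan is to verify that the output triple $(\Gamma', \alpha', r')$ satisfies the two conditions of \cref{def:description} for the post-measurement state $|\psi'\rangle$, and then to bound the runtime. Condition~(i) is essentially by construction: formulas~\eqref{eq:heterodyne-cov} and~\eqref{eq:heterodyne-disp} from \cref{sec:gaussian-measurements} directly yield the covariance matrix $\Gamma'$ and displacement vector $s'$ of $|\psi'\rangle$, and the assignment $\alpha' = \hat d^{-1}(s')$ in Line~\ref{alg:meas-alphap} then ensures that $\hat d(\alpha') = s'$, so that $|\alpha'\rangle$ has the same displacement as $|\psi'\rangle$.

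The core of the argument is verifying Condition~(ii), namely $r' = \langle \alpha', \psi'\rangle$. Mirroring the strategy used in $\Aoverlap$ and $\Asqueezing$, I would reduce this to a single call of $\Aoverlaptriple$ on the ordered triple $(\psi_1, \psi_2, \psi_3) = (\psi(\Delta), \psi', |\alpha'\rangle)$. Lines~\ref{alg:meas-gamma123}--\ref{alg:meas-d123} correctly supply the associated covariance matrices $(\Gamma, \Gamma', I)$ and displacement vectors $(s, s', s')$. Given these data, by the specification of $\Aoverlaptriple$ its output equals $\langle \psi_2, \psi_3\rangle = \langle \psi', \alpha'\rangle = \overline{r'}$, and conjugation returns the required $r'$. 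What remains is to check that the two scalars $u$ and $v$ passed to $\Aoverlaptriple$ truly satisfy $u = \langle \psi_3, D(\lambda)\psi_1\rangle$ and $v = \langle \psi_1, \psi_2\rangle$ with $\lambda = \alpha - \alpha'$.

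For $v$, a direct computation using $|\psi'\rangle = p(\beta)^{-1/2}\Pi_\beta |\psi\rangle$ together with $p(\beta) = \pi^{-k}\|\Pi_\beta\psi\|^2$ gives $\langle \psi, \psi'\rangle = \|\Pi_\beta\psi\|^2/\sqrt{p(\beta)} = \pi^k\sqrt{p(\beta)}$, which matches the assignment in Lines~\ref{alg:meas-p}--\ref{alg:meas-v} after the call to $\Aprob$. For $u$, I would rewrite $\langle \alpha', D(\alpha-\alpha')\psi\rangle = \langle D(\alpha'-\alpha)\alpha', \psi\rangle$ and apply the action of a displacement on a coherent state, \cref{eq:Dcoherentstate}, specialized via the Weyl relation~\eqref{eq:Wprop2b}, to obtain $D(\alpha'-\alpha)|\alpha'\rangle = \exp(-i\im(\alpha'^T\overline\alpha))|\alpha\rangle$; combining with $\langle \alpha, \psi\rangle = r$ produces exactly the expression $u = \exp(i\im(\alpha'^T\overline\alpha))\,r$ set in Line~\ref{alg:meas-u}. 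This phase-bookkeeping is the only non-mechanical part of the argument and is the main obstacle to a clean proof.

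Finally, the runtime is dominated by evaluating~\eqref{eq:heterodyne-cov}, which requires inverting the $2k\times 2k$ matrix $\Gamma_A + I$ and performing $O(n^3)$ matrix multiplications, and by the call to $\Aoverlaptriple$, which is $O(n^3)$ by the analysis in \cref{sec:overlap}; the subroutine $\Aprob$ is $O(k^3)$ and all remaining assignments are at most $O(n^2)$, giving the claimed $O(n^3)$ bound.
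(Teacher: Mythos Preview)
Your proposal is correct and follows essentially the same approach as the paper: verify Condition~(i) of \cref{def:description} directly from \cref{eq:heterodyne-cov,eq:heterodyne-disp}, then verify Condition~(ii) by checking that the scalars $u,v$ passed to $\Aoverlaptriple$ on the triple $(\psi(\Delta),\psi',|\alpha'\rangle)$ equal $\langle\psi_3,D(\lambda)\psi_1\rangle$ and $\langle\psi_1,\psi_2\rangle$ respectively, with the $O(n^3)$ runtime coming from $\Aoverlaptriple$. Your treatment of the phase in $u$ via $D(\alpha'-\alpha)|\alpha'\rangle$ and of $v$ via $\|\Pi_\beta\psi\|^2/\sqrt{p(\beta)}$ matches the paper's computations line for line.
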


\begin{figure}[H]
\centering
\begin{subfigure}[t]{0.32\textwidth}
\centering
    \includegraphics[width=\textwidth]{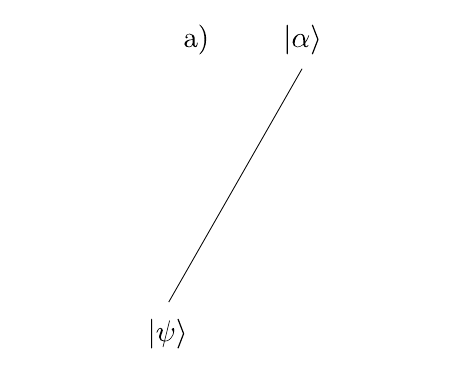}
    %\label{fig:measure-a}
\end{subfigure}
\hfill
\begin{subfigure}[t]{0.32\textwidth}
\centering
    \includegraphics[width=\textwidth]{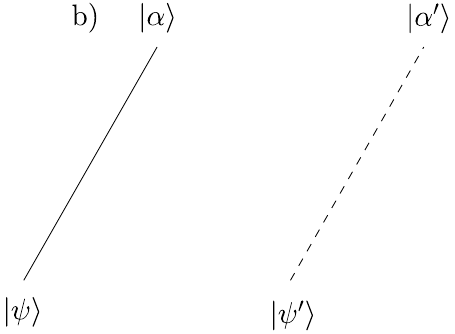}
    %\label{fig:measure-b}
\end{subfigure}
\hfill
\begin{subfigure}[t]{0.32\textwidth}
\centering
    \includegraphics[width=\textwidth]{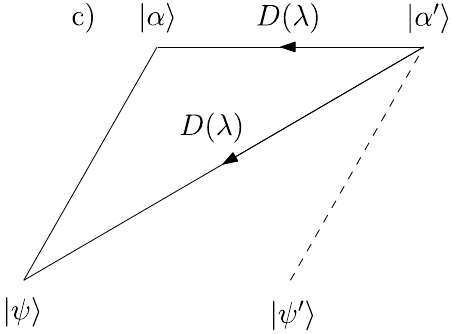}
    %\label{fig:measure-c}
\end{subfigure}
\hfill \\ \vspace{1mm}
\begin{subfigure}[t]{0.32\textwidth}
\centering
    \includegraphics[width=\textwidth]{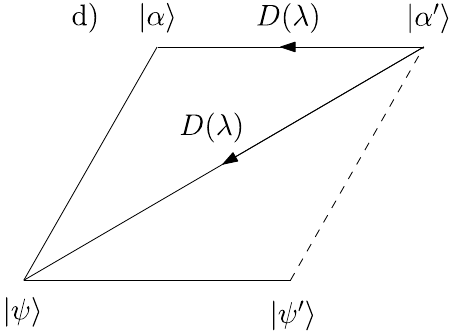}
    %\label{fig:measure-d}
\end{subfigure}
\hfill
\begin{subfigure}[t]{0.32\textwidth}
\centering
    \includegraphics[width=\textwidth]{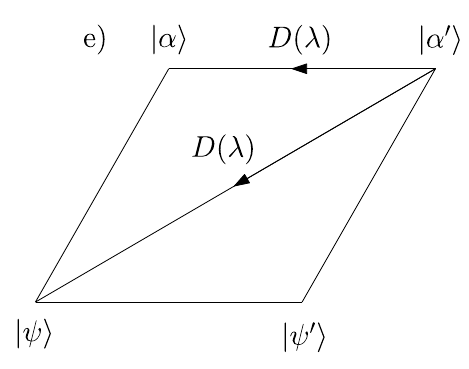}
    %\label{fig:measure-e}
\end{subfigure}
\begin{subfigure}[t]{0.32\textwidth}
\hfill
\end{subfigure}
\hfill
\caption{Pictorial description of the algorithm~$\Ameasure$. a) The algorithm takes as input the description~$\Delta$ of a Gaussian state~$\psi=\psi(\Delta)$ that includes the overlap~$r=\langle\alpha,\psi\rangle$ with the reference state~$|\alpha\rangle$. b) The covariance matrix~$\Gamma'$ and displacement of the post-measurement state~$|\psi'\rangle$ are computed using the covariance matrix formalism. The reference state~$|\alpha'\rangle$ is the coherent state with the displacement of~$|\psi'\rangle$. c) The algorithm computes~$\lambda\in \bbC^n$ and $\vartheta\in\bbR$ such that $D(\lambda) |\alpha\rangle = e^{i\vartheta}|\alpha'\rangle$. Then, $\langle \alpha', D(\lambda) \alpha \rangle = e^{i\vartheta}$ and $\langle \alpha', D(\lambda) \psi \rangle = e^{i\vartheta} r $ are known. d) The overlap $\langle \psi,\psi' \rangle = \|\psi'\| = \pi^k \sqrt{p}$ with $p$ the measurement probability is known. e) Finally, the overlap $r'=\langle \alpha', \psi' \rangle$ can be computed using the algorithm $\Aoverlaptriple$, which corresponds to completing the triangle. The algorithm returns $(\Gamma',\alpha',r')$.}
\label{fig:Ameasure}
\end{figure}

\begin{proof}
The algorithm takes as input~$(\Delta,\beta)\in \Desc \times \bbC^{k}$ and it outputs 
\begin{align}
    \Delta' = (\Gamma', \alpha', r') \in \mathsf{Mat}_{2n\times 2n}(\bbR) \times \bbC^{n} \times \bbC \ . 
\end{align}
We will show this is a description of the post-measurement state~$|\psi'\rangle$ given in~\cref{eq:psi-post-measurement}.
 
In Lines~\ref{alg:meas-gamma} and~\ref{alg:meas-s}  we set~$\Gamma'$ and~$s'$ to be respectively the covariance matrix and  displacement of the post-measurement state~$|\psi'\rangle$ (see~\cref{sec:gaussian-measurements}). In Line~\ref{alg:meas-alphap} the vector~$\alpha'$ labels the coherent state~$|\alpha'\rangle$ with displacement~$s'$. Then, the description~$\Delta'$ satisfies Item~\eqref{it:DefDescCond1} in~\cref{def:description}.

In Lines~\ref{alg:meas-gamma123} and~\ref{alg:meas-d123}~the matrices~$\Gamma_j,j\in[3]$ and vectors~$d_j,j\in[3]$ are, respectively, the covariance matrices and displacement vectors of the states
\begin{align}
   (\psi_1,\psi_2,\psi_3) = (|\psi(\Delta)\rangle, |\psi'\rangle, |\alpha'\rangle ) \ .
\end{align}
In Line~\ref{alg:meas-p}~we set~$p$ to be the probability density of obtaining measurement outcome~$\beta$. 
In Line~\ref{alg:meas-v} we have 
\begin{align}
    v &= \pi^k \sqrt{p} \\
    &= \frac{1}{\sqrt{p}} \| \Pi_\beta \psi(\Delta)\|^2  \\
    &= \langle \psi(\Delta), \psi' \rangle \\
    &= \langle \psi_1, \psi_2\rangle \ .
\end{align}
In Line~\ref{alg:meas-gammavar} we set~$\lambda = \alpha-\alpha'$ and in Line~\ref{alg:meas-u} we have
\begin{align}
    u &= \exp\left(i\im\left(\alpha'^T\overline{\alpha}\right)\right) r \\
    &= \exp\left(i\im\left(\alpha'^T\overline{\alpha}\right)\right) \langle \alpha, \psi(\Delta) \rangle \\
    &= \langle \alpha', D(\alpha-\alpha')  \psi(\Delta) \rangle \\
    &= \langle \psi_3, D(\lambda) \psi_1 \rangle \ ,
\end{align}
where we used the Weyl relations given in~\cref{eq:Wprop2b}.
Then, by the construction of the algorithm~$\Aoverlaptriple$ we have 
    \begin{align}
        \langle  \psi', \alpha' \rangle = \Aoverlaptriple(\Gamma_1, \Gamma_2, \Gamma_3, d_1,d_2,d_3, u, v, \lambda)
    \end{align}
    which we set as~$\overline{r'}$ in Line~\ref{alg:meas-v}. Hence, the description~$\Delta'$ satisfies Item~\eqref{it:DefDescCond3} in~\cref{def:description}.

    The runtime of the algorithm is~$O(n^3)$ due to the use of the algorithm~$\Aoverlaptriple$.
\end{proof}

\section{Superpositions of Gaussian states and measurements}
\label{sec:superpositions}
Here we establish various routines that can be used to deal with a superposition 
\begin{align}
    |\Psi\rangle = \sum_{j=1}^\chi c_j |\psi_j\rangle  \label{eq:psisuperpositiongauss}
\end{align}
of Gaussian states~$\psi_j\in\mathrm{Gauss}_n$. 
We assume that the vector~$\ket{\Psi}\in\cH_n$ is specified by the collection
\begin{align}
    \left\{ c_j, \Delta(\psi_j)\right\}_{j\in[\chi]} \ \label{eq:psivectorspecification}
\end{align}
of coefficients and descriptions of the individual terms in the superposition. We describe how to compute or estimate the probability density functions when applying a homodyne measurement to  a  superposition of Gaussian states specified by this data. We also explain how to update  descriptions of such superpositions after a Gaussian unitary or measurement is applied. This relies on the routines outlined in~\cref{sec:keeptrackphases} for tracking relative phases in linear optics, as well as (in the case of measurements) on the norm estimation algorithms laid out in~\cref{sec:normestimationalgorithmnaive,sec:fastnormestimationalg}.
For a more detailed explanation of this procedure, we refer readers to~\cite[Sections~4.1 and~4.3]{dias2023classical} where analogous 
subroutines are discussed in the context of fermionic linear optics.

In~\cref{sec:normestimationalgorithmnaive}, we sketch how to (exactly) compute the norm~$\|\Psi\|$ of~$\ket{\Psi}$ from the data~\eqref{eq:psivectorspecification}.  The corresponding algorithm has a runtime scaling of~$O(\chi^2)$, i.e., quadratically in the number of terms in the superposition. We then discuss a faster  probabilistic approximation algorithm in~\cref{sec:fastnormestimationalg}, whose runtime is linear in~$\chi$. In~\cref{sec:measurementsuperpos}, we discuss an algorithm for computing output probabilities when measuring a subset or all of the modes using a heterodyne measurement.

\subsection{Na\"ive norm estimation algorithm\label{sec:normestimationalgorithmnaive}}
It is straightforward to show the following:
\begin{lemma}
There is an algorithm~$\Aexactnorm$
which, on input
$\left\{ c_j, \Delta(\psi_j)\right\}_{j\in[\chi]}$, computes the norm~$\|\Psi\|$ of~$\Psi=\sum_{j=1}^\chi c_j\ket{\psi_j}$ in time~$O(\chi^2 n^3)$.
\end{lemma}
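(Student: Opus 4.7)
The plan is to compute $\|\Psi\|^2$ directly by expanding the norm squared as a double sum of pairwise inner products and applying the overlap subroutine $\Aoverlap$ from \cref{lem:overlap} to each pair. Concretely, writing
\begin{align}
\|\Psi\|^2 \;=\; \langle \Psi, \Psi\rangle \;=\; \sum_{j,k=1}^\chi \overline{c_j}\, c_k\, \langle \psi_j, \psi_k\rangle\ ,
\end{align}
the algorithm $\Aexactnorm$ iterates over all ordered pairs $(j,k)\in [\chi]\times [\chi]$, invokes $\Aoverlap(\Delta(\psi_j),\Delta(\psi_k))$ to obtain $\langle \psi_j,\psi_k\rangle$, multiplies by $\overline{c_j}c_k$, accumulates the partial sums, and finally returns the (nonnegative) square root of the result.

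Correctness follows immediately from \cref{lem:overlap}, which guarantees that $\Aoverlap(\Delta(\psi_j),\Delta(\psi_k)) = \langle \psi(\Delta(\psi_j)),\psi(\Delta(\psi_k))\rangle = \langle \psi_j,\psi_k\rangle$ for every pair $(j,k)$, including phase information (this is the whole point of using the extended description $\Delta$ rather than just the covariance matrix and displacement). The accumulated sum is therefore exactly $\|\Psi\|^2$, which is a nonnegative real number by construction, so taking its square root yields $\|\Psi\|$.

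For the runtime, there are $\chi^2$ pairs, and each call to $\Aoverlap$ runs in time $O(n^3)$ by \cref{lem:overlap}, giving $O(\chi^2 n^3)$ for the dominant cost. The remaining operations (multiplication by the scalar coefficients, accumulation, and the final square root) contribute only $O(\chi^2)$ additional arithmetic operations, which is absorbed. There is no real obstacle here: the lemma is essentially a direct corollary of the existence of $\Aoverlap$, and the $\chi^2$ scaling simply reflects that the naive approach evaluates all pairwise overlaps. The interesting question, addressed in \cref{sec:fastnormestimationalg}, is how to beat this quadratic scaling by a probabilistic approximation.
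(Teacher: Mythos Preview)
Your proposal is correct and follows essentially the same approach as the paper: expand $\|\Psi\|^2$ as the double sum $\sum_{j,k}\overline{c_j}c_k\langle\psi_j,\psi_k\rangle$, evaluate each of the $O(\chi^2)$ overlaps with $\Aoverlap$ at cost $O(n^3)$, and return the square root. The paper's own proof is slightly terser but otherwise identical in content.
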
    
\begin{proof}
Because 
\begin{align}
    \|\Psi\|^2 &=\sum_{j,k=1}^\chi \overline{c_k}c_j \langle \psi_k,\psi_j\rangle\ ,
\end{align}
the norm~$\|\Psi\|$ can be determined by computing~$O(\chi^2)$ overlaps of the form~$\langle \psi_j,\psi_k\rangle$. Using the subroutine~$\Aoverlap$, each of these computations requires a time of order~$O(n^3)$.
\end{proof}

\subsection{A fast norm estimation algorithm\label{sec:fastnormestimationalg}}
In pioneering work \cite{bravyiImprovedClassicalSimulation2016}, Bravyi and Gosset proposed an algorithm for estimating the norm of a superposition of stabilizer states that takes time linear in the number of terms~$\chi$ in the superposition. 
In Ref.~\cite{bravyiComplexityQuantumImpurity2017a} the same authors proposed a similar algorithm to estimate the norm of superpositions of fermionic Gaussian states that also takes time linear in the number of terms~$\chi$ in the superposition. 
These algorithms correspond to an improvement over the naive norm computation which takes time~$O(\chi^2)$.

Here we give an algorithm called~$\Afastnorm$ which estimates the norm of a superposition
\begin{align}
    |\Psi\rangle = \sum_{j=1}^\chi c_j |\psi_j\rangle
\end{align}
of~$\chi$ pure Gaussian states~$\psi_j\in\mathrm{Gauss}_n$ with~$c_j\in\bbC$.
Central to the construction of this algorithm is the following probabilistic process, which is specified by a vector~$\Psi\in\cH_n$, $R>0$ and~$L\in\mathbb{N}$:
\begin{enumerate}[(i)]
\item
Consider i.i.d. real-random variables~$X_1,\ldots,X_L$, where each~$X_j$ for~$j\in [L]$ is obtained as follows: We pick~$\alpha\in B_R(0)$ uniformly at random (according to the Lebesgue measure) and set
\begin{align}
    \label{eq:Xj}
    X_j&=R^2|\langle \alpha,\Psi\rangle|^2\ .
\end{align}
\item
Set
\begin{align}
    \label{eq:overlineX}
\overline{X}:=\frac{1}{L}\sum_{j=1}^L X_j\ .
\end{align}
\end{enumerate}
Then the following holds:
\begin{lemma}\label{lem:fastnormprobabilistic}
Suppose~$E>0$ and
\begin{align}
\langle \Psi',H\Psi'\rangle\leq E\quad\textrm{where}\quad\Psi'=\Psi/\|\Psi\|\ .
\end{align}
Let~$\varepsilon>0$ and~$p_f\in (0,1)$ be given. 
Set 
\begin{align}
    R:=\sqrt{\frac{E}{\varepsilon}}\quad\textrm{ and }\quad    L:= \left\lceil \frac{E}{4\pi p_f\varepsilon^3} \right\rceil\ .
\end{align}
Then 
\begin{align}
    \Pr\left[(1-\varepsilon)\|\Psi\|^2\leq  \overline{X}\leq (1+\varepsilon)\|\Psi\|^2\right]\geq 1-p_f\ .
\end{align}
\end{lemma}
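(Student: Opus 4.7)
The plan is a standard Monte Carlo analysis in two steps. First, I would show that the expectation $\mu:=\bbE[X_j]=\bbE[\overline{X}]$ is within additive error $\varepsilon\|\Psi\|^2/2$ of $\|\Psi\|^2$ (the \emph{bias} step). Second, I would use Chebyshev's inequality to show that $\overline{X}$ falls within additive error $\varepsilon\|\Psi\|^2/2$ of $\mu$ with probability at least $1-p_f$ (the \emph{concentration} step). Combining the two via the triangle inequality then yields the multiplicative bound $(1-\varepsilon)\|\Psi\|^2\leq\overline{X}\leq(1+\varepsilon)\|\Psi\|^2$.

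For the bias, I would invoke the coherent-state resolution of the identity~\eqref{eq:numberstaterep}, which gives $\frac{1}{\pi^n}\int_{\bbC^n}|\langle\alpha,\Psi\rangle|^2\, d^{2n}\alpha = \|\Psi\|^2$. The normalization $R^2/\mathrm{vol}(B_R(0))$ implicit in the definition of $X_j$ is chosen so that $\mu$ would equal $\|\Psi\|^2$ exactly if the domain of integration were extended to all of $\bbC^n$; hence the bias is governed by the tail $\int_{|\alpha|>R}|\langle\alpha,\Psi\rangle|^2\, d^{2n}\alpha$. To bound this tail I would use the energy hypothesis together with the identity
\begin{align*}
\frac{1}{\pi^n}\int_{\bbC^n}|\alpha|^2|\langle\alpha,\Psi\rangle|^2\, d^{2n}\alpha \ = \ \langle\Psi,(N+nI)\Psi\rangle \ = \ \tfrac{1}{2}\langle\Psi,H\Psi\rangle \ \leq \ \tfrac{E}{2}\|\Psi\|^2\ ,
\end{align*}
which follows from $a_j|\alpha\rangle=\alpha_j|\alpha\rangle$ together with $H=2(N+nI)$. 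Markov's inequality applied to $|\alpha|^2$ (weighted by $|\langle\alpha,\Psi\rangle|^2$) then yields a tail of order $E\|\Psi\|^2/R^2=\varepsilon\|\Psi\|^2$, so the choice $R^2=E/\varepsilon$ controls the bias as required.

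For the concentration, I would bound the second moment of $X_j$ using the Cauchy--Schwarz estimate $|\langle\alpha,\Psi\rangle|^2\leq\|\Psi\|^2$ (valid since $\||\alpha\rangle\|=1$) combined with the bias step: $\bbE[X_j^2]\leq R^2\|\Psi\|^2\mu = O(R^2\|\Psi\|^4)$, hence $\Var(\overline{X})\leq\Var(X_j)/L = O(R^2\|\Psi\|^4/L)$. Chebyshev's inequality then gives
\begin{align*}
\Pr\!\left[|\overline{X}-\mu|\geq\tfrac{\varepsilon}{2}\|\Psi\|^2\right] \ \leq \ O\!\left(\frac{R^2}{L\varepsilon^2}\right) \ = \ O\!\left(\frac{E}{L\varepsilon^3}\right)\ ,
\end{align*}
which is at most $p_f$ for $L$ of the claimed order $E/(p_f\varepsilon^3)$.

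The main obstacle will be purely bookkeeping: carefully tracking the factors of $\pi^n$, $\mathrm{vol}(B_R(0))$, and the split of $\varepsilon$ between bias and variance, in order to recover the explicit constant $4\pi$ appearing in the stated choice of $L$. The underlying probabilistic content is a textbook second-moment argument, enabled by the observation that the energy bound furnishes, via Markov, a quantitative tail estimate on the Husimi-type density $|\langle\alpha,\Psi\rangle|^2$; no property of the specific superposition structure of $\Psi$ is needed beyond $\|\Psi'\|=1$ and the energy constraint.
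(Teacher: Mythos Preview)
Your proposal is correct and follows essentially the same route as the paper: split into a bias term controlled by Markov's inequality via the energy bound (using the coherent-state resolution of identity and $aa^\dagger=\tfrac{1}{\pi}\int|\alpha|^2\proj{\alpha}\,d\alpha$), and a concentration term controlled by Chebyshev via a second-moment bound obtained from $|\langle\alpha,\Psi\rangle|^2\leq\|\Psi\|^2$. The paper carries this out explicitly for a single mode (where $\mathrm{vol}(B_R(0))=\pi R^2$ makes the factor $R^2$ exactly right) and declares the $n$-mode case ``straightforward''; your remark that the remaining work is bookkeeping with the volume factors is on point.
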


\begin{proof}
We prove the claim for single-mode states. The generalization to~$n$-mode states is straightforward.

We write the Hamiltonian 
\begin{align}
H&= 2 (a^\dagger a + I)  = 2  a^\dagger a
\end{align}
such that
\begin{align}
\langle \Psi,H\Psi\rangle &=\frac{1}{\pi}\int_{
\mathbb{C}} 2 |\alpha|^2  |\langle \alpha,\Psi\rangle|^2 d\alpha\ ,
\end{align}
see~\cref{eq:numberstaterep} in~\cref{app:momentlimitedheterodynemeasurement}.
Define
\begin{align}
p(\alpha)&=\frac{1}{\pi} \frac{|\langle \alpha,\Psi\rangle|^2}{\|\Psi\|^2}\quad\textrm{for}\quad \alpha\in\mathbb{C}\ .
\end{align}
Then~$p$ is a probability density function (the one associated with the outcome distribution when applying the heterodyne measurement to the normalized state~$\Psi/\|\Psi\|$), and we have
\begin{align}
\frac{\langle\Psi,H\Psi\rangle}{2\|\Psi\|^2}&=\bbE_{\alpha\sim p}\left[|\alpha|^2\right]\ .
\end{align}
In particular, Markov's inequality implies that
\begin{align}
\Pr_{\alpha\sim p}\left[|\alpha|^2\geq R^2\right]&\leq \frac{\langle \Psi,H\Psi\rangle}{2 R^2\|\Psi\|^2}\ .\label{eq:markovalphareq}
\end{align}

Consider the random variable~$X=X_R$ defined by
choosing~$\alpha$ uniformly from~$B_R(0)$ and outputting
\begin{align}
X&= R^2 |\langle \alpha,\Psi\rangle|^2 \ .
\end{align}
The random variable~$X$ has expectation value
\begin{align}
\bbE\left[X\right]&=\frac{1}{\pi R^2}
\int_{B_R(0)} R^2 |\langle \alpha,\Psi\rangle|^2
d\alpha\\
&=\frac{1}{\pi }\int_{B_R(0)}  |\langle \alpha,\Psi\rangle|^2
d\alpha\\
&=\|\Psi\|^2\int_{B_R(0)} p(
\alpha)d\alpha\ ,
\end{align}
and it follows from~\cref{eq:markovalphareq} and the fact that~$p$ is a probability density function on~$\mathbb{C}$ that
\begin{align}
\|\Psi\|^2 \left(1-\frac{\langle\Psi,H\Psi\rangle}{2R^2
\|\Psi\|^2}\right)&\leq \bbE[X]\leq \|\Psi\|^2\ .\label{eq:multiplicativeerrorpsihpsi}
\end{align}
\noindent 
We have 
\begin{align}
\bbE[X^2]&=\frac{1}{\pi R^2}\int_{B_R(0)}
R^4 |\langle \alpha,\Psi\rangle|^4 d\alpha\\
&=\pi R^2 \|\Psi\|^4 \int_{B_R(0)}
p(\alpha)^2
d\alpha\\
&\leq \pi R^2\|\Psi\|^4
\end{align}
where we used
that~$\int_{B_R(0)} p(\alpha)^2 d\alpha \leq\int_{B_R(0)} p(\alpha) d\alpha \leq 1$.  Thus
\begin{align}
\var(X)&\leq \bbE[X^2]-\bbE[X]^2\\
&\leq \pi R^2 \|\Psi\|^4\ .
\end{align}
Let~$X_1,\ldots,X_L$ be i.i.d.~realizations of~$X$
and
\begin{align}
\overline{X}:=\frac{1}{L}\sum_{j=1}^L X_j \ .
\end{align}
Then
\begin{align}
\bbE\left[\overline{X}\right]&= \bbE\left[X\right]\\
\var(\overline{X})&=\var(X)/L
\leq \pi R^2 \|\Psi\|^4/L\ ,
\end{align}
and it follows by Chebyshev's inequality that
\begin{align}
\Pr\left[|\overline{X}-\bbE[\overline{X}]|\geq \kappa \|\Psi\|^2\right]&\leq \frac{\var(\overline{X})}{\kappa^2\|\Psi\|^4} =\frac{\pi R^2}{\kappa^2 L}\ .\label{eq:chebyscheffm}
\end{align}
Since
\begin{align}
\left|\overline{X}-\|\Psi\|^2\right|
&\leq \left|\overline{X}-\bbE[\overline{X}]\right|+
\left|\|\Psi\|^2-\bbE[\overline{X}]\right|\\
&= \left|\overline{X}-\bbE[\overline{X}]\right|+
\left|\|\Psi\|^2-\bbE[X]\right|\\
&\leq  \left|\overline{X}-\bbE[\overline{X}]\right|+
\|\Psi\|^2 \cdot \frac{\langle \Psi,H\Psi\rangle}{2 R^2\|\Psi\|^2}
\end{align}
by~\cref{eq:multiplicativeerrorpsihpsi}, it follows that
\begin{align}
\Pr\left[|\overline{X}-\|\Psi\|^2|
\geq \left(\varepsilon/2+
\frac{\langle\Psi,H\Psi\rangle}{2 R^2\|\Psi\|^2}
\right)\|\Psi\|^2
\right]&\leq \frac{4\pi R^2}{\varepsilon^2 L}\ ,\label{eq:usefullowerbound}
\end{align}
where we substituted~$\varepsilon/2$ for~$\kappa$ in~\cref{eq:chebyscheffm}.

Thus Eq.~\eqref{eq:usefullowerbound} can be rewritten using the definition~$\Psi'=\Psi/\|\Psi\|$ as
\begin{align}
\Pr\left[|\overline{X}-\|\Psi\|^2|
\geq \left(\varepsilon/2+
\frac{\langle \Psi',H\Psi'\rangle}{2 R^2}
\right)\|\Psi\|^2
\right]&\leq \frac{4\pi R^2}{\varepsilon^2 L}\ .\label{eq:lasteqxbarpsi}
\end{align}
Equation~\eqref{eq:lasteqxbarpsi} implies that 
\begin{align}
\Pr\left[(1-\varepsilon)\|\Psi\|^2\leq  \overline{X}\leq (1+\varepsilon)\|\Psi\|^2\right]&\geq 1-p_f
\end{align}
for any  pair~$(R,L)$ such that
\begin{align}
R&\geq \sqrt{\frac{\langle \Psi',H\Psi'\rangle}{\varepsilon}}
\quad\textrm{and}\quad L\geq \frac{R^2}{4\pi p_f\varepsilon^2}\ .\label{eq:RLcondition}
\end{align}
It is clear that if~$E\geq \langle \Psi',H\Psi'\rangle$ is an upper bound on the energy of the normalized state~$\Psi'$, then the choice
\begin{align}
R:=\sqrt{\frac{E}{\varepsilon}}\quad\textrm{and}\quad L:=\left\lceil
\frac{E}{4\pi p_f\varepsilon^3}\right\rceil
\end{align}
satisfies condition~\eqref{eq:RLcondition}.  This implies the claim.
\end{proof}

It is clear that this can be turned into a probabilistic algorithm $\Afastnorm_E$ for estimating the quantity~$\|\Psi\|^2$ up to a multiplicative error. 
The algorithm~$\Afastnorm_E$ takes as input a description of a~$n$-mode superposition~$\Psi$ of~$\chi$ Gaussian states and parameters~$\varepsilon>0, p_f\in(0,1)$, and it returns an estimate of the norm squared~$\|\Psi\|^2$ up to multiplicative error~$\varepsilon$ with success probability~$1-p_f$. It requires an upper bound~$E$ on the energy of the normalized state~$\Psi/\|\Psi\|$.
We establish the associated claim in~\cref{lem:fastnorm}. \Cref{lem:fastnorm} is an immediate consequence of~\cref{lem:fastnormprobabilistic}.

\begin{figure}[H]
\raggedright
\par\rule{\columnwidth}{1pt}
\textbf{Algorithm $\Afastnorm_E$}
\vspace{-2mm}
\par\rule{\columnwidth}{1pt}
\textbf{Input: }{$\{c_j, \Delta_j\}_{j=1}^\chi,\varepsilon, p_f$} \\
where
$\begin{cases}
    \{c_j, \Delta_j=(\Gamma_j,d_j,r_j)\}_{j=1}^\chi \in (\bbC \times \Desc_n)^\chi \\
    \varepsilon > 0, p_f\in(0,1), E>0
\end{cases}$ \\
\textbf{Output: }{$\overline{X} \in \bbC$} 
\begin{algorithmic}[1]
    \State{\label{alg:fastnorm-R}$R\leftarrow\sqrt{\frac{E}{\varepsilon}}$}
    \State{\label{alg:fastnorm-L}$L\leftarrow \big\lceil\frac{E}{4\pi p_f\varepsilon^3}\big\rceil$}
    \For{$\ell \leftarrow 1$ to~$L$}
        \State{\label{alg:fastnorm-alpha}$\alpha \leftarrow$ randomly sample from the Lesbegue measure on~$B_R(0) \subset \mathbb{C}^n$}
        \State{\label{alg:fastnorm-Dp}$ \Delta \leftarrow (I, d(\alpha), 1)$}
        \For{$j\leftarrow 1$ to~$\chi$}
            \State{\label{alg:fastnorm-oj}$o_j \leftarrow \Aoverlap(\Delta, \Delta_j)$}
        \EndFor
        \State{\label{alg:fastnorm-Xl}$X_l \leftarrow  R^2 \left| \sum_{j=1}^\chi c_j o_j \right|^2$}
    \EndFor
    \State{\label{alg:fastnorm-Xoverline}$\overline{X} \leftarrow \frac{1}{L}\sum_{\ell=1}^L X_\ell$}
    \State{\textbf{return}~$\overline{X}$ }
\end{algorithmic}
\vspace{-3mm}
\par\rule{\columnwidth}{1pt}
\label{alg:fastnorm}
\end{figure}

\begin{theorem}
\label{lem:fastnorm}
The algorithm~$\Afastnorm_E$ given 
 in~\cref{alg:fastnorm} achieves the following, given
as input a tuple~$\left\{ c_j, \Delta(\psi_j)\right\}_{j\in[\chi]}$ specifying a linear combination~$|\Psi\rangle = \sum_{j=1}^\chi c_j\ket{\psi_j}\in\cH_n$ of Gaussian states~$\psi_j,j\in[\chi]$ where  
\begin{align}
    \langle \Psi',H\Psi'\rangle \leq E\quad\textrm{with}\quad \Psi'=\Psi/\|\Psi\|\ ,\label{eq:energyupperboundpsihpsi}
\end{align}
and parameters~$\varepsilon>0$ and~$p_f\in(0,1)$,
the output~$\overline{X}$ of~$\Afastnorm_E$ satisfies
\begin{align}
    \label{eq:lem-alg-fastnorm-claim}
    (1-\varepsilon)\|\Psi\|^2\leq  \overline{X}\leq (1+\varepsilon)\|\Psi\|^2
\end{align}
with probability at least~$1-p_f$. The algorithm has runtime~$O\left(\frac{\chi n^3  E}{p_f \varepsilon^3}\right)$.
\end{theorem}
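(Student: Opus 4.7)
The plan is to observe that Algorithm $\Afastnorm_E$ is a direct algorithmic implementation of the probabilistic process analyzed in Lemma~\ref{lem:fastnormprobabilistic}, so the correctness claim will follow by verifying that the random variables $X_\ell$ produced inside the loop of~\cref{alg:fastnorm} agree in distribution with the variables $X_j$ of that lemma. Concretely, I would first note that in Line~\ref{alg:fastnorm-R} and Line~\ref{alg:fastnorm-L} the algorithm sets $R$ and $L$ exactly to the values prescribed by Lemma~\ref{lem:fastnormprobabilistic}, and that the sample $\alpha$ drawn in Line~\ref{alg:fastnorm-alpha} is uniform on $B_R(0)\subset\mathbb{C}^n$. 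Given the hypothesis $\langle \Psi',H\Psi'\rangle\le E$ in~\eqref{eq:energyupperboundpsihpsi}, the conclusion of the lemma then yields exactly the multiplicative-error guarantee~\eqref{eq:lem-alg-fastnorm-claim} with failure probability at most $p_f$, provided only that the quantity $\overline{X}$ returned in Line~\ref{alg:fastnorm-Xoverline} coincides with the empirical average~\eqref{eq:overlineX}.

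The remaining verification step is therefore to show that
\begin{align}
X_\ell \;=\; R^2 \, |\langle \alpha,\Psi\rangle|^2
\end{align}
for the specific $\alpha$ sampled in iteration $\ell$. For this I would argue that the tuple $\Delta=(I,d(\alpha),1)$ set in Line~\ref{alg:fastnorm-Dp} is, by~\cref{def:description} together with~\cref{eq:dispcoherent} and the normalization $\langle\alpha,\alpha\rangle=1$, a valid description of the coherent state $|\alpha\rangle$. Applying~\cref{lem:overlap} to the subroutine call in Line~\ref{alg:fastnorm-oj} then gives $o_j=\Aoverlap(\Delta,\Delta_j)=\langle\alpha,\psi_j\rangle$. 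By linearity of the inner product, the quantity computed in Line~\ref{alg:fastnorm-Xl} is
\begin{align}
X_\ell \;=\; R^2\,\left|\sum_{j=1}^\chi c_j\langle\alpha,\psi_j\rangle\right|^2 \;=\; R^2\,|\langle\alpha,\Psi\rangle|^2 ,
\end{align}
matching~\eqref{eq:Xj}. Combined with the i.i.d.\ structure of the outer loop, this identifies $\overline{X}$ in Line~\ref{alg:fastnorm-Xoverline} with the estimator analyzed in Lemma~\ref{lem:fastnormprobabilistic}, and the correctness claim follows.

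For the runtime analysis, I would count: the outer loop runs $L=\lceil E/(4\pi p_f\varepsilon^3)\rceil$ times; each iteration draws one sample in $\mathbb{C}^n$, performs $\chi$ calls to $\Aoverlap$ of cost $O(n^3)$ each (by~\cref{lem:overlap}), and finishes with an $O(\chi)$ summation and an $O(1)$ update of the running mean. The total cost is thus $O(L\cdot\chi\cdot n^3)=O(\chi n^3 E/(p_f\varepsilon^3))$, matching the stated bound.

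The technical heavy lifting (the Markov and Chebyshev bounds, bounding $\operatorname{Var}(X)$, and the choice of $R,L$) has already been discharged inside Lemma~\ref{lem:fastnormprobabilistic}, so no genuine obstacle remains; the only point that deserves care is the matching of data structures, in particular confirming that $(I,d(\alpha),1)$ is the legitimate description of $|\alpha\rangle$ and that the sum $\sum_j c_j o_j$ really equals $\langle\alpha,\Psi\rangle$ rather than some conjugated version, so that the modulus squared in Line~\ref{alg:fastnorm-Xl} aligns with~\eqref{eq:Xj}. Once these identifications are made, the theorem is essentially a bookkeeping corollary of Lemma~\ref{lem:fastnormprobabilistic} and~\cref{lem:overlap}.
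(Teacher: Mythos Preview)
Your proposal is correct and follows essentially the same approach as the paper's proof: both verify that the tuple $\Delta=(I,d(\alpha),1)$ describes the coherent state $|\alpha\rangle$, invoke \cref{lem:overlap} to identify $o_j=\langle\alpha,\psi_j\rangle$, deduce that $X_\ell=R^2|\langle\alpha,\Psi\rangle|^2$ matches~\eqref{eq:Xj}, and then appeal to \cref{lem:fastnormprobabilistic} for correctness and to $L\cdot\chi\cdot\mathsf{time}(\Aoverlap)$ for the runtime. Your version is slightly more explicit about why $(I,d(\alpha),1)$ satisfies \cref{def:description}, but otherwise the arguments are the same.
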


\textit{Note added:} The procedure in Threorem~\ref{lem:fastnorm} can be modified by using a median-of-means argument as done in \cite[Section 4.1]{bravyiSimulationQuantumCircuits2019a} (see also \cite[Lemma 10]{reardon-smithImprovedSimulationQuantum2023}). This gives an improved runtime of $O\left(\chi n^3  E \varepsilon^{-3} \log(1/p_f)\right)$.

\begin{proof}
Consider the superposition
\begin{align}
    | \Psi \rangle = \sum_{j=1}^\chi c_j | \psi(\Delta_j) \rangle 
\end{align}
and
\begin{align}
    \langle \Psi',H\Psi'\rangle \leq E\quad\textrm{where}\quad \Psi'=\Psi/\|\Psi\|\ .
\end{align}

In Line~\ref{alg:fastnorm-alpha} the algorithm samples~$\alpha$ from the Lebesgue measure on~$B_R(0)$.
In Line~\ref{alg:fastnorm-Dp} we set~$\Delta$ as the description of the coherent state~$|\alpha\rangle$. Then, by the properties of the algorithm~$\Aoverlap$ we have
\begin{align}
    o_j =  \langle \alpha,\psi_j\rangle
\end{align}
in Line~\ref{alg:fastnorm-oj}.
Then, in Line~\ref{eq:Xj} we have 
\begin{align}
    X_j&= R^2 |\langle \alpha,\Psi\rangle|^2 \\
    &= R^2 \left| \sum_{j=1}^n c_j \langle \alpha,\psi(\Delta_j)\rangle \right|^2 
\end{align}
which is an instance of the random variable given in~\cref{eq:Xj}.
Then, in Line~\ref{alg:fastnorm-Xoverline} we have~$\overline{X}$ given in~\cref{eq:overlineX}.
By choosing~$R$ and~$L$ respectively as in Lines~\ref{alg:fastnorm-R} and~\ref{alg:fastnorm-L}~\cref{lem:fastnormprobabilistic} applies, giving the claim. 

The algorithm has runtime 
\begin{align}
    L \cdot \chi \cdot \mathsf{time}(\Aoverlap) 
    &=
O\left(\frac{\chi n^3 E}{p_f \varepsilon^3}\right)\ ,
\end{align}
where we used~$\mathsf{time}(\Aoverlap) =O(n^3)$.
\end{proof}

We note that~$\Afastnorm_E$
implicitly appears to depend on the norm~$\|\Psi\|$
that it estimates as a consequence of the assumption~\eqref{eq:energyupperboundpsihpsi}, which itself involves the normalized state~$\Psi'=\Psi/\|\Psi\|$. 
In practice, it suffices to use a sufficiently large upper bound~$E$ on the energy of the normalized state~$\Psi'$. In~\cref{sec:measurementsuperpos}, we will discuss how this can be applied in the context of classically simulating Gaussian measurements on a superposition of Gaussian states.

Because of its central relevance to the complexity of our algorithm, let us briefly comment on how the energy (i.e., mean photon number) of the states occurring during the computation can be bounded. The corresponding bound depends on the mean photon number~$N$ of the initial superposition, as well as the unitaries applied.  Indeed, one can use that these unitaries correspond to energy-limited channels.

Recall we use the Hamiltonian given in~\cref{eq:hamiltonian}.
A channel~$\cN$ is said to be energy-limited if it maps energy bounded states to energy bounded states \cite{winter2017energyconstrained,shirokovExtensionQuantumChannels2020}. 
 Gaussian channels including finite displacements, phase shifters, beamsplitters and finite squeezing are energy-limited. Concretely, we have
\begin{alignat}{2}
    \tr(H D(\alpha)^\dagger \rho D(\alpha)) &= \tr(H\rho) + d(\alpha)^Td(\alpha) + 2 s^T d(\alpha) 
\end{alignat}
for all~$\alpha\in\bbC^n$,
where~$s\in\bbR^{2n}$ is the displacement of the state~$\rho$, and
\begin{alignat}{2}
    \tr(H P_j(\phi)^\dagger \rho P_j(\phi)) &= \tr(H \rho) 
    &&\quad\text{for all}\quad \phi\in\bbR \ , \\
    \tr(H B_{j,k}(\phi)^\dagger \rho B_{j,k}(\phi)) &= \tr(H \rho) 
    &&\quad\text{for all}\quad \omega\in\bbR \ , \\
    \tr(H S_j(z)^\dagger \rho S_{j}(z)) &\leq e^{2z} \tr(H \rho)
    &&\quad\text{for all}\quad z>1 \ ,
\end{alignat}
for all~$j,k\in[n]$.

Performing a heterodyne measurement and averaging over the measurement result amounts to the channel
\begin{align}
    \cE(\rho)&=\frac{1}{\pi}\int_{\mathbb{C}}
    \proj{\alpha} \rho \proj{\alpha } d\alpha\ .
    \label{eq:heterodynemeasurementchannel}
\end{align}
This channel is also energy-limited because we have 
\begin{align}
\tr(H\cE(\rho))&=\tr(H\rho)+2 \ . \label{eq:heterodynemeasurementenergy}
\end{align}
We refer to~\cref{app:momentlimitedheterodynemeasurement} for a proof of Eq.~\eqref{eq:heterodynemeasurementenergy}.

We note that somewhat more natural-looking expressions can be obtained for non-centered states by considering the variance
\begin{align}
    \var_\rho&:=\sum_{j=1}^{2n}  \var_\rho(R_j)= \sum_{j=1}^n \left(\tr(R_j^2\rho) - \tr(R_j\rho)^2 \right) \ .
\end{align}
Displacement, phase shift and beamsplitter channels leave~$\var_\rho$ invariant, that is
\begin{align}
\var_\rho&=\var_{\rho'}
\end{align}
with~$\rho'$ the state after applying one of these channels to~$\rho$. On the other hand, for the squeezing channel with squeezing parameter~$z>1$ we have
\begin{align}
\var_{\rho'}\leq e^{2z} \var_\rho \ .
\end{align}

\subsection{Measurement of a superposition\label{sec:measurementsuperpos}}
Here we describe an algorithm~$\Ameasureprob$ which, given 
a tuple~$\left\{ c_j, \Delta(\psi_j)\right\}_{j\in[\chi]}$ specifying a superposition~$|\Psi\rangle = \sum_{j=1}^\chi c_j |\psi_j\rangle$ of Gaussian states, computes output probabilities associated with measuring some (respectively all) modes using a heterodyne measurement.

In more detail, suppose we would like to estimate the probability density function when measuring~$k\leq n$~modes. Without loss of generality, let us consider the case where we measure the first~$k$ modes. The corresponding output distribution is absolutely continuous with respect to the Haar measure on~$\mathbb{C}^k$, with corresponding probability density function given by
\begin{align}
\label{eq:p-psi-alpha}
     p_\Psi(\alpha)&=\frac{1}{\pi^k}\|\Pi_\alpha\Psi\|^2\quad\textrm{where}\quad \Pi_\alpha=\proj{\alpha}
\otimes I^{\otimes n-k}\ 
\end{align}
for~$\alpha\in\mathbb{C}^k$. 

For any~$\alpha\in\mathbb{C}^k$, we have that 
\begin{align}
    \Pi_\alpha \Psi&=   \sum_{j=1}^\chi c_j \Pi_\alpha\psi_j=\sum_{j=1}^\chi c'_j(\alpha)\psi'_j(\alpha)
\end{align}
where
\begin{align}
    c'_j(\alpha) &= c_j\|\Pi_\alpha \psi_j\|\\    \psi'_j(\alpha)&=\Pi_\alpha\psi_j/\|\Pi_\alpha\psi_j\|
\end{align}
is a superposition of~$\chi$ (or fewer) Gaussian states~$\{\psi'_j(\alpha)\}_{j=1}^\chi$. Furthermore, descriptions of these states and the value of the coefficients~$\{c_j(\alpha)\}_{j=1}^\chi$
can be computed using the algorithm~$\Ameasure$.
Since the quantity~$p_\Psi(\alpha)$ in~\cref{eq:p-psi-alpha} essentially is the squared norm  of this superposition, this can then be estimated using a norm estimation routine.
We immediately obtain the following:
\begin{lemma}
Let~$E>0$.
There are algorithms~$\Ameasureprobexact$ and~$\Ameasureprobapproximate_E$
that, given as an input 
a tuple~$\left\{ c_j, \Delta(\psi_j)\right\}_{j\in[\chi]}$ specifying a state~$
 |\Psi\rangle=\sum_{j=1}^\chi c_j\ket{\psi_j}$ and~$\alpha\in\mathbb{C}$, satisfy the following.
 \begin{enumerate}[(i)]
\item The algorithm~$\Ameasureprobexact$ computes~$p_\Psi(\alpha)$
in time~$O(\chi^2 n^3)$.
\item 
Assume that the normalized post-measurement state
\begin{align}
\Psi'_\alpha&=\Pi_\alpha\Psi/\|\Pi_\alpha\Psi\|
    \end{align}
    satisfies 
    \begin{align}
    \langle \Psi'_\alpha,H\Psi'_\alpha\rangle &\leq E\ ,\label{eq:psiprimeHpsiprime}
        \end{align}
        the  output of~$\Ameasureprobapproximate_E$ is a value~$p$ that satisfies        ~$p/p_\Psi(\alpha)\in [1-\varepsilon,1+\varepsilon]$, except with probability~$p_f$.
 The runtime of the algorithm is~$O\left(\frac{\chi n^3 E}{p_f\varepsilon^3}\right)$. 
\end{enumerate}
\end{lemma}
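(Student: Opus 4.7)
The plan is to realize both algorithms as a two-stage pipeline: (i) apply $\Ameasure$ term-by-term to convert the input superposition $\Psi = \sum_{j=1}^\chi c_j \psi_j$ into a description of the post-measurement vector $\Pi_\alpha \Psi = \sum_{j=1}^\chi c'_j(\alpha)\,\psi'_j(\alpha)$ as a superposition of at most $\chi$ Gaussian states, and then (ii) feed this new description into a norm-estimation routine (either $\Aexactnorm$ or $\Afastnorm_E$) in order to evaluate the quantity $p_\Psi(\alpha) = \|\Pi_\alpha\Psi\|^2/\pi^k$.

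For stage (i), I would loop $j = 1,\ldots,\chi$ and invoke $\Ameasure(\Delta(\psi_j),\alpha)$. By \cref{lem:Ameasure}, this returns in time $O(n^3)$ a description $\Delta(\psi'_j(\alpha))$ of the normalized post-measurement state $\psi'_j(\alpha) = \Pi_\alpha\psi_j/\|\Pi_\alpha\psi_j\|$; moreover, the subroutine $\Aprob$ internally computes $p_j := \frac{1}{\pi^k}\|\Pi_\alpha\psi_j\|^2$ in time $O(k^3)$, so $\|\Pi_\alpha\psi_j\| = \pi^{k/2}\sqrt{p_j}$ is immediately available and one sets $c'_j(\alpha) := c_j \pi^{k/2}\sqrt{p_j}$. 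The total cost of this stage is $O(\chi n^3)$. Since $\Pi_\alpha$ is a (non-unitary) operator, the identity
\begin{align}
\Pi_\alpha\Psi = \sum_{j=1}^\chi c'_j(\alpha)\,\psi'_j(\alpha)
\end{align}
holds as an element of $\cH_n$, and therefore $\pi^k p_\Psi(\alpha) = \|\Pi_\alpha\Psi\|^2$ coincides with the squared norm of the superposition described by $\{c'_j(\alpha),\Delta(\psi'_j(\alpha))\}_{j=1}^\chi$.

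For $\Ameasureprobexact$, I would invoke $\Aexactnorm$ on this data, which by the naive norm lemma returns $\|\Pi_\alpha\Psi\|^2$ exactly in time $O(\chi^2 n^3)$, and output $p := \|\Pi_\alpha\Psi\|^2/\pi^k$; the preprocessing in stage (i) is absorbed, giving overall runtime $O(\chi^2 n^3)$. For $\Ameasureprobapproximate_E$, I would instead invoke $\Afastnorm_E$ with parameters $\varepsilon, p_f$. The hypothesis~\eqref{eq:psiprimeHpsiprime} is precisely the input assumption~\eqref{eq:energyupperboundpsihpsi} required by \cref{lem:fastnorm} applied to the vector $\Pi_\alpha\Psi$ (whose normalization is $\Psi'_\alpha$), so the output $\overline{X}$ satisfies $\overline{X}/\|\Pi_\alpha\Psi\|^2\in[1-\varepsilon,1+\varepsilon]$ except with probability $p_f$; dividing by $\pi^k$ preserves the multiplicative error, giving the claim. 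The runtime of $\Afastnorm_E$ is $O(\chi n^3 E/(p_f\varepsilon^3))$, which dominates stage (i).

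There is no real obstacle here, as everything is assembled from building blocks already proved; the only item that needs a moment of care is verifying that the energy bound assumed for $\Psi'_\alpha$ is exactly the hypothesis that \cref{lem:fastnorm} needs after rewriting $\|\Pi_\alpha\Psi\|^2$ as the squared norm of a superposition of $\chi$ Gaussian states. Since $\Pi_\alpha\Psi/\|\Pi_\alpha\Psi\| = \Psi'_\alpha$, the condition $\langle \Psi'_\alpha,H\Psi'_\alpha\rangle\leq E$ translates verbatim into the hypothesis of \cref{lem:fastnorm}, so the bound $E$ may be passed directly to $\Afastnorm_E$ without modification.
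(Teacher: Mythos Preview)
Your proposal is correct and follows essentially the same approach as the paper: both construct the description~$\{c'_j(\alpha),\Delta(\psi'_j(\alpha))\}_{j\in[\chi]}$ of~$\Pi_\alpha\Psi$ via~$\Ameasure$ and~$\Aprob$ applied term-by-term, and then invoke either~$\Aexactnorm$ or~$\Afastnorm_E$ on this data, dividing the result by~$\pi^k$. Your observation that the energy hypothesis on~$\Psi'_\alpha$ is verbatim the assumption required by \cref{lem:fastnorm} is exactly the point the paper uses as well.
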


\begin{proof}
    The algorithm~$\Ameasureprobexact$ corresponds to running the algorithm~$\Aexactnorm$ with input~$\{c_j'(\alpha), \Delta(\psi_j'(\alpha))\}_{j\in[\chi]}$ which computes~$\| \Pi_\alpha \Psi \|^2$ and outputting~$p_\Psi(\alpha) = \| \Pi_\alpha \Psi \|^2/\pi^k$.
    This requires determining each~$\|\Pi_\alpha \psi_j\|^2$ using the algorithm~$\Aprob$, i.e., 
    \begin{align}
        \|\Pi_\alpha \psi_j\|^2 = \pi^k \Aprob(\Delta(\psi_j(\alpha)), \alpha) \ ,
    \end{align}
    and determining the description of each post-measurement state~$\psi_j'(\alpha)$ using the algorithm~$\Ameasure$, i.e., 
    \begin{align}
        \Delta(\psi_j'(\alpha)) = \Ameasure(\Delta(\psi_j(\alpha)), \alpha) \ .
    \end{align}
    Then, the runtime of~$\Ameasureprobexact$ is
    \begin{align}
        &\chi \left(\mathsf{time} (\Ameasure) + \mathsf{time} (\Aprob)\right) 
        + \mathsf{time}(\Aexactnorm) = O(\chi^2 n^3) \ .
    \end{align}

    The algorithm~$\Ameasureprobapproximate_E$ corresponds to running the algorithm~$\Afastnorm_E$ with input~$\{c_j'(\alpha), \Delta(\psi_j'(\alpha))\}_{j\in[\chi]}$, $\varepsilon$ and~$p_f$. By~\cref{lem:fastnorm}, dividing by~$\pi^k$ gives a value~$p$ satisfying the claim.
    As for the exact algorithm, this requires~$\chi$ uses of the algorithms~$\Aprob$ and~$\Ameasure$.
    Then, the runtime of~$\Ameasureprobexact$ is
    \begin{align}
        &\chi \left(\mathsf{time} (\Ameasure) + \mathsf{time} (\Aprob)\right) 
        + \mathsf{time}(\Afastnorm_E) = O\left(\frac{\chi n^3 E}{p_f\varepsilon^3}\right) \ .
    \end{align}
\end{proof}

Application of the algorithm $\Ameasureprobapproximate_E$
requires an upper bound~$E$ on the energy~$\langle \Psi'_\alpha,H\Psi'_\alpha\rangle$ of the normalized post-measurement state~$\Psi'_\alpha$ associated with the outcome~$\alpha\in\mathbb{C}^k$, see Condition~\eqref{eq:psiprimeHpsiprime}.
We note that an upper bound on the energy~$\langle \Psi,H\Psi\rangle$ of the (final) state~$\Psi$ before the measurement can easily be obtained 
using the fact that the  involved operations are energy-limited, see~\cref{sec:fastnormestimationalg}. 
However, this does not suffice to obtain an upper bound on~$\langle \Psi'_\alpha,H\Psi'_\alpha\rangle$ in general, since this quantity also depends on the specific outcome~$\alpha\in\mathbb{C}^k$ considered. 
Indeed, as argued in~\cref{sec:appendixvariancepostmeasure}, there are states~$\Psi$  with small mean photon number (energy) and~$\alpha\in\mathbb{C}^k$ such that the corresponding normalized post-measurement state~$\Psi'_\alpha$ has (arbitrarily) large energy.

Let us argue that this is not a problem in practice. First, we have 
\begin{align}
    \Pr\left[\|\alpha\|^2\geq R^2\right]&\leq \frac{\langle \Psi,H\Psi\rangle}{2R^2}\label{eq:upperboundinequalityalphanormR}
\end{align}for any~$R>0$
by the same reasoning as in the proof of Eq.~\eqref{eq:markovalphareq}. That is, most of the probability mass of the distribution~$p_\Psi$ is contained within a ball of radius~$R$ dictated by the energy of~$\Psi$. This means that the computational problem of determining~$p_\Psi(\alpha)$ will typically only be of interest for~$\alpha\in\mathbb{C}^k$ with small norm.
Furthermore, for such ``typical'' measurement outcomes~$\alpha\in\mathbb{C}^k$, the 
post-measurement state~$\Psi'_\alpha$
has small norm. Indeed, since the heterodyne measurement channel~$\cE$ is energy-limited (see~\cref{eq:heterodynemeasurementchannel,eq:heterodynemeasurementenergy})
we have 
\begin{align}
    \int_{\mathbb{C}^k} p_\Psi(\alpha)
\langle \Psi'_\alpha, H \Psi'_\alpha\rangle d\alpha &=\langle \Psi,H\Psi\rangle+1\ .\label{eq:markovinequalityexpectation}
    \end{align}
By Markov's inequality, this implies that the distribution~$p_\Psi$ has support concentrated on~$\alpha\in B_R(0) \subset \mathbb{C}^k$ with the property that the associated post-measurement state~$\langle \Psi'_\alpha,H\Psi'_\alpha\rangle$ has bounded energy (compared to that of~$\Psi$). 

We can summarize these observations by the following lemma. It shows that 
the estimation of~$p_\Psi(\alpha)$ works efficiently whenever~$\alpha$ belongs to a ``typical" set~$\cT_\delta$. That is, we have:
\begin{lemma}\label{lem:typicalsubsetlemma}
    Suppose 
    \begin{align}
\langle \Psi,H\Psi\rangle &\leq E\ .
    \end{align}
    Then, for any~$\delta>0$, there is a set~$\cT_\delta\subset\mathbb{C}^k$ with the following properties:
    \begin{enumerate}[(i)]
\item 
The set~$\cT_\delta$ has probability mass at least~$1-\delta$ under the distribution~$p_\Psi$.
\item 
$\cT_\delta\subset B_{\sqrt{E/\delta}}(0)$, i.e., each~$\alpha\in\cT_\delta$ has squared norm bounded by~$\|\alpha\|^2\leq E/\delta$.
\item 
There is a classical algorithm that, for any~$\alpha\in\cT_\delta$, outputs an estimate~$p$ of~$p_\Psi(\alpha)$ which satisfies~$p/p_\Psi(\alpha)\in [1-\varepsilon,1+\varepsilon]$, and its runtime is bounded by~$O(\chi n^3 E/(\delta p_f\varepsilon^3))$.
        \end{enumerate}
 \end{lemma}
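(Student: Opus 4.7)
The plan is to define the typical set $\cT_\delta$ as the intersection of two explicit ``good'' subsets of $\mathbb{C}^k$, each chosen so that Markov's inequality guarantees $p_\Psi$-mass at least $1-\delta/2$. For the norm constraint, I would invoke Eq.~\eqref{eq:upperboundinequalityalphanormR} with $R^2 = E/\delta$, which yields $\Pr_{\alpha\sim p_\Psi}[\|\alpha\|^2\geq E/\delta] \leq E/(2\cdot E/\delta) = \delta/2$. For the energy constraint on the post-measurement state, the key ingredient is Eq.~\eqref{eq:markovinequalityexpectation}, which states that the average of $\langle\Psi'_\alpha,H\Psi'_\alpha\rangle$ under $p_\Psi$ equals $\langle\Psi,H\Psi\rangle+1\leq E+1$. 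Markov's inequality applied to this nonnegative random variable gives
\begin{align}
\Pr_{\alpha\sim p_\Psi}\!\left[\langle\Psi'_\alpha,H\Psi'_\alpha\rangle\geq \tfrac{2(E+1)}{\delta}\right]\leq \tfrac{\delta}{2}.
\end{align}

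Setting $E' := 2(E+1)/\delta$, I would then define
\begin{align}
\cT_\delta := \left\{\alpha\in\mathbb{C}^k\ :\ \|\alpha\|^2 \leq E/\delta\ \text{ and }\ \langle\Psi'_\alpha,H\Psi'_\alpha\rangle \leq E'\right\}.
\end{align}
The union bound over the two Markov estimates immediately gives property (i), and property (ii) is built into the definition. For property (iii), I would apply the algorithm $\Ameasureprobapproximate_{E'}$ from the preceding lemma to any $\alpha\in\cT_\delta$: by construction the energy hypothesis~\eqref{eq:psiprimeHpsiprime} holds for this $\alpha$ with parameter $E'$, so the algorithm returns a multiplicative-$\varepsilon$ estimate of $p_\Psi(\alpha)$ with failure probability at most $p_f$ and runtime $O(\chi n^3 E'/(p_f\varepsilon^3)) = O(\chi n^3 E/(\delta p_f\varepsilon^3))$, matching the stated bound.

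I do not expect a substantial obstacle here. The only subtle point worth noting is that the algorithm need not verify that its input $\alpha$ belongs to $\cT_\delta$; item (iii) is a \emph{conditional} guarantee that applies whenever $\alpha$ is typical. Conceptually, the identity~\eqref{eq:markovinequalityexpectation} does essentially all the nontrivial work — it converts the physical statement that heterodyne measurement is energy-limited (with an additive cost of one photon) into the probabilistic statement that the post-measurement energy is, on average, only mildly larger than the pre-measurement energy, which is precisely what is needed to invoke the fast norm-estimation subroutine at the advertised cost.
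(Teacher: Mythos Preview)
Your proposal is correct and matches the paper's proof essentially line for line: the paper defines $\tilde{E}=2(E+1)/\delta$ and $R=\sqrt{E/\delta}$, applies Markov's inequality via \eqref{eq:markovinequalityexpectation} and \eqref{eq:upperboundinequalityalphanormR} to obtain the two $\delta/2$ bounds, takes the same set $\cT_\delta$, and then invokes $\Ameasureprobapproximate_{\tilde{E}}$ for the runtime claim. Your $E'$ is precisely the paper's $\tilde{E}$, and your remark that membership in $\cT_\delta$ need not be certified by the algorithm is consistent with how the paper treats item~(iii).
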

\begin{proof}
Define 
\begin{align}
    \tilde{E}&=\frac{2(E+1)}{\delta} \ , \\
    R&=\sqrt{E/\delta}\ .
\end{align}
Then 
\begin{align}
\Pr\left[\langle \Psi_\alpha',H\Psi_\alpha'\rangle\geq \tilde{E}\right]&\leq \delta/2
\end{align}
by Markov's inequality and~\eqref{eq:markovinequalityexpectation}.
Furthermore, we have 
\begin{align}
\Pr\left[|\alpha\|^2 \geq R^2\right] &\leq 
\delta/2\ 
    \end{align}
    by Eq.~\eqref{eq:upperboundinequalityalphanormR}.
Defining the set~$\cT_\delta$ as 
\begin{align}
    \cT_\delta &:=\left\{
\alpha\in B_R(0)\ |\ 
\langle \Psi_\alpha',H\Psi_\alpha'\rangle\leq \tilde{E}   \right\}\subset \mathbb{C}^k\ ,
\end{align}
we thus have
\begin{align}
    \Pr\left[\alpha\in\cT_\delta\right]\geq 1-\delta\ 
\end{align}
by the union bound. Furthermore, for any~$\alpha\in \cT_\delta$, we can run~$\Ameasureprobapproximate_{\tilde{E}}$ obtaining, with probability at least~$p_f$, an estimate of~$p_\Psi(\alpha)$ to within multiplicative error~$\varepsilon$. The runtime is
\begin{align}
    O\left(\frac{\chi n^3 \tilde{E}}{p_f\varepsilon^3}\right)&=
    O\left(\frac{\chi n^3 E}{\delta p_f\varepsilon^3} \right)\ .
\end{align}
This is the claim.
\end{proof}

\textit{Note added:}
After posting our work to the arXiv, we were made aware of concurrent independent work by Hahn et al.~\cite{hahn2024classical}, which also establishes strong simulation algorithms for non-Gaussian operations along similar lines.

\section*{Acknowledgements}
BD and RK gratefully acknowledge support by the European Research Council under grant agreement no.~101001976 (project EQUIPTNT).

\appendix 

\section{Proof of~\cref{lem:fidelitybound}}
\label{app:proof-r-lowerbound}

Assume that~$\rho = \rho(\Gamma,d)$ has covariance matrix~$\Gamma\in\mathsf{Mat}_{2n\times 2n}(\mathbb{R})$ and displacement~$d\in\mathbb{C}^n$.
Recall that every coherent state~$|\alpha\rangle, \alpha \in\mathbb{C}^n$ is a Gaussian state~$\proj{\alpha}=\rho(I,\hat{d}(\alpha))$ with covariance matrix~$I$ and displacement vector~$\hat{d}(\alpha)$ with
\begin{align}
    \begin{gathered}
    \begin{aligned}
        \hat{d}_{2j-1}(\alpha)&=\sqrt{2} \re(\alpha_j)   \\
        \hat{d}_{2j}(\alpha)&=\sqrt{2} \im(\alpha_j)
    \end{aligned}
    \end{gathered}
    \quad\text{for}\quad
    j \in [n] \ .
\end{align}
It immediately follows from~\cref{eq:trace2gaussians} that the overlap~$\langle \alpha,\rho\alpha\rangle$ is maximized when choosing~$\alpha$ such that~$\hat{d}(\alpha)=d$, i.e., 
\begin{align}
\alpha_j&=\frac{1}{\sqrt{2}} (d_{2j-1}+id_{2j})\quad\textrm{for}\quad j\in [n]\ ,
\end{align}
and that for this choice of~$\alpha$, we have 
\begin{align}
|\langle \alpha,\psi\rangle|^2 &=\frac{2^{n}}{\sqrt{\det(I+\cov)}}\ .
\end{align}

Recall (see~\cref{sec:gaussian-states}) that the covariance matrix~$\Gamma$ of a pure~$n$-mode Gaussian state has the form
\begin{align}
    \cov &=KZK^T \ ,
\end{align}
where~$K\in Sp(2n) \cup O(2n)$ and~$Z=\diag(z_1, \ldots, z_n, z_1^{-1}, \ldots, z_n^{-1})$. Assuming that~$z_j>1$ for every~$j\in [n]$, it follows that
\begin{align}
\det(I+\cov)&=\prod_{j=1}^n (1+z_j)(1+z_{j}^{-1})=\prod_{j=1}^n (2+z_j+z_j^{-1}) \ .
\end{align}
Recall~\cref{eq:energy} which together with 
\begin{align}
    \langle \psi,H\psi\rangle  &= \sum_{j=1}^{2n} \langle \psi, R_j^2 \psi \rangle + n\ , \\
    d_j &= \langle \psi , R_j \psi\rangle \quad\text{for}\quad j \in [2n] 
\end{align}
gives
\begin{align}
    \tr(\Gamma) 
    &= 2 \left( \langle \psi,H\psi\rangle - d^T d - n \right) \\
    &= 2 \sum_{j=1}^{2n} \left( \langle \psi, R_j^2 \psi \rangle - \langle \psi, R_j \psi \rangle^2 \right) \\
    &= 2 \sum_{j=1}^{2n} \var_{\psi}(R_j) \ ,
\end{align}
where~$\var_{\psi}(R_j) = \langle \psi, R_j^2 \psi\rangle - \langle \psi, R_j \psi\rangle^2$.
Setting
\begin{align}
\lambda_j &=2+z_j+z_j^{-1}
\end{align}
we have 
\begin{align}
\lambda_j 
    &\geq 4\quad\textrm{ for all } j\in [n] \ ,\\
\det(I+\Gamma)
    &=\prod_{j=1}^n \lambda_j \ ,\\
\sum_{j=1}^{n}\lambda_j 
    &= \tr(\Gamma) + 2n \\
    &= 2 \sum_{j=1}^{2n} \var_{\psi}(R_j) + 2n \\
    &=: \Lambda \ .
\end{align}
It follows that
\begin{align}
\det(I+\Gamma)&\leq \sup_{\substack{
\lambda_j\geq 4\textrm{ for all }j\in [n] \\
\sum_{j=1}^n \lambda_j=\Lambda
}} \prod_{j=1}^n\lambda_j\ .
\end{align}
Let~$j<k$ be fixed and consider the function
\begin{align}
g(\lambda_1,\ldots,\lambda_n)&:=\prod_{\ell=1}^n \lambda_\ell\\
&=h(\lambda_j,\lambda_k)\prod_{\ell\in \{1,\ldots,n\}\backslash \{j,k\}}\lambda_\ell
\end{align}
where~$h(x,y):=xy$.
It is straightforward to check that for any~$\varepsilon\geq 8$
\begin{align}
\sup_{x,y \geq 4, x+y=\varepsilon}h(x,y)&=h(\varepsilon-4,4) \ . 
\end{align}
It follows by induction that 
\begin{align}
&\sup_{\substack{\lambda_j\geq 4\\
\sum_{j=1}^n \lambda_j=\Lambda}}g(\lambda_1,\ldots,\lambda_n)
=g(\Lambda,4,\ldots,4) \\
&\qquad\qquad= (\Lambda-4(n-1)) 4^{n-1} \\
&\qquad\qquad= \left(2 \sum_{j=1}^{2n} \var_{\psi}(R_j) - 2n + 4\right) 4^{n-1}
\end{align}
and we conclude that 
\begin{align}
\det(I+\cov) &\leq \left(2 \sum_{j=1}^{n} \var_{\psi}(R_j) - 2n + 4\right) 4^{n-1} \ .
\end{align}
This gives the claim
\begin{align}
|\langle \alpha,\psi\rangle|^2 &\geq \frac{2}{\sqrt{2 \sum_{j=1}^{n} \var_{\psi}(R_j) - 2n + 4}}\\
&= \frac{1}{\sqrt{ \frac{1}{2} \sum_{j=1}^{n} \left( \var_{\psi}(Q_j) + \var_{\psi}(P_j) - 1\right)  + 1}}\ .
\end{align}

\section{Proof of~\cref{prop:trace3statesWWeylPsi}}
\label{app:proof-trace3statesWWeylPsi}

Consider symmetric matrices~$\cov_1, \cov_2, \cov_3 \in \msf{Mat}_{2n}(\mathbb{R})$ satisfying~$\cov_j+i\Omega \geq 0$ for~$j\in[3]$ and vectors~$d_1, d_2\in\bbR^{2n}$. 
In the following, we give an expression for
\begin{align}
    \tr\left( D(\alpha) \rho(\cov_1, d_1) \rho(\cov_2, d_2) \rho(\cov_3,0) \right)
    \quad\text{for}\quad
    \alpha \in \bbC^n \ .
\end{align}

By Parseval's relation in~\cref{eq:parseval}, we have 
\begin{align}
    \label{eq:traux1}
    &\tr\left( D(\alpha) \rho(\cov_1, d_1) \rho(\cov_2, d_2) \rho(\cov_3,0) \right) \\
    &= \frac{1}{(2\pi)^n} \int_{\mathbb{R}^{2n}} d^{2n} \xi \, \overline{\chi_{\rho(\cov_1, d_1)D(-\alpha)}( \xi )} \chi_{\rho(\cov_2, d_2) \rho(\cov_3,0)}( \xi ) \ .
\end{align}

The characteristic function of~$ D(\alpha) \rho(\cov_1, d_1)$ is 
\begin{align}
    &\chi_{\rho(\cov_1, d_1) D(-\alpha)} (\xi) \\
    &\qquad= \tr\left(  \rho(\cov_1, d_1) D(-\alpha) D(\hat{d}^{-1}(\xi))  \right) \\
    &\qquad= \exp(\hat{d}(\alpha)^T i\Omega \xi / 2 ) \tr\left( \rho(\cov_1, d_1) D(\hat{d}^{-1}(\xi) - \alpha) \right) \\
    &\qquad= \exp(\hat{d}(\alpha)^T i\Omega \xi / 2 ) \tr\left( \rho(\cov_1, d_1) D(\hat{d}^{-1}(\xi - \hat{d}(\alpha))) \right) \\
    &\qquad= \exp(\hat{d}(\alpha)^T i\Omega \xi / 2 ) \chi_{\rho(\cov_1, d_1)} (\xi-\hat{d}(\alpha)) \\
    \label{eq:characWeylOpGaussianState} &\qquad= \exp\left( -\frac{1}{4} (\xi-\hat{d}(\alpha))^T \Omega^T \cov \Omega (\xi-\hat{d}(\alpha)) - d^T i \Omega (\xi-\hat{d}(\alpha)) + \frac{1}{2}\hat{d}(\alpha)^T i\Omega \xi \right) \ ,
\end{align}

where in the first and fourth identities we used the definition of characteristic function given in~\cref{eq:chi}, in the second identity we used~\cref{eq:Wprop2} and in the last identity we used~\cref{eq:chiGaussianState}. 
The characteristic function of~$\rho(\cov_2, d_2) \rho(\cov_3,0)$ is given by~\cref{eq:chiaux2-v2} in~\cref{prop:chiRho1d1Rho2}. \Cref{prop:chiRho1d1Rho2} uses the following:

\begin{lemma}
    \label{lem:chiProp2}
    Consider states~$\rho_1, \rho_2 \in \mathcal{B}(\cH_n)$. The characteristic function~$\chi_{\rho_1 \rho_2}$ satisfies
    \begin{align}
        \label{eq:chiProp2} \chi_{\rho_1 \rho_2} (\xi) &= 
        \frac{1}{(2\pi)^n} \int_{\bbR^{2n}} d^{2n}\eta \, \chi_{\rho_1}(\eta) \chi_{\rho_2}(\xi-\eta) \exp\left(\frac{1}{2}\xi^T i\Omega \eta \right) \ .
    \end{align}
\end{lemma}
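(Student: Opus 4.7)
The plan is to derive this convolution-like identity by substituting the inverse Weyl transform~\cref{eq:WeylTransform} for each of $\rho_1$ and $\rho_2$, collapsing the resulting triple product of displacement operators via the Weyl composition rule~\cref{eq:Wprop2}, and then using the trace-orthogonality of displacement operators to eliminate one of the two phase-space integrations.

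First I would write $\rho_j = \tfrac{1}{(2\pi)^n}\int d^{2n}\eta_j\, \chi_{\rho_j}(\eta_j)\, D(-\hat{d}^{-1}(\eta_j))$ for $j\in\{1,2\}$ and substitute into the definition~\cref{eq:chi} of $\chi_{\rho_1\rho_2}(\xi)=\tr(\rho_1\rho_2\, D(\hat{d}^{-1}(\xi)))$. This produces a double integral whose integrand is a scalar multiple of $\tr\bigl(D(-\hat{d}^{-1}(\eta_1))\, D(-\hat{d}^{-1}(\eta_2))\, D(\hat{d}^{-1}(\xi))\bigr)$. Two applications of~\cref{eq:Wprop2}, combined with the linearity of $\hat{d}$, collapse this triple product to a single displacement $D(\hat{d}^{-1}(\xi-\eta_1-\eta_2))$ multiplied by the scalar phase $\exp\bigl(-\eta_1^T i\Omega\eta_2/2+(\eta_1+\eta_2)^T i\Omega\xi/2\bigr)$.

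Next I would invoke the trace identity $\tr\bigl(D(\hat{d}^{-1}(\gamma))\bigr)=(2\pi)^n\,\delta^{2n}(\gamma)$, which follows from applying the inverse Weyl transform~\cref{eq:WeylTransform} to the identity operator (or equivalently from Parseval's relation~\cref{eq:parseval}). This delta function eliminates the $\eta_2$-integration and enforces $\eta_2=\xi-\eta_1$; renaming $\eta_1\mapsto\eta$ then yields precisely the one-dimensional integral asserted in~\cref{eq:chiProp2}, up to simplification of the phase.

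Finally, it remains to verify that the accumulated phase, evaluated on the support $\eta_2=\xi-\eta$, reduces to $\exp(\tfrac12\,\xi^T i\Omega\,\eta)$. I would do this by exploiting the antisymmetry $\Omega^T=-\Omega$, which forces $\xi^T\Omega\xi=\eta^T\Omega\eta=0$ and $\xi^T\Omega\eta=-\eta^T\Omega\xi$, so that the two exponents combine into a single term of the desired form. The main obstacle is not conceptual but purely the careful bookkeeping of signs across the two applications of~\cref{eq:Wprop2} and the final symplectic antisymmetry step; once these cancellations are tracked correctly, the identity drops out directly. An alternative route, which avoids the delta-function step entirely, would be to apply Parseval's relation~\cref{eq:parseval} to $\tr(\rho_1 \cdot A)$ with $A:=\rho_2 D(\hat{d}^{-1}(\xi))$, use $\overline{\chi_{\rho_1}(\eta)}=\chi_{\rho_1}(-\eta)$ (which holds by Hermiticity of $\rho_1$), and compute $\chi_A(\eta)$ from a single application of~\cref{eq:Wprop2}; this essentially repackages the same computation.
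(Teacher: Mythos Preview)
Your proposal is correct and follows essentially the same route as the paper: substitute the Weyl transforms for $\rho_1,\rho_2$, collapse the triple product of displacements via~\cref{eq:Wprop2}, apply the trace identity $\tr D=\delta$ to kill one integration, and simplify the residual symplectic phase using $\Omega^T=-\Omega$. The only cosmetic difference is that the paper uses the convention $\tfrac{1}{2\pi}\int f\,\delta=f$ (absorbing the $(2\pi)^n$ into its delta) whereas you carry the factor explicitly; the bookkeeping is otherwise identical.
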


\begin{proof}
    From~\cref{eq:chi,eq:WeylTransform} we have 
    \begin{align}
        \label{eq:aux2}
        \chi_{\rho_1 \rho_2} (\xi) = \frac{1}{(2\pi)^{2n}} 
        \int_{\bbR^{2n}} d^{2n} \eta \, \chi_{\rho_1}(\eta) 
        \int_{\bbR^{2n}} d^{2n} \gamma \, \chi_{\rho_2}(\gamma) \tr\left( D(-\hat{d}^{-1}(\eta))  D(-\hat{d}^{-1}(\gamma)) D(\hat{d}^{-1}(\xi)) \right) \ .
    \end{align}
    From~\cref{eq:Wprop2} we have
    \begin{align}
        D(-\hat{d}^{-1}(\eta))D(-\hat{d}^{-1}(\gamma))D(\hat{d}^{-1}(\xi)) 
        &= \exp\left( -\frac{1}{2} \eta^T i\Omega \gamma + \frac{1}{2} (\eta+\gamma)^T i\Omega \xi  \right) D(\hat{d}^{-1}(\xi-\eta-\gamma)) \ .
    \end{align}
    We use this together with 
    \begin{align}
        \tr( D(\xi) ) &= \delta(\xi) 
        \quad\text{for any}\quad \xi\in\bbR^{2n} \ , \\
        \label{eq:delta}
        \frac{1}{2\pi} \int_{\bbR^{2n}} d^{2n} x \, f(x) \delta(x-a) &= f(a)
        \quad\text{for any}\quad a\in\bbR^{2n} \ ,
    \end{align}
    to write~\cref{eq:aux2} as
    \begin{align}
        \chi_{\rho_1 \rho_2} (\xi) 
        &= \frac{1}{(2\pi)^{2n}}  
        \int_{\bbR^{2n}} d^{2n} \eta \, \chi_{\rho_1}(\eta) 
        \chi_{\rho_2}(\xi-\eta) \exp\left( -\frac{1}{2} \eta^T i\Omega (\xi-\eta) - \frac{1}{2} (\eta+\xi-\eta)^T i\Omega \xi  \right)  \\
        &= \frac{1}{(2\pi)^{n}}  
        \int_{\bbR^{2n}} d^{2n} \eta \, \chi_{\rho_1}(\eta) 
        \chi_{\rho_2}(\xi-\eta) \exp\left( \frac{1}{2} \xi^T i\Omega \eta \right) \ ,
    \end{align}
    where in the second identity we used~$\xi^T \Omega \xi =0$ for any~$\xi\in\bbR^{2n}$. This gives the claim.
\end{proof}

\begin{lemma} 
\label{prop:chiRho1d1Rho2}
Consider symmetric matrices~$\cov_2, \cov_3 \in \msf{Mat}_{2n}(\mathbb{R})$ and vectors~$\xi, d_2\in \mathbb{R}^{2n}$. The following holds:
\begin{align}
    \label{eq:chiaux2-v2}  \chi_{\rho(\cov_2, d_2) \rho(\cov_3,0) } \left( \xi \right) 
    = \frac{ \exp\left( - \frac{1}{4}  \xi^T \Omega^T\cov_4 \Omega\xi - d_2^T (\cov_2+\cov_3)^{-1} d_2 + i d_2^T (\cov_2+\cov_3)^{-1} (\cov_3-i\Omega) \Omega \xi \right)  }{\sqrt{\det(\Omega^T(\cov_2+\cov_3)\Omega/2)}} 
\end{align}
 with 
\begin{align}
    \cov_4 = \cov_3 - (\cov_3+i\Omega)(\cov_2+\cov_3)^{-1} (\cov_3-i\Omega) \ .
\end{align}
\end{lemma}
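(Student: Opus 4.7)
The plan is to apply the product formula in \cref{lem:chiProp2} to~$\rho_1=\rho(\Gamma_2,d_2)$ and~$\rho_2=\rho(\Gamma_3,0)$, substitute the explicit Gaussian forms of the two characteristic functions, and then evaluate the resulting Gaussian integral over~$\eta\in\bbR^{2n}$ by completing the square. No new ingredient beyond~\cref{lem:chiProp2}, the Gaussian form~\eqref{eq:chiGaussianState}, and the symplectic identities~$\Omega^T=-\Omega$, $\Omega^T\Omega=I$ is needed.

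First I would expand the integrand. Writing out
\begin{align*}
\chi_{\rho(\Gamma_2,d_2)}(\eta)&=\exp\!\left(-\tfrac14\eta^T\Omega^T\Gamma_2\Omega\eta - d_2^T i\Omega\eta\right),\\
\chi_{\rho(\Gamma_3,0)}(\xi-\eta)&=\exp\!\left(-\tfrac14(\xi-\eta)^T\Omega^T\Gamma_3\Omega(\xi-\eta)\right),
\end{align*}
and the phase factor~$\exp(\tfrac12\xi^T i\Omega\eta)$, and using the symmetry of~$\Omega^T\Gamma_3\Omega$ to collapse the two cross terms obtained from expanding~$(\xi-\eta)^T\Omega^T\Gamma_3\Omega(\xi-\eta)$, the exponent in the integrand becomes~$-\tfrac12\eta^T M\eta+l^T\eta-\tfrac14\xi^T\Omega^T\Gamma_3\Omega\xi$, where~$M=\tfrac12\Omega^T(\Gamma_2+\Gamma_3)\Omega$ is symmetric (and positive definite by~$\Gamma_j+i\Omega\ge 0$). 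In order to fold the~$\tfrac12\xi^T i\Omega\eta$ contribution into the same~$\Omega^T(\cdot)\Omega$ sandwich as the other~$\xi$-dependent linear term, I would use the identity~$i\Omega=\Omega^T(i\Omega)\Omega$ (which holds because~$\Omega^T\Omega=I$), giving
\begin{align*}
l&=\tfrac12\Omega^T(\Gamma_3-i\Omega)\Omega\xi+i\Omega d_2.
\end{align*}

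Next I would complete the square, obtaining~$-\tfrac12(\eta-M^{-1}l)^T M(\eta-M^{-1}l)+\tfrac12 l^T M^{-1}l$, and evaluate the Gaussian integral as~$(2\pi)^n/\sqrt{\det M}$. The~$(2\pi)^n$ cancels the prefactor in~\cref{eq:chiProp2}, producing the claimed normalization~$1/\sqrt{\det(\Omega^T(\Gamma_2+\Gamma_3)\Omega/2)}$. For the residual exponent I would substitute~$M^{-1}=2\Omega^T(\Gamma_2+\Gamma_3)^{-1}\Omega$ (using~$\Omega^{-1}=\Omega^T$) and expand~$\tfrac12 l^T M^{-1}l$ into four pieces: a~$\xi$-quadratic term, two scalar $\xi$-$d_2$ cross terms that agree because they are transposes of each other (invoking~$(\Gamma_3+i\Omega)^T=\Gamma_3-i\Omega$), and a~$d_2$-quadratic term equal to~$-d_2^T(\Gamma_2+\Gamma_3)^{-1}d_2$.

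Finally I would combine the~$\xi$-quadratic piece of~$\tfrac12 l^T M^{-1}l$ with the constant~$-\tfrac14\xi^T\Omega^T\Gamma_3\Omega\xi$ and read off the claimed coefficient, identifying
\begin{align*}
\Gamma_4&=\Gamma_3-(\Gamma_3+i\Omega)(\Gamma_2+\Gamma_3)^{-1}(\Gamma_3-i\Omega)
\end{align*}
from the bracket. The main obstacle is purely organisational: tracking signs and factors through the algebra while consistently applying the symmetry rules~$\Omega^T=-\Omega$, $\Omega^{-1}=\Omega^T$, and~$(\Gamma_j\pm i\Omega)^T=\Gamma_j\mp i\Omega$. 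Once these are applied uniformly, no conceptual step remains, because~\cref{lem:chiProp2} has reduced the claim to a single scalar Gaussian integral.
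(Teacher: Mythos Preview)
Your proposal is correct and follows essentially the same approach as the paper: apply \cref{lem:chiProp2}, substitute the Gaussian characteristic functions~\eqref{eq:chiGaussianState}, collect the exponent into the form~$-\tfrac12\eta^TM\eta+l^T\eta+\text{const}$ with~$M=\tfrac12\Omega^T(\Gamma_2+\Gamma_3)\Omega$, evaluate the Gaussian integral, and then simplify~$\tfrac12 l^TM^{-1}l$ using~$M^{-1}=2\Omega^T(\Gamma_2+\Gamma_3)^{-1}\Omega$ together with the transpose identities you list. The paper phrases the integral step via the formula~$\int e^{-\tfrac12 x^TMx+y^Tx}\,dx=\sqrt{(2\pi)^{2n}/\det M}\,e^{\tfrac12 y^TM^{-1}y}$ rather than explicitly completing the square, but this is the same computation.
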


\begin{proof} 
We use~\cref{lem:chiProp2} to write 
\begin{align}
    \chi_{\rho(\cov_2, d_2) \rho(\cov_3,0)} \left( \xi \right) 
    &= \frac{1}{(2\pi)^n} \int_{\mathbb{R}^{2n}} d^{2n}\eta \, \chi_{\rho(\cov_2, d_2)}( \eta ) \chi_{\rho(\cov_3,0)}( \xi-\eta ) \exp\left(\frac{1}{2}\xi^T i\Omega \eta  \right) \ .
\end{align}
Applying~\cref{eq:chiGaussianState} gives
\begin{align}
    &\chi_{\rho(\cov_2, d_2) \rho(\cov_3,0)} \left( \xi \right) \\
    &= \frac{1}{(2\pi)^n} \int_{\mathbb{R}^{2n}} d^{2n}\eta \, 
    \exp\left( -\tfrac{1}{4} \eta^T  \Omega^T \cov_2 \Omega  \eta - d_2^T i\Omega \eta  -\tfrac{1}{4} (\xi-\eta)^T  \Omega^T  \cov_3 \Omega (\xi-\eta)  + \tfrac{1}{2}\xi^T i\Omega \eta  \right) \\
    &= \frac{\exp\left( -\tfrac{1}{4} \xi^T \Omega^T   \cov_3 \Omega \xi \right)}{(2\pi)^n}  \int_{\mathbb{R}^{2n}} d^{2n}\eta \, 
    \exp\left( -\tfrac{1}{4} \eta^T \Omega^T (\cov_2+\cov_3) \Omega \eta +\tfrac{1}{2} \xi^T \Omega^T  \cov_3 \Omega  \eta - d_2^T i\Omega \eta + \tfrac{1}{2}\xi^T i\Omega \eta  \right) \\
    \label{eq:chiIntAux1} &= \frac{\exp\left( - \tfrac{1}{4} \xi^T \Omega^T  \cov_3 \Omega \xi \right) }{(2\pi)^n} \int_{\mathbb{R}^{2n}} d^{2n}\eta \, 
    \exp\left( -\tfrac{1}{4} \eta^T \Omega^T (\cov_2+\cov_3) \Omega \eta + \left( \tfrac{1}{2}\xi^T \Omega^T ( \cov_3 + i \Omega) \Omega - d_2^T i\Omega \right) \eta \right) \ .
\end{align}
The following formula for a Gaussian integral will be useful,
\begin{align}
    \label{eq:GaussianIntegral} \int_{\mathbb{R}^{m}} d^{m}x \, \exp\left( - \frac{1}{2} x^T M x + y^T x \right) = \sqrt{\frac{(2\pi)^{m}}{\det M}} \exp\left( \frac{1}{2} y^T M^{-1} y\right) \ ,
\end{align}
where~$M\in\msf{Mat}_{m}(\mathbb{R})$ is a symmetric positive definite matrix, $y\in\mathbb{C}^m$ and~$m\in\bbN$.
Considering this together with
\begin{align}
    \label{eq:M} M&=\Omega^T(\cov_2 + \cov_3)\Omega /2 \ , \\
    \label{eq:y} y&= \frac{1}{2} \Omega^T( \cov_3 - i \Omega) \Omega \xi - i \Omega d_2 \ ,
\end{align}
and~\cref{eq:chiIntAux1} gives
\begin{align}
    &\chi_{\rho(\cov_2, d_2) \rho(\cov_3,0)} \left( \xi \right) =\frac{1}{\sqrt{\det(\Omega^T(\cov_2+\cov_3)\Omega/2)}}
    \exp\left(  - \frac{1}{4} \xi^T \Omega^T \cov_3 \Omega \xi + \frac{1}{2} y^T M^{-1} y\right) \ . 
\end{align}
Replacing~$M$ and~$y$ respectively by~\cref{eq:M,eq:y} and additional algebraic manipulation gives the claim. 
\end{proof}

Inserting~\cref{eq:characWeylOpGaussianState,eq:chiaux2-v2} into~\cref{eq:traux1} gives
\begin{align}
    &\tr\left( D(\alpha) \rho(\cov_1, d_1) \rho(\cov_2, d_2) \rho(\cov_3, 0) \right) \\
    &=   \frac{1}{(2\pi)^n\sqrt{\det(\Omega^T(\cov_2+\cov_3)\Omega/2)}} \\ &\qquad\cdot\int_{\mathbb{R}^{2n}} d^{2n} \xi 
    \exp\left( -\tfrac{1}{4} (\xi-\hat{d}(\alpha))^T \Omega^T \cov_1 \Omega (\xi-\hat{d}(\alpha)) + d_1^T i \Omega (\xi-\hat{d}(\alpha)) - \tfrac{1}{2}\hat{d}(\alpha)^T i\Omega \xi  \right)   \\
    &\qquad\qquad\qquad \,\,\,\,\exp\left( - \tfrac{1}{4} \xi^T \Omega^T \cov_4 \Omega \xi 
    - {d_2}^T (\cov_2 + \cov_3)^{-1} d_2
    + i {d_2}^T (\cov_2 + \cov_3)^{-1} ( \cov_3 - i \Omega) \Omega \xi \right)   \\
    &=  \frac{ \exp\left(  - {d_2}^T (\cov_2+\cov_3)^{-1} d_2 -\tfrac{1}{4}\hat{d}(\alpha)^T  \Omega^T  \cov_1 \Omega \hat{d}(\alpha) - i {d_1}^T \Omega \hat{d}(\alpha) \right) }{(2\pi)^n \sqrt{\det(\Omega^T(\cov_2+\cov_3)\Omega/2)}} \int_{\mathbb{R}^{2n}} d^{2n} \xi \, \exp\left( -\tfrac{1}{2} \xi^T M \xi + y^T \xi \right)
\end{align}
where
\begin{align}
    \label{eq:M2-v2} M &= \Omega^T (\cov_1 + \cov_4) \Omega /2 \ , \\
    \label{eq:y2-v2}y &= \Omega^T \left( i  (\cov_3+i\Omega) (\cov_2+\cov_3)^{-1} d_2 - i d_1 - \frac{1}{2} (\cov_1 + i\Omega) \Omega \hat{d}(\alpha)  \right) \ .
\end{align}
Then, applying~\cref{eq:GaussianIntegral} gives
\begin{align}
    &\tr\left( D(\alpha) \rho(\cov_1, d_1) \rho(\cov_2, d_2) \rho(\cov_3, 0) \right) \\
    &\qquad= \frac{ \exp\left(   - {d_2}^T (\cov_2+\cov_3)^{-1} d_2 -\frac{1}{4}\hat{d}(\alpha)^T  \Omega^T  \cov_1 \Omega \hat{d}(\alpha) - i {d_1}^T \Omega \hat{d}(\alpha)  +   \frac{1}{2} y^T M^{-1} y\right) }{\sqrt{\det(\Omega^T(\cov_2+\cov_3)\Omega/2) \det(\Omega^T(\cov_1+\cov_4)\Omega/2)}} 
    \ .
\end{align}
Replacing~$M$ and~$y$ respectively by~\cref{eq:M2-v2,eq:y2-v2} and additional algebraic manipulation gives
\begin{align}
\label{eq:trace3statesWWeylPsiaux}
    &\tr\left( D(\alpha) \rho(\cov_1, d_1) \rho(\cov_2, d_2) \rho(\cov_3, 0) \right) = \\
    &\frac{\exp\left( - {d_1 }^T W_1 {d_1 } - {d_2 }^T W_2 {d_2 } + d_1^T W_3 d_2 - \hat{d}(\alpha)^T \Omega^T \cov_5 \Omega \hat{d}(\alpha) - i\hat{d}(\alpha)^T \Omega^T  W_4 d_1  - i\hat{d}(\alpha)^T \Omega^T W_5 d_2  \right)}{\sqrt{\det(\Omega^T(\cov_2 +\cov_3 )\Omega/2) \det(\Omega^T(\cov_1 +\cov_4 )\Omega/2)}}
\end{align}
with~$\Gamma_5, W_1,W_2,W_3,W_4,W_5$ defined in \cref{eq:Gamma5,eq:W1,eq:W2,eq:W3,eq:W4,eq:W5}.

We use
\begin{align}
    D(\beta) D(\gamma) = \exp\left( - \hat{d}(\beta) i \Omega \hat{d}(\gamma)  \right) D(\gamma) D(\beta)
\end{align}
to show
\begin{align}
    &\tr\left( D(\alpha) \rho(\cov_1, d_1) \rho(\cov_2, d_2) \rho(\cov_3, d_3) \right) \\
    &\qquad\qquad=
    \exp\left(-\hat{d}(\alpha)^T i \Omega d_3 \right) \tr\left( D(\alpha) \rho(\cov_1, d_1-d_3) \rho(\cov_2, d_2-d_3) \rho(\cov_3, 0) \right)
\end{align}
which together with~\cref{eq:trace3statesWWeylPsiaux} gives the claim in~\cref{prop:trace3statesWWeylPsi}.

\section{Energy-limit on heterodyne measurement\label{app:momentlimitedheterodynemeasurement}}

Here we show the identity
\begin{align}
\tr(H\cE(\rho))=\tr(H\rho)+2
\end{align}
for the 
channel~$\cE(\rho)=\frac{1}{\pi}\int_{\mathbb{C}}
    \proj{\alpha} \rho \proj{\alpha } d\alpha$, 
see Eq.~\eqref{eq:heterodynemeasurementenergy}.
Here and below we use the following consequence of
\begin{align}
\frac{1}{\pi}\int_{\mathbb{C}} \proj{\alpha} d\alpha&=I\ .
\end{align}
We have 
\begin{align}
a a^{\dagger} &=\frac{1}{\pi}\int_{\mathbb{C}} |\alpha|^2\proj{\alpha} d\alpha\ . \label{eq:numberstaterep}
\end{align}

\begin{proof}
Let~$\varphi,\psi$ be arbitrary states. Then 
\begin{align}
\left\langle\varphi, a a^{\dagger} \psi\right\rangle & =\left\langle a^{\dagger} \varphi, a^{\dagger} \psi\right\rangle \\
& =\frac{1}{\pi} \int_{\mathbb{C}} d \alpha\left\langle a^{\dagger} \varphi, \alpha\right\rangle\left\langle\alpha, a^{\dagger} \psi\right\rangle \\
& =\frac{1}{\pi} \int_{\mathbb{C}} d \alpha\left\langle\varphi, a \alpha\right\rangle\left\langle a \alpha, \psi\right\rangle \\
& =\frac{1}{\pi} \int_{\mathbb{C}} d \alpha\left|\alpha\right|^2\langle\varphi, \alpha\rangle\langle\alpha, \psi\rangle\ .
\end{align}
\end{proof}
Consider the channel
\begin{align}
\cE(\rho)&=\frac{1}{\pi}\int_{\mathbb{C}}
\proj{\alpha}\rho\proj{\alpha}
d\alpha\ .
\end{align}
Then 
\begin{align}
\cE^\dagger(a^\dagger a)&=\cE(a^\dagger a) \\
&=\frac{1}{\pi}\int_{\mathbb{C}}
\proj{\alpha}a^\dagger a\proj{\alpha}
d\alpha\\
&=\frac{1}{\pi}
\int_{\mathbb{C}} |\alpha|^2 \proj{\alpha}d\alpha\\
&=aa^\dagger\ 
\end{align}
by Eq.~\eqref{eq:numberstaterep} and
\begin{align}
   \cE(a a^\dagger) 
   &= \cE(a^\dagger a + I) \\
   &= a a^\dagger + I \ .
\end{align}
Thus, for~$H= 2(a^\dagger a + I) = 2 a a^\dagger$ we have obtain
\begin{align}
\cE(H) &= 2 \cE(a a^\dagger) \\
&= 2 (a a^\dagger + I) \\
&= H + 2I \ .
\end{align}
This gives the claim~\eqref{eq:heterodynemeasurementenergy}.
Thus~$\cE$ is energy-limited.

\section{On the variance of  post-measurement states\label{sec:appendixvariancepostmeasure} }
We need a bound on the variance
\begin{align}
&\Var_{\Pi_\alpha\Psi/\|\Pi_\alpha\Psi\|}\\
&\quad:=\sum_{j=1}^n\left(\Var_{\Pi_\alpha\Psi/\|\Pi_\alpha\Psi\|}(Q_j)+\Var_{\Pi_\alpha\Psi/\|\Pi_\alpha\Psi\|}(P_j)\right)
\end{align}
of the post-measurement state~$\Pi_\alpha\Psi/\|\Pi_\alpha\Psi\|$.
Here we argue that information on the first two moments of~$\Psi$ is insufficient to bound this.

Let~$r\in\mathbb{R}$, $p\in [0,1]$ and~$z\geq 1$. We consider the family of states
\begin{align}
\ket{\Psi_{p,r,z}}&=\sqrt{1-p}\ket{0}\ket{0}+i\sqrt{p}\ket{r}S(z)\ket{0}\ \label{eq:psiprz}
\end{align}
where~$\ket{0}$ is the vacuum state and~$\ket{r}$ is the coherent state.
Then we have the following:
\begin{lemma}
Let~$z>1$ be arbitrary. For~$p\in (0,1)$, define~$\Psi(p)=\Psi_{p,1/p,z}$.
Then 
\begin{enumerate}[(i)]
\item\label{it:itemfirstconvergenceexp}
The covariance matrix~$\Gamma(\Psi(p))$ and displacement~$d(\Psi(p))$ of~$\Psi(p)$ satisfy
\begin{align}
    \begin{gathered}
    \begin{aligned}
        \Gamma(\Psi(p))&\ \rightarrow\   I_{4\times 4}\\
        d(\Psi(p))&\ \rightarrow\   (0,0,0,0)
    \end{aligned}
    \end{gathered}
\end{align}
for~$p\rightarrow 0$.
\item\label{it:claimtwoprojectionnonconvergence}
The normalized post-measurement state~$\Psi'(p):=\Pi_{1/p}\Psi(p)/\|\Pi_{1/p}\Psi(p)\|$ for measurement outcome~$1/p\in\mathbb{C}$ satisfies
\begin{align}
    \begin{gathered}
    \begin{aligned}
        \Var_{\Psi'(p)}&\ \rightarrow\  1 + \cosh(2z)\\
        \langle \Psi'(p),H\Psi'(p)\rangle&\ \rightarrow\  \infty
    \end{aligned}
    \end{gathered}
\end{align}
for~$p\rightarrow 0$.
\end{enumerate}
\end{lemma}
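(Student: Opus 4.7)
My plan is to expand $|\Psi_{p,r,z}\rangle$ as a sum of two Gaussian states, $|\Phi_1\rangle := |0\rangle|0\rangle$ and $|\Phi_2\rangle := |r\rangle \otimes S(z)|0\rangle$, so that $|\Psi_{p,r,z}\rangle = \sqrt{1-p}\,|\Phi_1\rangle + i\sqrt{p}\,|\Phi_2\rangle$. The crucial observation is that the overlap $\langle\Phi_1,\Phi_2\rangle = \langle 0,r\rangle\langle 0,S(z)0\rangle = e^{-r^2/2}/\sqrt{\cosh z}$ decays super-exponentially in $r$; more generally, for any fixed polynomial $P$ in the creation and annihilation operators, $\langle\Phi_1,P\,\Phi_2\rangle = Q(r)\,e^{-r^2/2}/\sqrt{\cosh z}$ for some polynomial $Q$ determined by $P$, via the explicit Fock-basis expansion~\eqref{eq:coherent-number-basis} of $|r\rangle$ and the analogous expansion of $S(z)|0\rangle$. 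With the substitution $r=1/p$, every such cross matrix element is $O(\mathrm{poly}(1/p)\cdot e^{-1/(2p^2)})$ and vanishes faster than any polynomial as $p\to 0$.

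For item~\eqref{it:itemfirstconvergenceexp}, I would use this decomposition to evaluate the first moments $d(\Psi(p))_j = \langle\Psi(p),R_j\,\Psi(p)\rangle$ and the symmetrized second moments entering $\Gamma(\Psi(p))_{j,k}$, as
\begin{align}
(1-p)\,\langle\Phi_1,\,\cdot\,\Phi_1\rangle \;+\; p\,\langle\Phi_2,\,\cdot\,\Phi_2\rangle \;+\; 2\sqrt{p(1-p)}\,\mathrm{Re}\bigl(i\langle\Phi_1,\,\cdot\,\Phi_2\rangle\bigr)\ ,
\end{align}
with the cross term controlled by the previous paragraph. The remaining computation reduces to standard moment formulas for the vacuum $|0\rangle|0\rangle$ and for the Gaussian state $|r\rangle\otimes S(z)|0\rangle$, whose covariance and displacement are given by the symplectic action recorded in~\cref{tab:gaussian-unitaries} together with~\cref{eq:dispcoherent}. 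Collecting the contributions with $r=1/p$ and taking $p\to 0$ will yield the stated limits for $\Gamma(\Psi(p))$ and $d(\Psi(p))$.

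For item~\eqref{it:claimtwoprojectionnonconvergence}, the plan is to compute the unnormalized projected state directly:
\begin{align}
\Pi_{1/p}\Psi(p) = \sqrt{1-p}\,\langle 1/p,0\rangle\,|1/p\rangle|0\rangle \;+\; i\sqrt{p}\,|1/p\rangle\otimes S(z)|0\rangle\ .
\end{align}
Since $\langle 1/p,0\rangle = e^{-1/(2p^2)}$ is super-exponentially small, the first term is negligible and $\|\Pi_{1/p}\Psi(p)\|^2 = p + O(e^{-1/p^2})$. Consequently the normalized post-measurement state converges (in every moment) to $i|1/p\rangle\otimes S(z)|0\rangle$, which is an exact product of a coherent and a squeezed-vacuum state. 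Using~\cref{eq:covmat-standard-form} with $Z=\mathrm{diag}(e^{-z},e^z)$ on the second mode, the variance contribution on modes $(Q_1,P_1)$ is $\tfrac{1}{2}+\tfrac{1}{2}=1$ and on $(Q_2,P_2)$ is $\tfrac{1}{2}(e^{-2z}+e^{2z})=\cosh(2z)$, giving $\mathrm{Var}_{\Psi'(p)}\to 1+\cosh(2z)$. The energy is then obtained via~\cref{eq:energy}: the displacement contribution from mode~$1$ is $\|d\|^2 = 2|1/p|^2 = 2/p^2$, so $\langle\Psi'(p),H\Psi'(p)\rangle = 2/p^2 + O(1)\to\infty$.

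The main technical obstacle is bookkeeping the super-exponentially small cross contributions and verifying that polynomial prefactors $\mathrm{poly}(1/p)$ arising both from the scaling $r=1/p$ and from Fock-basis matrix elements of $R_j, R_jR_k$ cannot overcome the Gaussian factor $e^{-1/(2p^2)}$; this is immediate from the fact that $\mathrm{poly}(1/p)\cdot e^{-1/(2p^2)}\to 0$. Beyond this, every step reduces to standard Gaussian-state calculations.
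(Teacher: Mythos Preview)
Your strategy mirrors the paper's: expand $\Psi$ into the two Gaussian terms $|\Phi_1\rangle,|\Phi_2\rangle$, compute moments bilinearly, control the cross contribution, and take $p\to 0$. The only stylistic difference is that you suppress cross terms via the Gaussian factor $e^{-1/(2p^2)}$, whereas the paper exploits that the relevant matrix elements are real so that $\Re(i\cdot\text{real})=0$; for part~(ii) your direct limiting argument (the $\sqrt{1-p}\,e^{-1/(2p^2)}$ piece is negligible, so $\Psi'(p)$ is essentially $|1/p\rangle\otimes S(z)|0\rangle$) is slightly quicker than the paper's explicit $\alpha(p,r),\beta(p,r)$ convex-combination formula, but leads to the same numbers.

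There is, however, a genuine gap in your plan for part~(i), and the paper's own proof shares it. You control the \emph{cross} term carefully but never check that the \emph{diagonal} contribution $p\,\langle\Phi_2,\,\cdot\,\Phi_2\rangle$ from $|\Phi_2\rangle=|r\rangle\otimes S(z)|0\rangle$ vanishes in the limit. With $r=1/p$ it does not: $p\,\langle r|Q_1|r\rangle = p\cdot\sqrt{2}\,r = \sqrt{2}$, so $d_1(\Psi(p))\to\sqrt{2}\neq 0$, and $p\,\langle r|Q_1^2|r\rangle = p\,(1/2+2r^2)=p/2+2/p\to\infty$, so the $(Q_1,Q_1)$ second moment diverges. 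Your sentence ``Collecting the contributions with $r=1/p$ and taking $p\to 0$ will yield the stated limits'' therefore fails when you actually carry it out for mode~1. The qualitative message of the appendix (post-measurement energy can blow up even though the pre-measurement moments are controlled) is unaffected, and part~(ii) is fine; but the precise convergence claims in part~(i) would need a different scaling of $r$ with $p$ (e.g.\ one for which $pr$ and $pr^2$ both vanish), or a restriction to the second-mode moments, to go through as written.
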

Since~$z>1$ was arbitrary, this shows that the variance and energy (mean photon number) of the post-measurement state can be arbitrarily large even though the first two moments are bounded.

\begin{proof}
Consider the state~$\Psi_{p,r,z}$ defined in~\eqref{eq:psiprz}. For brevity, let us write
\begin{align}
\ket{\Psi}&=\sqrt{1-p}\ket{0}\ket{\Psi_0}+\sqrt{p}\ket{r}\ket{\Psi_1}\ .
\end{align}
Here~$\Psi_0$ and~$\Psi_1$ 
are defined by
\begin{align}
\ket{\Psi_0}&=\ket{0}\\
\ket{\Psi_1}&=i S(z)\ket{0}\ .
\end{align}
That is, they are  normalized Gaussian states with relative phases such that 
\begin{align}
\Re \langle \Psi_0,\Psi_1\rangle &=0\ .\label{eq:relativephasesgzerozgone}
\end{align}
Using~\cref{eq:relativephasesgzerozgone} and the fact that~$\langle 0,r\rangle=e^{-r^2/2}\in\mathbb{R}$,  it is straightforward to check that~$\Psi$ is normalized for any~$p\in [0,1]$, $r\in\mathbb{R}$ and~$z\geq 1$.

For~$R=(Q_1,P_1,Q_2,P_2)$ and~$j\in [4]$ we have 
\begin{align}
\langle \Psi, R_j\Psi\rangle &=(1-p) \left(\bra{0}\bra{\Psi_0}\right)R_j\left(\ket{0}\ket{\Psi_0}\right)\\
&\qquad+p\left(\bra{1}\bra{\Psi_1}\right)R_j\left(\ket{1}\ket{\Psi_1}\right) \\
&\qquad+2\sqrt{p(1-p)}\Re\left(\left(\bra{0}\bra{\Psi_0}\right)R_j\left(\ket{1}\ket{\Psi_1}\right)\right)\\
&\rightarrow (\bra{0}\bra{\Psi_0})R_j(\ket{0}\ket{\Psi_0})\quad\textrm{ for }\quad p\rightarrow 0\ .
\end{align}
We similarly have, for any~$j,k\in [4]$, that 
\begin{align}
\langle &\Psi,\left(R_jR_k+R_kR_j\right)\Psi\rangle\\
&\qquad\rightarrow (\bra{0}\bra{\Psi_0}) (R_jR_k+R_kR_j)(\ket{0}\ket{\Psi_0})
\end{align}
for~$r\rightarrow 0$.
Using the definition~$\ket{\Psi_0}=\ket{0}$
the claim~\eqref{it:itemfirstconvergenceexp} follows by considering the covariance matrix and displacement of the state~$\ket{0}\ket{0}$. 

We have
\begin{align}
\Pi_r\ket{\Psi}&=\ket{r}(\sqrt{1-p}e^{-r^2/2}\ket{\Psi_0}+\sqrt{p}\ket{\Psi_1})\ .
\end{align}
Again using~\eqref{eq:relativephasesgzerozgone}, we obtain
\begin{align}
\|\Pi_r\ket{\Psi}\|^2&=(1-p)e^{-r^2}+p\ .
\end{align}
This means that the normalized post-measurement state is
\begin{align}
\Psi':=\Pi_r\Psi/\|\Pi_r\Psi\|&=\ket{r}
\left(\alpha(p,r)\ket{0}+i\beta(p,r)S(z)\ket{0}\right) \label{eq:projectedprpsi}
\end{align}
where
\begin{align}
\alpha(p,r)&:=
\frac{\sqrt{1-p}e^{-r^2/2}}{\sqrt{(1-p)e^{-r^2}+p}} \ , \\
\beta(p,r)&:=\frac{\sqrt{p}}{\sqrt{(1-p)e^{-r^2}+p}}\ .
\end{align}
Using that both~$\alpha(p,r)$ and~$\beta(p,r)$ are real, we have as before
\begin{align}
&\langle \Psi',R_j\Psi'\rangle \\
&= \alpha(p,r)^2 \left(\bra{r}\bra{0}\right)R_j\left(\ket{r}\ket{0}\right)\\
&\qquad+\beta(p,r)^2\left(\bra{r}\bra{S(z)0}\right)R_j\left(\ket{r}\ket{S(z)0}\right)\ ,
\end{align}
and similarly for the second moments. Thus the covariance matrix displacement of~$\Psi'$ are convex combinations of~$\ket{r}\ket{0}$ and~$\ket{r}S(z)\ket{0}$, respectively, i.e., 
\begin{align}
\Gamma(\Psi')&=\alpha(p,r)^2 I_{4\times 4}+\beta(p,r)^2 I_{2\times 2}\oplus \mathsf{diag}(e^{-2z},e^{2z}) \ , \\
d(\Psi')&=\alpha(p,r)^2 (\sqrt{2}r,0,0,0)+\beta(p,r)^2(\sqrt{2}r,0,0,0)\ ,
\end{align}
where we used that~$\alpha(p,r)^2+\beta(p,r)^2=1$.
It follows that
\begin{align}
\Var_{\Psi'}&=\tr(\Gamma(\Psi')) / 2 \\
&=2\alpha(p,r)^2+\beta(p,r)^2 (1+\cosh(2z))\ .
\end{align}
We also have 
\begin{align}
\langle \Psi',H\Psi'\rangle &= 2\alpha(p,r)^2+\beta(p,r)^2 (1+\cosh(2z))+2r^2 + 2
\end{align}
because
\begin{align}
    &\langle \Psi',H\Psi'\rangle = \sum_{j=1}^2 \langle \Psi' , (Q_j^2+P_j^2 + I) \Psi' \rangle\\
    &= \sum_{j=1}^2 \Big(\langle \Psi' , ((Q_j-d_{Q_j}I)^2+(P_j-d_{P_j}I)^2 + I) , \Psi'\rangle \\
    &\qquad\qquad\qquad +d_{Q_j}^2+d_{P_j}^2 \Big)\\
    &=\Var_\Psi'+ 2 \|\Psi'\|^2+ \sum_{j=1}^2(d_{Q_j}^2+d_{P_j}^2) \ .
\end{align}
It is straightforward to check that~$\lim_{p\rightarrow 0}\beta(p,1/p)=1$, hence the claim follows. 
\end{proof}

\bibliography{sample}
\bibliographystyle{unsrturl}

\end{document}